\newtheorem{theorem}{Theorem}
\newtheorem{lemma}[theorem]{Lemma}
\theoremstyle{definition}
\newtheorem{definition}[theorem]{Definition}
\newcommand\Vol[0]{\operatorname{Vol}} 
\newcommand\polylog[0]{\ensuremath{\operatorname{polylog}}} 
\newcommand\clique{{\sc Congested Clique}\xspace}
\newcommand\congest{{\sc Congest}\xspace}
\newcommand\rem[0]{^{\textsf{r}}} 
\newcommand\suin[0]{_{\operatorname{in}}} 
\newcommand\suout[0]{_{\operatorname{out}}} 
\newcommand\suhi[0]{_{\operatorname{hi}}} 
\newcommand\vclusterof[1]{V_{#1}} 
\newcommand\vinnerof[1]{V_{#1}^\circ} 
\newcommand\vlistof[1]{V_{#1}^{-}} 
\newcommand\vgeqavgof[1]{V_{#1}^{*}} 
\newcommand\vhdof[1]{\vgeqavgof{#1}} 
\newcommand\vldof[1]{V_{#1}^{\operatorname{L}}} 
\newcommand\vlistldof[1]{V_{#1}^{<}} 
\newcommand\vcluster[0]{{\vclusterof{C}}}
\newcommand\vinner[0]{{\vinnerof{C}}}
\newcommand\vlist[0]{{\vlistof{C}}}
\newcommand\vgeqavg[0]{{\vgeqavgof{C}}}
\newcommand\vhd[0]{{\vhdof{C}}}
\newcommand\vld[0]{{\vldof{C}}}
\newcommand\vlistld[0]{\vlistldof{C}}
\newcommand\comdeg[1]{\ensuremath{\deg_C(#1)}} 
\newcommand\sentdeg[1]{\ensuremath{\deg^*_C(#1)}} 
\newcommand\ebar[0]{\ensuremath{\bar E}}
\newcommand\ep[0]{\ensuremath{E'}}
\newcommand\chainmem[3]{\ensuremath{\vchain{#1}#3[#2]}}
\newcommand\chainass[3]{\ensuremath{f_{\vchain{#1}#3}(#2)}}
\newcommand\chainres[3]{\ensuremath{f_{\vchain{#1}#3}^{-1}(#2)}}
\newcommand\Sim[0]{X} 
\newcommand\readop[0]{{\tt READ}\xspace} 
\newcommand\writeop[0]{{\tt WRITE}\xspace} 
\newcommand\auxop[0]{{\tt GET-AUX}\xspace} 
\newcommand\nin[0]{\ensuremath{N\suin}\xspace}
\newcommand\nout[0]{\ensuremath{N\suout}\xspace}
\newcommand\paramq[0]{\ensuremath{B_{\operatorname{aux}}}\xspace}
\newcommand\paramy[0]{\ensuremath{B_{\operatorname{write}}}\xspace}
\newcommand\naux[0]{\ensuremath{\ell}\xspace}
\newcommand\auxt[0]{\ensuremath{\alpha}\xspace}
\newcommand\maintok[0]{main token\xspace}
\newcommand\auxtok[0]{auxiliary token\xspace}
\newcommand\consdeg[0]{{\tt DEG}\xspace} 
\newcommand\consupdeg[0]{{\tt UP\_DEG}\xspace} 
\newcommand\conssz[0]{{\tt SIZE}\xspace} 
\newcommand\consdegbb[0]{{\tt DEG\_2to2}\xspace} 
\newcommand\consupdegbb[0]{{\tt UP\_DEG\_{}2to2}\xspace} 
\newcommand\consdegba[0]{{\tt DEG\_2to1}\xspace} 
\newcommand\consdegaa[0]{{\tt DEG\_1to1}\xspace} 
\newcommand\consupdegaa[0]{{\tt UP\_DEG\_1to1}\xspace} 
\newcommand\consupdegab[0]{{\tt UP\_DEG\_1to2}\xspace} 
\newcommand\specialalgo{partial-pass streaming algorithm\xspace}
\newcommand\Specialalgo{Partial-pass streaming algorithm\xspace}
\newcommand\SpecialAlgo{Partial-Pass Streaming Algorithm\xspace}
\newcommand\specialalgos{\specialalgo{}s\xspace}
\newcommand\Specialalgos{\Specialalgo{}s\xspace}
\newcommand\SpecialAlgos{\SpecialAlgo{}s\xspace}
\newcommand\conscont{input contiguity}
\newcommand\vchain[1]{\ensuremath{\mathcal{#1}}}
\newcommand\algo[1]{{\tt #1}}
\newcommand\rc[0]{\ensuremath{+ o(1)}}
\newcommand\rct[0]{\ensuremath{n^{o(1)}}}
\newcommand\rcf[0]{\ensuremath{\cdot \rct}}
\newcommand\genericCluster[0]{\ensuremath{C}}
\newcommand\genericClusterNodes[0]{\ensuremath{V_\genericCluster}}
\newcommand\genericClusterEdges[0]{\ensuremath{E_\genericCluster}}
\begin{abstract}
    The importance of classifying connections in large graphs has been
    the motivation for a rich line of work on distributed subgraph
    finding that has led to exciting recent breakthroughs. A crucial
    aspect that remained open was whether deterministic algorithms can
    be as efficient as their randomized counterparts, where the latter
    are known to be tight up to polylogarithmic factors.

    We give deterministic distributed algorithms for listing cliques
    of size $p$ in $n^{1 - 2/p + o(1)}$ rounds in the \congest model.
    For triangles, our $n^{1/3+o(1)}$ round complexity improves upon
    the previous state of the art of $n^{2/3+o(1)}$ rounds [Chang and
    Saranurak, FOCS 2020]. For cliques of size $p \geq 4$, ours are
    the first non-trivial deterministic distributed algorithms. Given known lower bounds, for all values $p \geq 3$ our
    algorithms are tight up to a $n^{o(1)}$ subpolynomial factor, which comes
    from the deterministic routing procedure we use.
\end{abstract}
\begin{document}
\title{Deterministic Near-Optimal Distributed Listing of Cliques}
\author{Keren Censor-Hillel}
 \email{ckeren@cs.technion.ac.il}
 \affiliation{\institution{Technion}\country{Israel}}
 \author{Dean Leitersdorf}
 \email{dean.leitersdorf@gmail.com}
 \affiliation{\institution{Technion}\country{Israel}}
 \author{David Vulakh}
 \email{dvulakh@mit.edu}
 \affiliation{\institution{Massachusetts Institute of Technology}\country{USA}}
\date{}
\maketitle

\section{Introduction}%
\label{sec:intro}

Subgraph listing is a crucial component of graph algorithms in many
computation settings, as it characterizes the connections in a graph.
In distributed computing, further motivation for finding small
subgraphs is that some algorithms run faster on networks without certain
subgraphs: \cite{TriangleFreeLocal1,TriangleFreeLocal2} show fast
algorithms for coloring and finding large cuts, respectively, in
triangle-free graphs, and \cite{CFG+21} quickly computes exact girth
in graphs without small cycles.

In this paper, we focus on clique listing in the \congest model of
distributed computing, in which $n$ vertices in a synchronous network
are required to find all cliques in the network graph by exchanging
messages of $O(\log n)$ bits.

A major open question about the \congest model is whether
randomization is necessary for this problem. A recent groundbreaking
line of work~\cite{ILG17,CPZ19,CS19,EFF+19,CHGL20,CCGL20} culminated
in clique-listing algorithms whose complexities are within
polylogarithmic factors of the lower bounds presented
by~\cite{PRS16,ILG17,FGKO18}. While these are all randomized
algorithms, powerful recent work by Chang and Saranurak~\cite{CS20}
showed the first non-trivial deterministic algorithm for the case of
triangle listing. Still, it exceeds the lower bound by a polynomial
number of rounds, leaving the following major question open (see also
Open Problem 2.2 in a recent survey~\cite{C21}):
\begin{tcolorbox}
    What is the complexity of deterministic clique listing in the
    \congest model, and is there a separation between randomized and
    deterministic algorithms?
\end{tcolorbox}

We present an algorithm answering this question for all $p$-cliques
(denoted $K_p$). Our algorithm finishes in $n^{1 - 2/p + o(1)}$
rounds, within $n^{o(1)}$ rounds of the $\tilde{\Omega}(n^{1 - 2/p})$
lower bound of~\cite{PRS16,ILG17,FGKO18}, and for listing cliques
larger than triangles is the first non-trivial deterministic
algorithm. This complexity matches the optimal randomized results of
\cite{CPSZ,CCGL20}, up to the additional $n^{o(1)}$ factor. The
additional $n^{o(1)}$ factor is due to the deterministic expander
routing scheme of~\cite{CS20}, and might be improved upon given future
progress in routing.

\begin{theorem} 
\label{thm:result}
    Given a constant $p \geq 3$ and a graph $G = (V, E)$ with
    $n = |V|$ vertices, there exists a deterministic \congest
    algorithm that completes in $n^{1 - 2/p \rc}$ rounds and lists all
    instances of $K_p$ in $G$.
\end{theorem}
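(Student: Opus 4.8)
The plan is to derandomize and tighten the optimal randomized approach to $K_p$-listing by isolating its communication pattern into a model-independent, streaming-style subroutine, solving that subroutine with the right bucketing so that its cost is $n^{1-2/p\rc}$, and simulating it in \congest through a deterministic expander-decomposition-and-routing scheme, so that the only loss from optimality is the $\rct$ overhead inherent to deterministic expander routing. Concretely I would formalize the subroutine as a \specialalgo: each vertex $v$ is handed the edges incident to it as a stream, makes a bounded number $T$ of passes, and per pass may \readop (or \skipop) the next \maintok, \writeop at most \paramy tokens destined to vertices it names, and \auxop at most \paramq auxiliary tokens; one then proves a \emph{simulation lemma}, stating that any such algorithm runs on an arbitrary $n$-vertex graph in $T\rcf$ rounds of \congest. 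The engine of the simulation lemma is deterministic expander machinery: compute a deterministic expander decomposition of the edge set into clusters of conductance $1/\rct$ together with a remainder \ebar\ of at most half the edges; inside each cluster realize one pass together with its token deliveries via the deterministic expander routing of~\cite{CS20}, at cost \rct\ per pass; and recurse on the residual structure built from \ebar. Since \ebar\ shrinks geometrically, the recursion has depth $O(\log n)$, and because every level contributes the same $\rct$ factor per pass (the routing cost depends only on the number of vertices, which never grows) the total is $T\rcf$, the $O(\log n)$ levels being absorbed into $\rct$. The subtle point — and, I expect, the main obstacle — is that a copy of $K_p$ may have its $\binom p2$ edges spread over several clusters and several recursion levels, so that no single cluster sees it in full and a naive ``recurse on \ebar\ alone'' step would miss it; handling this requires the recursive subproblem to retain enough of $G$ (I would keep at every vertex its entire $G$-neighborhood, together with the \specialalgo's current state, throughout the recursion) and a clean assignment of ``responsibility'' for each copy to exactly one cluster at one level, with a proof that these responsibilities partition all copies.

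For the \specialalgo itself I would deterministically partition $V$ into $k = n^{1/p\rc}$ buckets of size about $n^{1-1/p}$; then every copy of $K_p$ lives inside one of $O(k^p)=O(n)$ bucket-tuples, and I would assign these tuples to ``handler'' vertices so that every tuple has at least one handler and no vertex is assigned so many tokens, in any one pass, that it exceeds its budget, which is proportional to $\deg(v)$. Because the number of edges a tuple induces can be badly skewed, this assignment must itself be a deterministic load-balancing step — bucket the tuples by induced-edge count and give a tuple $\lceil (\text{edges in it})/(\text{budget}) \rceil$ handlers, spreading its writes across the $T$ passes — rather than a random one. Each vertex then writes every incident edge to the handlers of the tuples it participates in (using auxiliary tokens to look up who those handlers are), and each handler, now holding the entire induced subgraph on its tuple, lists the copies of $K_p$ in it with no further communication. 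The accounting is the classical one: each of the $O(k^p)$ tuples induces $O(|E|/k^2)$ edges on average, so $O(k^{p-2}|E|)$ tokens are written in all; one pass moves $\Theta(|E|)$ tokens, hence $T=O(k^{p-2})=n^{1-2/p\rc}$ passes suffice, and the load-balancing guarantees every vertex stays within its per-pass budget, exactly as the simulation lemma demands.

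Combining the two pieces proves \cref{thm:result}: the \specialalgo of the previous paragraph uses $n^{1-2/p\rc}$ passes, each simulated in \rct\ rounds, for $n^{1-2/p\rc}$ rounds of \congest in total, and it reports every copy of $K_p$. What remains is routine once the framework is in place: reducing general $K_p$-listing to the ``colorful'' bucketed version deterministically via a perfect hash family over $[p]$ of size $O(\log n)$ for constant $p$; handling clusters or buckets too small to host their allotted handlers directly, since such pieces carry correspondingly few edges; and bounding the vertex-congestion of the decomposition so that the clusters at a given recursion level run concurrently within an $\rct$ factor. The one genuinely hard ingredient, as noted, is making the cross-cluster and cross-recursion correctness argument go through with deterministic primitives only — pinning down precisely what each expander cluster is accountable for so that, across all clusters and all $O(\log n)$ recursion levels, every copy of $K_p$ is listed exactly once.
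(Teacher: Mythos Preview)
Your proposal has a genuine gap in the load-balancing step. You fix a single deterministic partition of $V$ into $k = n^{1/p}$ buckets, acknowledge that the edge count induced by a $p$-tuple of buckets can be badly skewed, and propose to repair this by giving a heavy tuple $\lceil E'/B\rceil$ handlers and ``spreading its writes across the $T$ passes.'' But to list the $K_p$'s inside a tuple, at least one handler must learn \emph{all} $E'$ edges that tuple induces --- you cannot split a tuple's clique-listing task among handlers that each see only part of its induced subgraph. If a single tuple carries $E' = \Theta(|E|)$ edges (which an adversary can arrange for any fixed bucketing), then some handler $v$ must receive $\Theta(|E|)$ messages, costing $\Theta(|E|/\comdeg v)$ rounds no matter how you schedule the passes; adding more handlers, each of which still needs the whole subgraph, only multiplies the traffic. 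Your accounting ``$O(k^{p-2}|E|)$ total writes, $\Theta(|E|)$ per pass, hence $O(k^{p-2})$ passes'' is a global-bandwidth argument that is valid only once every individual tuple is close to average --- precisely the balanced-partition property randomness gave for free and which you have not secured deterministically.

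The paper's route around this is what you are missing: rather than a single partition, it constructs a \emph{partition tree} (Section~\ref{sec:partitionTrees}) --- a depth-$p$ tree of partitions where, at each node, the new partition is chosen greedily so that every part has balanced edge counts into the already-fixed ancestor parts. This guarantees (Definitions~\ref{def:htree-congest} and~\ref{def:split-tree}) that the $p$ ancestor parts of every leaf together induce $O(|E|/k^2 + n)$ edges, for every leaf and not merely on average. The \specialalgo framework in the paper is not used to run the listing itself; it is used to \emph{construct} this partition tree inside a cluster whose per-vertex bandwidth may be too small for any single vertex even to learn the cluster's degree sequence (which is the input to the greedy construction). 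Your proposal conflates these two roles and, by omitting the partition-tree construction, loses the per-tuple balance that makes the final routing fit in $n^{1-2/p\rc}$ rounds.
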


Our key technical contribution which enables showing
Theorem~\ref{thm:result} is a simulation in \congest of a class
of algorithms we call \emph{\specialalgos}. We
use these to efficiently simulate deterministic load balancing tools
from the stronger \clique model (where every two vertices can
exchange $O(\log n)$-bit messages per round). It is possible that our
approach of extracting a \specialalgo and
simulating it could be useful elsewhere.

\subsection{Background and Challenges}

The groundbreaking triangle listing result of Chang, Pettie,
Saranurak, and Zhang~\cite{CPSZ}\footnote{Notice that \cite{CPSZ} is a
journal publication combining the two conference papers \cite{CPZ19,
CS19}.} consists of a distributed expander decomposition, in
which the graph is broken up into well-connected (high conductance /
low mixing time) clusters, and \clique-inspired algorithms that run
efficiently in each cluster using the expander routing
of~\cite{GhaffariKS17, GhaffariL18}. The remaining $\epsilon$-fraction
of inter-cluster edges are handled recursively. This general framework
has become the foundation for further work in this
area~\cite{EFF+19,CHGL20,CCGL20,CS20,CFG+21,IGM,CFLLO}.

Specifically, after running the expander decomposition, the $K_3$
listing algorithm of~\cite{CPSZ} runs a randomized version of the
triangle listing \clique algorithm of Dolev, Lenzen, and
Peled~\cite{DLP12}. For the case of $p \geq 4$, significant new
challenges arise as cliques of size $4$ and larger can have their
edges spread out over two or more clusters in the decomposition. This
implies that some edges need to be learned by a cluster although
neither of their endpoints is in the cluster, and thus the number of
edges that a cluster has to process is much larger. The listing
algorithms of Eden, Fiat, Fischer, Kuhn, and Oshman~\cite{EFF+19},
Censor-Hillel, Le Gall, and Leitersdorf~\cite{CHGL20}, and
Censor-Hillel, Chang, Le Gall, and Leitersdorf~\cite{CCGL20} use
various techniques for resolving the above.

\paragraph{\textbf{Usage of randomization in previous approaches.}}
Together, \cite{CPSZ,CCGL20} show optimal randomized algorithms for
listing cliques of any size in \congest. On a conceptual level,
randomization is used in three components of the algorithms. The first
two are the expander decomposition and the expander routing. The third
component that uses randomization is the load balancing for the
listing operation inside the clusters, upon which we now elaborate.

The load balancing task is as follows. Each cluster knows some
set of edges (distributed across the cluster vertices)~--- only the cluster's own edges for $K_3$, and for $K_{p\geq 4}$
perhaps also edges from other clusters. The goal is for the vertices to redistribute
these edges, such that any instance of $K_p$ composed of these edges
is known to at least one vertex in the cluster. This must happen without any vertex slowing down the network by sending or receiving too
much information. The standard approach is to construct a random
partition of the vertex set $V = V_1 \cup V_2 \cup \cdots \cup V_x$,
for $x = O(n^{1/p})$, so that the number of edges between two parts
$V_i,V_j$ is, with high probability, roughly
$|E|/x^2$. Then, each vertex $v$ is assigned some parts $V_i$ and
$v$ learns all edges between these parts. By assigning
a number of parts proportional to the degree of $v$, the near-optimal
round complexity is obtained.

\paragraph{\textbf{Challenges in removing randomization.}} Chang and Saranurak~\cite{CS20} show deterministic
expander decomposition and routing procedures. 
However, deterministically solving the above load balancing task is
difficult to do inside the clusters, implying a polynomial overhead to their triangle listing
round complexity.

The challenge of finding an efficient deterministic algorithm for the load balancing task arises not only in \congest, but also in \clique.  Specifically, it is unknown how to efficiently 
find in a deterministic manner a well-balanced partition of the graph such that the number of
edges between every two parts is roughly the same. In \clique, \cite{CTL20} bypassed this challenge for triangles, and \cite{CFG+21} were able to do so for any
subgraph (with at most $\log n$ vertices). The approach of the latter is that instead of using one
partition, they design a polynomial number of partitions and different
vertices choose to work with different combinations of
partitions. The different partitions are related to each other using a structure called partition trees.

A reasonable approach for deterministic load balancing in \congest for clique listing is to construct the above partition trees inside the clusters. However, the above algorithm for constructing these trees cannot directly be adapted to
\congest in an efficient manner. The reason for this is that the algorithm has
low round complexity, yet high message complexity: although it finishes in
$O(1)$ rounds in the \clique model, it involves sending $\Theta(n^2)$
messages. This high message complexity is very prohibitive for the \congest model, as the bandwidth available, i.e. the
number of communication edges in the graph or in a cluster, might be dramatically
smaller than $\Theta(n^2)$, as we now elaborate.

When dealing with $K_3$, our target round complexity is
$\tilde{O}(n^{1/3})$, as the known lower bound stated above is
$\tilde{\Omega}(n^{1/3})$. This implies a very hard restriction on the
bandwidth available within the clusters. Concretely, it can be shown that some low-degree vertices in the
clusters can only receive $O(n^{2/3})$ messages in total throughout
the entire algorithm. Therefore, for clusters with $\Omega(n^{2/3})$
vertices,
some vertices might not even know basic information about
the other cluster vertices, such as how much data each holds, and so performing any sort of load balancing is clearly tricky.

When dealing with $K_{p \geq 4}$, our target round complexity
inherently increases, as the lower bound is $\Omega(n^{1-2/p})$, which
is at least $\Omega(n^{1/2})$ for $p \geq 4$. While this eases the
pressure on the bandwidth, we are met with a new
challenge: constructing partition trees on data from multiple
clusters. As described above, a clique of size $\geq 4$ can have edges
which lie in multiple clusters, and so as in \cite{CCGL20}, we need to 
make some edges from outside a cluster known to the
vertices of the cluster. To capture this, we must substantially generalize the
definition of partition trees to distinguish between edges in the cluster and those whose information is brought into it.


\subsection{Our Approach}

We introduce a new
class of algorithms which we call \emph{\specialalgos}, and show that these require
very few messages to simulate in clusters in \congest. Then, we show
that there exists a \specialalgo which constructs a \emph{generalized
partition tree}. Finally, we plug this construction into the
deterministic expander decomposition and routing of \cite{CS20} in
order to get listing algorithms for all $K_p$.

\paragraph{\textbf{\Specialalgos.}} 
\Specialalgos are similar to standard streaming algorithms (see, for
example,
\cite{StreamingAlgorithmsFirstPapers1,StreamingAlgorithmsGodelPrizePaper}
or the survey \cite{StreamingAlgorithmsPopularSurvey}), as they maintain a small state, sequentially process an input
stream, and cannot return to previous parts of the input. However, a
key distinctive part of \specialalgos is that they strive to make less
than one pass over the input stream by skipping some of it.
In this setting, a stream is composed by tokens
reflecting two levels of granularity: a large number of
\emph{\auxtok{s}}, each with some fine-grained data, and a
small number of \emph{\maintok{s}} with very coarse-grained
data summarizing the \auxtok{s}. Each \maintok summarizes the chunk of 
\auxtok{s} that appears after it and before the next \maintok in the stream. 
\Specialalgos are able to access all the \maintok{s} in a given stream,
while reading only a limited amount of \auxtok{s}. As such, the total number of stream tokens accessed by a
\specialalgo is significantly smaller than the total stream length.

As intuition for this definition and why it fits well with a \congest simulation, recall that in \congest the input is dispersed across vertices. The raw data each vertex gets can be seen as \auxtok{s}, and each vertex can locally compute a summary of its data, creating a \maintok. When an algorithm runs, it might suffice to sometimes only read the \maintok produced by a vertex, while at other occasions the internal \auxtok{s} are needed.

\paragraph{\textbf{Simulating \specialalgos in \congest.}} For simplicity, suppose each
vertex has a single \maintok which summarizes possibly many \auxtok{s}
which it holds. Assume also that the vertices are ordered in the way
their tokens appear in the stream, i.e., vertex with identifier $1$
has the first \maintok. The algorithm which we run on the
stream has a small $\polylog(n)$-bit state. Consider
two extreme approaches.

\emph{Approach 1: State Passing.} Vertex $1$ starts simulating the algorithm and
locally executes it on its tokens. Then, it passes the state of the
algorithm to vertex $2$ and so forth. While only $\tilde\Theta(n)$
messages are used, this takes $\tilde\Theta(n)$ rounds.

\emph{Approach 2: Leader with Queries.} An arbitrary vertex $v$ is
denoted leader and learns all $n$ \maintok{s}.
Then, $v$ locally simulates the algorithm. When
\auxtok{s} of a certain \maintok are needed, $v$ sends the state of
the algorithm to the vertex $u$ which sent $v$ that specific \maintok. 
Vertex $u$ simulates the algorithm on its local \auxtok{s} and sends the
state back to $v$. As \maintok{s}
summarize the \auxtok{s} and \specialalgos are able to skip the
vast majority of \auxtok{s}, $v$ sends the state to other
vertices only few times during the entire execution. However, we are
burdened by the fact that $v$ initially learns all $n$
\maintok{s} by itself.

\emph{Our approach.} We opt for a combination of both approaches. Denote some set of
vertices $A$ of sublinear polynomial size, ordered $A = \{a_1, \dots, a_{|A|}\}$. 
The vertices in
$A$ collectively learn all \maintok{s}~--- $a_1$ learns the first $n/|A|$ \maintok{s} tokens, followed by $a_2$ which learns the next $n/|A|$, and so forth.
Vertex
$a_1$ starts executing the algorithm on its local data.
Whenever it hits a \maintok for which the corresponding \auxtok{s} are needed, it sends the state of the
algorithm back to whichever vertex sent it the relevant \maintok. That
vertex simulates the algorithm, and returns the state back to $a_1$.
After $a_1$ processes the \maintok{s} it knows, it forwards
the state of the algorithm to $a_2$, and so forth, until
$a_{|A|}$ completes the run of the algorithm. This approach both passes the state few times
and each vertex learns few \maintok{s}, thus having both low round and low message complexities.

\paragraph{\textbf{Constructing generalized partition trees by a \specialalgo.}} We design a \specialalgo which constructs the generalized
partition trees we require, so that we can then invoke our simulation of it in \congest. For clarity, the following is
a simplification of the full algorithm which we show. At a very high level, a
common operation in constructing partition trees is that there is a globally known
vertex set $W$, and we strive to partition the graph vertices $V$ into sets
$V_1, \dots, V_x$, for some $x$, such that for each $i, j \in [x]$,
the number of edges in $E(V_i, W)$ is roughly the same as that in
$E(V_j, W)$. 
We work within clusters and each cluster needs to process also information that is brought from outside it. Concretely, denoting $V_C$ as the set of vertices in a cluster $C$, each $v \in V_C$ might also hold some edges incident to vertices outside $C$, a set which we for now denotes as $F(v) \subseteq V \setminus V_C$.

The \specialalgo works as follows. For each $v \in V_C$, define
$|F(v)| + 1$ \auxtok{s} as the value $|E(\{v\}, W)|$ and values
$|E(\{u\}, W)|$ for each $u \in F(v)$. Also define a \maintok
for $v$ as the sum of these \auxtok{s}. The input stream is the concatenation of the tokens defined for each $v \in V_C$. The algorithm maintains an \emph{incomplete part} $V_j$ in the partition to which we
currently add vertices. When reaching the \maintok of vertex $v \in V_C$, we attempt
to add all of $F(v) \cup \{v\}$ to $V_j$. By observing the \maintok of
$v$, we know how many edges will be added to $E(V_j, W)$ if we do
so. If this value is acceptable, the algorithm proceeds. Otherwise, we
request all the \auxtok{s} of $v$, and add the vertices one by one
until $V_j$ grows too large, at which point we \emph{close} $V_j$ and
start filling part $V_{j+1}$.

\paragraph{\textbf{Roadmap.}} We provide required background in
Section~\ref{sec:preliminaries}, and in Section~\ref{sec:special-algo}
introduce and efficiently simulate \specialalgos in high-conductance
clusters in \congest. In Section~\ref{sec:partitionTrees}, we
construct partition trees using \specialalgos.
Sections~\ref{sec:triangles} and~\ref{sec:cliques} wrap these into
$K_3$ and $K_{\geq 4}$ listing algorithms, respectively.

\subsection{Further Related Work}

\textbf{Triangle finding.} In \clique, the seminal work of
\cite{DLP12} presents an $O(n^{1/3}/\log n)$-round, deterministic
algorithm for $K_3$ listing, and shows faster algorithms for detection
and counting for certain graph families. Matching lower bounds for
listing are shown in \cite{PRS16, ILG17}. For sparse graphs,
\cite{DLP12,PRS16,CTL20} show faster algorithms, some of which are
tight due to \cite{PRS16}. The state-of-the-art for triangle detection
and counting in \clique is $O(n^{0.158})$ rounds, based on fast matrix
multiplication \cite{CHKKLPS}. It is considered hard to show lower
bounds for this, due to the reduction to circuit complexity of
\cite{DKO}. the \emph{broadcast} version of \clique.

In \congest, the first breakthrough for triangle finding is due to
\cite{ILG17}, who showed the first non-trivial algorithms for listing
and detection, in $\tilde O(n^{3/4})$ and $\tilde O(n^{2/3})$ rounds
w.h.p, respectively. Subsequently, \cite{CPSZ} showed a breakthrough
by listing in the optimal complexity of $\tilde O(n^{1/3})$ rounds,
w.h.p., using an expander decomposition together with the routing
techniques of \cite{GhaffariKS17, GhaffariL18}. \cite{HPZZ} show an
algorithm which takes $O(\Delta/\log n + \log \log \Delta)$ rounds,
where $\Delta$ is the maximal degree in the graph. following:
Subsequently, \cite{CPZ19} showed a breakthrough by listing in
$\tilde O(n^{1/2})$ rounds, w.h.p., using an expander decomposition,
which was later improved to get the optimal, yet randomized,
$\tilde O(n^{1/3})$-round algorithm of \cite{CS19}. The first
non-trivial deterministic triangle finding algorithm was given by
\cite{CS20}, taking $O(n^{0.58})$ rounds for detection and
$n^{2/3+o(1)}$ rounds for listing. On the lower bound side, it is only
known that more than one round is needed for detection in \congest
\cite{ACKL,FGKO18}. Proving any polynomial lower bound is considered
difficult~\cite{EFF+19}.

\textbf{Finding larger cliques.} In \clique, \cite{DLP12}
provide an $O(n^{1-2/p}/\log n)$-round deterministic algorithm for
$K_p$ listing. Further, their algorithm lists all subgraphs with $p$
vertices within this complexity. This is optimal for cliques due to
the lower bound of \cite{FGKO18}. A randomized sparsity aware
algorithm was shown for $K_p$ listing, for $p \geq 4$, in
\cite{CHGL20}, and a deterministic construction using partition trees
was later given by \cite{CFG+21}; these results are tight, for graphs
of all sparsities, due to \cite{PRS16, ILG17,FGKO18}.

In \congest, the expander decompositions of \cite{CPZ19,CS19} were
first used by \cite{EFF+19} to show $K_4$ and $K_5$ listing in
$n^{5/6 + o(1)}$ and $n^{73/75+o(1)}$ rounds w.h.p., respectively.
Later, \cite{CHGL20} showed the first sub-linear result for all
$p \geq 4$, which was followed by the $\tilde O(n^{1-2/p})$-round
algorithm of \cite{CCGL20} --- this is optimal due to the lower bound
of \cite{FGKO18}. For $p = 4$, it turns out that the optimal
$\tilde \Theta(n^{1/2})$-round result for listing $K_4$ is also the
best possible for detecting $K_4$, due to the lower bound by
\cite{CK18}.

\textbf{Finding other subgraphs and additional related
results.} Small cycles received much attention with many algorithms
and hardness results, shown in \clique by \cite{CHKKLPS,CTL20,CFG+21},
and in \congest by \cite{DKO,KR17,CFG+21,FGKO18,EFF+19}. For all
$p \geq 3$, a gap between $K_p$ detection and listing is shown in
quantum variants of \congest and \clique in \cite{IGM,CFLLO}. Finding
trees was investigated in \cite{DKO,KR17}, and more general subgraphs
in \cite{GO17,FGKO18,EFF+19}. Distributed subgraph testing, in which
we need to determine if a graph is either free of a subgraph or many
edges need to be removed from it to reach this state, was studied in
\cite{BPS11,CHFSV18,ThreeNotes,FO17,FRST16}.


\section{Preliminaries}%
\label{sec:preliminaries}

\textbf{Notations.} Given a graph $G = (V, E)$, and a set
$E' \subseteq E$, denote by $G[E']$ the graph induced by $E'$ on $G$.
For a vertex $v \in V$ and a subgraph $G' = (V', E')$ of $G$, denote
by $\deg_{G'}(v)$ the degree of $v$ in $G'$,
$\deg_{G'} = |\{ u \in V' : (v, u) \in E' \}|$. For a subset of
vertices $V' \subseteq V$ and a vertex $v \in V$, denote by
$\deg_{V'}$ the degree of $v$ into $V$,
$\deg_{V'} = |\{ u \in V' : (v, u) \in E \}|$.

Decompositions into clusters with high conductance (low mixing time)
underpin \congest listing algorithms. Here, we restate the required
background about them.

\begin{definition}[Conductance]%
\label{def:conductance}
    Given a graph $G = (V, E)$ and a subset $S \subseteq V$, define
    $\Vol(S) = \sum_{v \in S} \deg(v)$ and
    $\partial(S) = E(S, V \setminus S)$. Then, the \emph{conductance}
    of a cut $S \subseteq V$ is
    $\Phi(S) = |\partial(S)| / \min\{\Vol(S), \Vol(V \setminus S)$ \}
    and the conductance of $G$ is the minimum conductance over all
    nontrivial cuts $S$,
    $
        \Phi(G) =
        \min_{S \subseteq V : S \ne \emptyset \land S \ne V} \Phi(S)
    $.
\end{definition}

A related value is \emph{mixing time}, denoted $\tau(G)$, which is
roughly the number of steps a random walk takes to reach the
stationary distribution~--- a precise definition is not required for
our purposes. As demonstrated by~\cite[Corollary 2.3]{JS89}, $\tau(G)$ and $\Phi(G)$
are related by $\tau(G) \le \Theta(\frac {\log n}{\Phi(G)^2})$. As $\tau(G)$
is always at least the graph diameter, this shows the following.

\begin{theorem}%
\label{thm:diameter-conductance}
The diameter of a graph with conductance $\phi$ is
$O(\phi^{-2}\log n)$.
\end{theorem}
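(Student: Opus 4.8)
The plan is to chain together the two facts recalled immediately before the statement. First, I invoke Corollary~2.3 of~\cite{JS89}, which asserts that any graph $G$ with $\Phi(G) = \phi$ has mixing time $\tau(G) \le \Theta(\phi^{-2}\log n)$. Second, I establish the elementary bound $\operatorname{diam}(G) \le \tau(G)$: after $\tau(G)$ steps, a (lazy) random walk started at an arbitrary vertex $u$ is within constant total-variation distance of the stationary distribution $\pi$, and since $G$ is connected $\pi$ places positive mass on every vertex; hence for each vertex $v$ the $\tau(G)$-step walk from $u$ reaches $v$ with positive probability, which means some walk of length at most $\tau(G)$ joins $u$ to $v$, i.e. $d(u,v) \le \tau(G)$. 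Maximizing over $u$ and $v$ gives $\operatorname{diam}(G) \le \tau(G)$.

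Combining the two inequalities then yields $\operatorname{diam}(G) \le \tau(G) \le \Theta(\phi^{-2}\log n) = O(\phi^{-2}\log n)$, which is exactly the claim.

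The only point needing mild care is the notion of mixing time used in the second step: one should take the lazy walk (or otherwise handle parity issues, as in bipartite graphs) so that "the walk is at $v$ with positive probability after $\tau(G)$ steps" is literally valid; since a lazy walk may also stay put, any $u$–$v$ path of length at most $\tau(G)$ suffices and the argument is unaffected. No deeper obstacle arises — the theorem is essentially a repackaging of the cited conductance-to-mixing bound together with the trivial observation that a walk cannot approach stationarity before it can even reach every vertex.
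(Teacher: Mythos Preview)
Your proposal is correct and follows exactly the paper's approach: the paper simply states that by \cite[Corollary~2.3]{JS89} one has $\tau(G) \le \Theta(\phi^{-2}\log n)$, and that $\tau(G)$ is always at least the diameter, which is precisely your two-step argument. If anything, you supply more detail than the paper does by justifying the $\operatorname{diam}(G) \le \tau(G)$ inequality and noting the lazy-walk caveat.
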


\begin{definition}[$\phi$-Cluster, Expander Decomposition]%
\label{expander-cluster}%
\label{def:expander}
    A subgraph
    $\genericCluster = (\genericClusterNodes, \genericClusterEdges)$
    of a given graph $G = (V, E)$ is called a \emph{$\phi$-Cluster in
    $G$} if $\Phi(\genericCluster) \geq \phi$. An
    \emph{$(\epsilon, \phi)$-expander decomposition} of a graph $G$ is
    a partition of 
    $E = E_1 \cup E_2 \cup \cdots \cup E_x \cup E\rem$ for some $x$, 
    such that the subgraphs
        $
            G\left[ E_1 \right],
            \dots,
            G\left[ E_x \right]
        $
    are vertex-disjoint $\phi$-clusters in $G$, and
    $\left|E\rem\right| \leq \epsilon|E|$.
\end{definition}

The following are the two main deterministic tools that we need
from~\cite{CS20}.

\begin{theorem}[\congest Expander Decomposition {\hspace{-1pt}\cite[Theorem~1.1]{CS20}}]%
\label{thm:cs-decomposition}
    Let $0 < \epsilon < 1$ be a parameter. An
    $(\epsilon,\phi)$-expander decomposition of a graph $G = (V, E)$
    with 
    $
        \phi =
            \operatorname{poly}( \epsilon ) \cdot
            2^{-O( \sqrt{\log n \log\log n} )}
    $
    can be computed by a deterministic algorithm in \congest in
    $
        \operatorname{poly}( \epsilon^{-1} ) \cdot
            2^{O( \sqrt{\log n \log\log n} )}
    $
    rounds.
\end{theorem}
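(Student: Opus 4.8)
This is \cite[Theorem~1.1]{CS20}, so in the paper we simply invoke it as a black box; what follows is only a sketch of how one proves it. The plan is to follow the \emph{cut-matching-plus-trimming} paradigm for expander decomposition and make every step deterministic and implementable in \congest. The central building block is a \emph{cut-or-certify} procedure: given a subgraph $H$ and a target conductance $\phi$, it either returns a cut of $H$ that is both fairly sparse (conductance $O(\phi\cdot\polylog n)$) and fairly balanced (both sides have volume $\Omega(\Vol(H))$), or it certifies that after deleting $O(\phi\,\Vol(H))$ edges the remainder of $H$ is a genuine $\phi$-cluster. First I would obtain this procedure by running the Khandekar--Rao--Vazirani cut-matching game for $O(\log^2 n)$ iterations: in each iteration a \emph{cut player} proposes a vertex bipartition and a \emph{matching player} tries to embed a large matching across it via a multicommodity flow; if every iteration succeeds the accumulated matchings witness that $H$ is (close to) an expander, and if some iteration fails the associated min-cut is the desired sparse, balanced cut.

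Derandomizing this procedure has two parts. The cut player, which in the randomized game uses random projections, must be replaced by a deterministic strategy (built from a recursively maintained potential) that still forces the game to halt within $\polylog n$ iterations; and the matching/flow step must be realized by a \emph{deterministic} distributed blocking-flow / max-flow routine in \congest, which I expect to be the heaviest component and the dominant source of the round complexity. Given the cut-or-certify procedure, I would then assemble the decomposition recursively: whenever $H$ is certified near-expander, move its $O(\phi\,\Vol(H))$ bad edges into $E\rem$, run a \emph{trimming} step that iteratively peels off vertices whose boundary into $H$ is too large (charging the peeled edges to $E\rem$) until the remainder is a true $\phi$-cluster, and stop; whenever a balanced sparse cut $(A,B)$ is found, move its $O(\phi\,\Vol(H)\,\polylog n)$ crossing edges to $E\rem$ and recurse on $H[A]$ and $H[B]$. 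Balancedness forces the volume to drop by a constant factor, so the recursion has depth $O(\log |E|)$, and summing the edges charged to $E\rem$ over the whole recursion tree gives $|E\rem| = O(\phi\cdot\polylog n)\cdot|E|$; taking $\phi = \operatorname{poly}(\epsilon)\cdot n^{-o(1)}$ therefore keeps $|E\rem| \le \epsilon|E|$ while leaving $\phi$ only subpolynomially small.

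The hard part will be the deterministic distributed flow routine and the accompanying accounting that keeps the total cost subpolynomial. One must run $\polylog n$ flow computations at each of $O(\log|E|)$ recursion levels, on networks whose diameter may itself be large, without letting the round count grow beyond $n^{o(1)}$; this is where \cite{CS20} introduce a further layer of recursion that trades the number of recursion levels against a $\polylog n$-type overhead per level, and balancing this trade-off is exactly what yields the $2^{O(\sqrt{\log n\log\log n})}$ factor appearing in both $\phi^{-1}$ and the round complexity. Verifying that the trimming and flow subroutines can still be executed quickly enough inside intermediate clusters whose own conductance is only $n^{-o(1)}$ is the delicate point that ties the whole argument together.
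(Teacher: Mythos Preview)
You correctly identify that the paper invokes this theorem as a black box from \cite{CS20} and provides no proof of its own; your additional sketch of the Chang--Saranurak argument is reasonable background but goes beyond what the paper contains, so there is nothing further to compare.
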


\begin{theorem}[\congest Routing {\hspace{-1pt}\cite[Theorem~1.2]{CS20}}]%
\label{thm:cs-routing}
    Let $G = (V, E)$ be a graph with conductance $\phi$, where each
    vertex $v \in V$ is a source and destination of
    $O(L) \cdot \deg(v)$ messages. Then there is a deterministic
    algorithm in the \congest that routes all messages to their
    destination in
    $
        L \cdot \operatorname{poly}( \phi^{-1} ) \cdot
            2^{O( \log^{2/3}n\log^{1/3}\log n )}
    $
    rounds.
\end{theorem}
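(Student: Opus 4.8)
The plan is to build a fully deterministic version of the randomized expander‑routing machinery of Ghaffari--Kuhn--Su / Ghaffari--Li, replacing every randomized ingredient by structure extracted from the deterministic expander decomposition of Theorem~\ref{thm:cs-decomposition}. The load parameter $L$ can be carried linearly through the whole construction --- any primitive that transports $O(\deg v)$ units per vertex is simply run on $O(L\deg v)$ units at an $L$‑factor cost --- so from here on I describe the scheme for a unit load in which each $v$ is source and destination of $O(\deg v)$ messages, and the final bound is obtained by multiplying the unit‑load round complexity by $L$.

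The first step is to construct a \emph{hierarchical decomposition} of $G$. Starting from the $\phi$‑cluster $G$, I repeatedly apply Theorem~\ref{thm:cs-decomposition} with a parameter $\epsilon$ to split each current cluster into subclusters of roughly equal volume (re‑decomposing any piece that comes out too unbalanced, and using the diameter bound of Theorem~\ref{thm:diameter-conductance} to coordinate this inside each cluster), discarding at each application an $\epsilon$‑fraction of edges. This produces a laminar family --- a decomposition tree --- of some depth $D$ whose leaves are clusters of a chosen small threshold size; a cluster at level $d$ has conductance at least $\phi_d \ge \big(\operatorname{poly}(\epsilon)\cdot 2^{-O(\sqrt{\log n\log\log n})}\big)^{d}\cdot\phi$, so by Theorem~\ref{thm:diameter-conductance} every cluster at every level has diameter $\operatorname{poly}(\phi_d^{-1})\cdot O(\log n)$. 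The edges discarded while building the tree carry only an $\epsilon$‑fraction of the total load, so routing them is deferred to a residual pass that is handled by iterating the whole algorithm $O(\log(\text{load}))$ times, each iteration halving the residual; this stays within the same asymptotic bound.

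The core is a single recursive routing step along this tree. Given a $\psi$‑cluster $C$ partitioned into subclusters $C_1,\dots,C_t$ of roughly balanced volume, with a load in which each $C_i$ is source and destination of $O(\Vol(C_i))$ messages, I would: (i) build a bounded‑degree expander $H$ on the super‑nodes $C_1,\dots,C_t$ together with an embedding of each edge of $H$ as a path in $C$ of length $\operatorname{poly}(\psi^{-1})\cdot 2^{O(\sqrt{\log n})}$ and with total edge‑congestion of the same order, so that one communication round of $H$ is simulated in $\operatorname{poly}(\psi^{-1})\cdot 2^{O(\sqrt{\log n})}$ rounds of $C$; (ii) recursively route the load \emph{inside} $H$, the transport internal to each super‑node $C_i$ being handled by recursing into the sub‑hierarchy rooted at $C_i$; and (iii) at the leaves, where clusters are tiny, deliver each message to its owner by brute force using the diameter bound. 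Composing these steps over the $D$ levels of the tree gives the routing algorithm, and the free parameters --- the depth $D$, the branching $t$, and the accuracy $\epsilon$ --- are then tuned so that the per‑level overhead (which now includes an invocation of the comparatively expensive deterministic decomposition, costing $2^{O(\sqrt{\log n\log\log n})}$ rounds) compounded against the accumulated conductance loss $\phi_D^{-1} = 2^{O(D\sqrt{\log n\log\log n})}$ is minimized; the optimum, with $D$ of order $(\log n/\log\log n)^{1/6}$, yields exactly $\operatorname{poly}(\phi^{-1})\cdot 2^{O(\log^{2/3}n\log^{1/3}\log n)}$. This optimization, while lengthy, is routine once the pieces are in place.

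The step I expect to be the genuine obstacle is the \emph{deterministic} construction in (i) of the embedded virtual expander $H$. In the randomized scheme this is precisely where randomness is used: each vertex sends a constant number of tokens to independently uniform targets (or performs short random walks), and concentration guarantees that the resulting near‑regular structure is an expander with low‑congestion embedded paths. Derandomizing this requires a deterministic substitute, and the only randomness‑free tool available is again the expander decomposition: one needs a deterministic, \congest‑implementable routine that, between any two vertex subsets of comparable volume inside a $\psi$‑cluster, exhibits many short, low‑congestion paths --- a constructive Menger/flow‑type statement obtained by further recursive decomposition --- and one assembles these path systems across the $C_i$ to serve simultaneously as the edges of $H$ and as their embeddings. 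Making this deterministic construction achieve congestion and dilation comparable to the randomized matchings, and verifying that the extra recursion it introduces does not blow up the parameters, is the technical heart of the argument; everything else is bookkeeping around Theorems~\ref{thm:cs-decomposition} and~\ref{thm:diameter-conductance}.
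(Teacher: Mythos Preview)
This theorem is not proved in the present paper at all: it is quoted verbatim as \cite[Theorem~1.2]{CS20} and used as a black box. There is no ``paper's own proof'' to compare against; the authors simply import the deterministic routing primitive of Chang and Saranurak and invoke it whenever they need to move $O(L)\cdot\deg(v)$ messages per vertex inside a $\phi$-cluster.

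Your sketch is a reasonable high-level account of how the result in \cite{CS20} is actually obtained --- a hierarchical expander decomposition, a recursively embedded virtual expander on the pieces, and a parameter-balancing step that produces the $2^{O(\log^{2/3}n\log^{1/3}\log n)}$ overhead --- so you are not on the wrong track as a description of that external work. But for the purposes of this paper you have gone well beyond what is required: the correct ``proof'' here is a one-line citation, and any attempt to reproduce the argument belongs to a reading of \cite{CS20}, not of this paper.
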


In order to achieve sufficient bandwidth within clusters, we need both
good conductance and good guarantees about the degrees of vertices in
the cluster.
The following definition refines $\phi$-clusters with the needed
constraints.
Figure~\ref{fig:cluster-sets} illustrates the different designations
for vertices in such clusters.

\begin{definition}[$(\phi, \delta)$-Communication Cluster]%
\label{def:com-cluster}
    A \emph{$(\phi, \delta)$-communication cluster} is a
    $\phi$-cluster
    $\genericCluster = (\genericClusterNodes, \genericClusterEdges)$
    in a graph $G = (V, E)$, together with a given subset
    $\vlist \subseteq \vcluster$ such that for each $v \in \vlist$ it
    holds that $\comdeg v \ge \delta$, and $v$ knows the subset
    $\vlist$ and the values $\delta$, $n = |V|$,
    $K = |\genericClusterNodes|$, $k = |\vlist|$. The
    \emph{communication degree} of a vertex $v \in \vlist$ is the
    number of edges in $\genericClusterEdges$ that are adjacent to
    $v$,
    $\comdeg v = |\{u \in \genericClusterNodes : (v, u) \in E_C\}|$.
    Denote by $\mu$ the average communication degree of vertices in
    $\vlist$, that is, $\mu = \frac1k\sum_{v \in \vlist}\comdeg v$,
    and denote by $\vhd$ the set of vertices in $\vlist$ with at least
    half-average communication degree, that is,
    $\vhd = \{v \in \vlist : \comdeg v \ge \frac{1}{2}\mu\}$.
\end{definition}

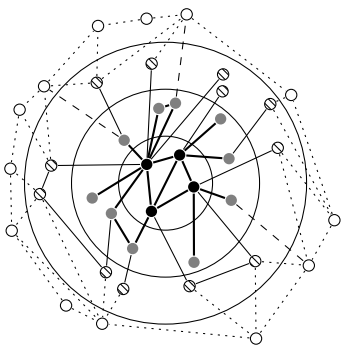
\begin{figure}%
\begin{multicols}{2}
\begin{center}
\begin{tikzpicture}[scale=0.25]
\usetikzlibrary{patterns}
\node[minimum size=1.5mm, inner sep=0, fill,circle] at (-1, 1) (1) {};
\node[minimum size=1.5mm, inner sep=0, fill,circle] at (-0.75, -1.5) (2) {};
\node[minimum size=1.5mm, inner sep=0, fill,circle] at (0.75, 1.5) (3) {};
\node[minimum size=1.5mm, inner sep=0, fill,circle] at (1.5, -0.2) (4) {};
\draw (0, 0) circle (2.5);
\node[minimum size=1.5mm, inner sep=0, fill,color=gray,circle] at (-2.88, -1.61) (5) {};
\node[minimum size=1.5mm, inner sep=0, fill,color=gray,circle] at (3.376, 1.304) (7) {};
\node[minimum size=1.5mm, inner sep=0, fill,color=gray,circle] at (-1.75, -3.48) (8) {};
\node[minimum size=1.5mm, inner sep=0, fill,color=gray,circle] at (2.930, 3.422) (9) {};
\node[minimum size=1.5mm, inner sep=0, fill,color=gray,circle] at (-0.36, 3.978) (10) {};
\node[minimum size=1.5mm, inner sep=0, fill,color=gray,circle] at (3.495, -0.91) (11) {};
\node[minimum size=1.5mm, inner sep=0, fill,color=gray,circle] at (0.530, 4.258) (12) {};
\node[minimum size=1.5mm, inner sep=0, fill,color=gray,circle] at (1.513, -4.21) (13) {};
\node[minimum size=1.5mm, inner sep=0, fill,color=gray,circle] at (-2.20, 2.284) (14) {};
\node[minimum size=1.5mm, inner sep=0, fill,color=gray,circle] at (-3.90, -0.79) (15) {};
\draw (0, 0) circle (5);
\node[minimum size=1.5mm, inner sep=0, draw,circle,pattern=north west lines] at (4.770, -4.15) (16) {};
\node[minimum size=1.5mm, inner sep=0, draw,circle,pattern=north west lines] at (5.967, 1.881) (17) {};
\node[minimum size=1.5mm, inner sep=0, draw,circle,pattern=north west lines] at (-6.08, 0.937) (18) {};
\node[minimum size=1.5mm, inner sep=0, draw,circle,pattern=north west lines] at (-2.25, -5.63) (21) {};
\node[minimum size=1.5mm, inner sep=0, draw,circle,pattern=north west lines] at (3.061, 5.789) (22) {};
\node[minimum size=1.5mm, inner sep=0, draw,circle,pattern=north west lines] at (5.570, 4.210) (23) {};
\node[minimum size=1.5mm, inner sep=0, draw,circle,pattern=north west lines] at (-0.74, 6.359) (24) {};
\node[minimum size=1.5mm, inner sep=0, draw,circle,pattern=north west lines] at (-6.68, -0.58) (25) {};
\node[minimum size=1.5mm, inner sep=0, draw,circle,pattern=north west lines] at (-3.17, -4.74) (26) {};
\node[minimum size=1.5mm, inner sep=0, draw,circle,pattern=north west lines] at (3.030, 4.888) (28) {};
\node[minimum size=1.5mm, inner sep=0, draw,circle,pattern=north west lines] at (1.277, -5.48) (29) {};
\node[minimum size=1.5mm, inner sep=0, draw,circle,pattern=north west lines] at (-3.66, 5.347) (30) {};
\draw (0, 0) circle (7.5);
\node[minimum size=1.5mm, inner sep=0, draw,circle] at (7.619, -4.38) (31) {};
\node[minimum size=1.5mm, inner sep=0, draw,circle] at (8.979, -1.97) (32) {};
\node[minimum size=1.5mm, inner sep=0, draw,circle] at (6.682, 4.695) (33) {};
\node[minimum size=1.5mm, inner sep=0, draw,circle] at (4.813, -8.27) (34) {};
\node[minimum size=1.5mm, inner sep=0, draw,circle] at (1.124, 8.985) (35) {};
\node[minimum size=1.5mm, inner sep=0, draw,circle] at (-6.46, 5.157) (36) {};
\node[minimum size=1.5mm, inner sep=0, draw,circle] at (-7.76, 3.910) (37) {};
\node[minimum size=1.5mm, inner sep=0, draw,circle] at (-3.37, -7.48) (39) {};
\node[minimum size=1.5mm, inner sep=0, draw,circle] at (-8.18, -2.53) (40) {};
\node[minimum size=1.5mm, inner sep=0, draw,circle] at (-3.59, 8.367) (41) {};
\node[minimum size=1.5mm, inner sep=0, draw,circle] at (-8.25, 0.772) (42) {};
\node[minimum size=1.5mm, inner sep=0, draw,circle] at (-5.29, -6.51) (43) {};
\node[minimum size=1.5mm, inner sep=0, draw,circle] at (-1.01, 8.753) (45) {};
\draw[thick] (1) edge (2);
\draw[thick] (1) edge (3);
\draw[thick] (2) edge (3);
\draw[thick] (2) edge (4);
\draw[thick] (3) edge (4);
\draw[thick] (1) edge (5);
\draw[thick] (3) edge (7);
\draw[thick] (2) edge (8);
\draw[thick] (3) edge (9);
\draw[thick] (1) edge (10);
\draw[thick] (4) edge (11);
\draw[thick] (1) edge (12);
\draw[thick] (4) edge (13);
\draw[thick] (1) edge (14);
\draw[thick] (1) edge (15);
\draw[thick] (10) edge (12);
\draw[thick] (8) edge (5);
\draw (4) edge (16);
\draw (4) edge (17);
\draw (1) edge (18);
\draw (8) edge (21);
\draw (1) edge (22);
\draw (7) edge (23);
\draw (1) edge (24);
\draw (18) edge (25);
\draw (5) edge (26);
\draw (3) edge (28);
\draw (2) edge (29);
\draw (14) edge (30);
\draw (26) edge (25);
\draw (29) edge (16);
\draw[dashed] (11) edge (31);
\draw[dashed] (12) edge (35);
\draw[dashed] (14) edge (36);
\draw[dotted] (16) edge (31);
\draw[dotted] (17) edge (31);
\draw[dotted] (32) edge (31);
\draw[dotted] (17) edge (32);
\draw[dotted] (23) edge (32);
\draw[dotted] (23) edge (33);
\draw[dotted] (32) edge (33);
\draw[dotted] (35) edge (33);
\draw[dotted] (31) edge (34);
\draw[dotted] (16) edge (34);
\draw[dotted] (29) edge (34);
\draw[dotted] (24) edge (35);
\draw[dotted] (30) edge (35);
\draw[dotted] (30) edge (36);
\draw[dotted] (18) edge (36);
\draw[dotted] (18) edge (37);
\draw[dotted] (36) edge (37);
\draw[dotted] (40) edge (39);
\draw[dotted] (25) edge (39);
\draw[dotted] (21) edge (39);
\draw[dotted] (26) edge (39);
\draw[dotted] (34) edge (39);
\draw[dotted] (43) edge (39);
\draw[dotted] (42) edge (40);
\draw[dotted] (43) edge (40);
\draw[dotted] (25) edge (40);
\draw[dotted] (45) edge (41);
\draw[dotted] (36) edge (41);
\draw[dotted] (30) edge (41);
\draw[dotted] (25) edge (42);
\draw[dotted] (37) edge (42);
\draw[dotted] (45) edge (35);
\end{tikzpicture}
\end{center}
\caption{%
    A schematic of the notations for the types of vertices and edges
    used in this paper. The rings partition the sets
    $\vhd \subseteq \vlist \subseteq \vcluster \subseteq V$, so black
    vertices are members of $\vhd$, grey ones of $\vlist$, hatched
    ones of $\vcluster$, and white ones of $V$. The bold edges are
    edges in $E(\vlist, \vlist)$; these are the edges for which we
    endeavor to list cliques. Solid edges are edges used for
    communication, $E(\vlist, \vcluster)$. Solid edges that cross the
    second ring and dashed edges are the elements of
    $\ebar \subseteq E(\vlist, V \setminus \vlist)$. Dotted edges are
    elements of
    $E' \subseteq E(V \setminus \vlist, V \setminus \vlist)$.
}%
\label{fig:cluster-sets}
\end{multicols}
\end{figure}

In Sections~\ref{sec:triangles} and~\ref{sec:cliques}, we use the
decomposition that is produced by Theorem~\ref{thm:cs-decomposition}
as the foundation of our recursive listing algorithm.
We produce a $(\phi,\delta)$-communication cluster $C_i$ from each
edge set $E_i$ returned by the decomposition,
then use the routing scheme of
Theorem~\ref{thm:cs-routing} to run algorithms for listing the cliques
that intersect each cluster.
To show that this plan works, it is necessary to show that $\phi$ and
$\delta$ are sufficiently high and that the number of edges we can
eliminate in each step of the recursion is at least a constant
fraction of the total.
The first
fact depends on the details of how we choose $C_i$, which differ
for different sizes of cliques are and are discussed in
Sections~\ref{sec:triangles} and~\ref{sec:cliques}.
Here, we show a strategy for selecting a set of edges for which we
attempt to list all cliques containing at least one edge in that set
and state a lemma of~\cite{CS20} that shows that
a constant fraction of
the total edges can be eliminated in this way.

Recall that Theorem~\ref{thm:cs-decomposition} partitions the set of
edges of the graph $E$ into
$E = E_1 \cup E_2 \cup \cdots E_x \cup E\rem$, where
$|E\rem| \le \epsilon|E|$.
Consider the subgraph induced on $G$ by
each edge set $E_i$, $G[E_i] = (V_i, E_i)$.
For each $E_i$, we select a set of vertices
$\vinnerof{C_i}$ and seek to use the cluster derived from $E_i$ to
list cliques containing an edge in
$E(\vinnerof{C_i}, \vinnerof{C_i})$.
Let $\vinnerof{C_i}$ be the set of
vertices in $V_i$ that have the majority of their edges in $E_i$, that
is,
$\vinnerof{C_i} = \{v \in V_i : \deg_{E_i}(v) \ge \deg_{E \setminus E_i}(v)\}$.
Also denote by $E_i^-$ the set of edges in $E_i$ that are between two
vertices in $\vinnerof{C_i}$, that is,
$E_i^- = \{ e = (u, v) \in E_i : \{ u, v \} \subseteq \vinnerof{C_i} \}$.
If we know that the
number of remaining edges not contained in some $E_i$ is sufficiently
small, we can recursively repeat the algorithm on the graph induced by
the set of remaining edges and obtain a logarithmic recursion depth.
The following lemma states the property we need; as a corollary, the
total number of edges in $E$ that are not in some $E_i^-$ is at most
$3\epsilon|E|$.

\begin{lemma}[{\hspace{-1pt}\cite[Lemma 6.1]{CS20}}]%
\label{lem:remaining-edges-small}
    It holds that
    $
        |\bigcup_{1 \leq i \leq x} E_i \setminus E_i^-|
        \leq 2\epsilon|E|
    $.
\end{lemma}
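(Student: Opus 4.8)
The plan is to charge the edges of each $E_i \setminus E_i^-$ against the edges of the remainder set $E\rem$, using crucially that the clusters produced by Theorem~\ref{thm:cs-decomposition} are vertex-disjoint. Since the edge sets $E_1, \dots, E_x$ are pairwise disjoint (they form part of the partition of $E$ in Theorem~\ref{thm:cs-decomposition}), the sets $E_i \setminus E_i^-$ are disjoint as well, so it suffices to bound $\sum_{i=1}^{x} |E_i \setminus E_i^-|$.

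Fix an index $i$ and let $B_i = V_i \setminus \vinnerof{C_i}$ be the set of vertices of $V_i$ that have \emph{strictly} more incident edges outside $E_i$ than inside it, i.e.\ $\deg_{E_i}(v) < \deg_{E \setminus E_i}(v)$ for every $v \in B_i$. By the definition of $E_i^-$, every edge in $E_i \setminus E_i^-$ has at least one endpoint in $B_i$; charging each such edge to one of its endpoints in $B_i$ gives $|E_i \setminus E_i^-| \le \sum_{v \in B_i} \deg_{E_i}(v)$, and applying the defining inequality of $B_i$ term by term yields $|E_i \setminus E_i^-| \le \sum_{v \in B_i} \deg_{E \setminus E_i}(v)$.

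The key step is now to identify $\deg_{E \setminus E_i}(v)$ with $\deg_{E\rem}(v)$ for $v \in V_i$. Because the subgraphs $G[E_1], \dots, G[E_x]$ are vertex-disjoint, a vertex $v \in V_i$ lies in no other cluster $V_j$, and hence $v$ is incident to no edge of $E_j$ for $j \ne i$ (each edge of $E_j$ has both endpoints in $V_j$). Therefore every edge of $E \setminus E_i$ incident to $v$ belongs to $E\rem$, so $\deg_{E \setminus E_i}(v) = \deg_{E\rem}(v)$, giving $|E_i \setminus E_i^-| \le \sum_{v \in B_i} \deg_{E\rem}(v) \le \sum_{v \in V_i} \deg_{E\rem}(v)$.

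Finally I would sum over $i$. Since the vertex sets $V_1, \dots, V_x$ are pairwise disjoint subsets of $V$, we get $\sum_{i=1}^x \sum_{v \in V_i} \deg_{E\rem}(v) \le \sum_{v \in V} \deg_{E\rem}(v) = 2|E\rem|$ by the handshake identity, and $|E\rem| \le \epsilon|E|$ then gives $\sum_i |E_i \setminus E_i^-| \le 2\epsilon|E|$, as claimed. I do not anticipate any real difficulty here; the only subtle point is the vertex-disjointness step, which is exactly what forces all ``external'' incident edges of a cluster vertex into $E\rem$ --- without it those edges could be hidden inside other clusters $E_j$ and the charging argument would collapse.
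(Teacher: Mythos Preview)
Your proof is correct. The paper does not actually prove this lemma; it is quoted verbatim as \cite[Lemma 6.1]{CS20} and used as a black box, so there is no ``paper's own proof'' to compare against. Your charging argument --- bounding $|E_i\setminus E_i^-|$ by $\sum_{v\in B_i}\deg_{E_i}(v)$, then by $\sum_{v\in B_i}\deg_{E\setminus E_i}(v)$, and using vertex-disjointness of the clusters to identify $\deg_{E\setminus E_i}(v)$ with $\deg_{E\rem}(v)$ before summing via the handshake identity --- is exactly the natural proof and matches how the result is established in~\cite{CS20}.
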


\section{Simulating \SpecialAlgos in \congest}%
\label{sec:special-algo}

A \emph{\specialalgo} for parameters $L,\nin,\nout,\paramq,\paramy$ is
a procedure that satisfies the following. There is an \emph{input
stream} $S$, which is a sequence of $\nin$ \emph{\maintok{s}},
$S = \langle \tau_1, \tau_2, \dots, \tau_{\nin} \rangle$. Each
\maintok $\tau_i$ has an associated sequence of $\naux_i$
\emph{\auxtok{s}},
$
    \auxt_i = \langle
        \auxt_{i,1}, \auxt_{i,2}, \dots, \auxt_{i,\naux_i}
    \rangle
$
The algorithm produces a \emph{write-only output stream}
$R = \langle \rho_1,\rho_2,\dots,\rho_{\nout} \rangle$, by invoking
operations that are either \readop, \auxop, or \writeop, as follows.
The operation \readop reads the next token from the input stream $S$;
the operation \auxop adds the \auxtok{s} associated with the last-read
\maintok to the beginning of $S$, so that if the previous \readop
operation returned $\tau_i$, the next $\naux_i$ \readop operations
will return the sequence $\auxt_i$; and the operation \writeop writes
a token to the end of the output stream $R$. The algorithm may perform
at most \paramq total \auxop operations. Additionally, the algorithm
may perform at most \paramy{} \writeop operations between two (not
necessarily consecutive) \readop operations that read consecutive
\maintok{s} $\tau_i, \tau_{i + 1}$. All main, auxiliary, and output
tokens are at most $L$ bits long. Apart from the write-only output
stream $R$ for which \nout may be large, the rest of the space used by
the algorithm must be polynomial in $L$.

The motivation for the restriction that a \specialalgo may access
\auxtok{s} for at most \paramq \maintok{s} is that the algorithms
we wish to simulate use the \maintok{s} as ``summaries'' of their
corresponding sequences of \auxtok{s}. These algorithms operate on
extremely long sequences of input; that is,
$\sum_{1 \le i \le \nin}\naux_i$ is huge. We cannot afford to move all
of the \auxtok{s} around the cluster during a simulation in \congest;
however, most of the time, these algorithms
can obtain all necessary information
about the input from some summarizing function that aggregates the
information in many \auxtok{s} into just one \maintok.
When the algorithm finds that it in fact needs more
granular information, it inspects each of the \auxtok{s} that
correspond to the summarizing \maintok. However, inspecting a sequence
of \auxtok{s} remains an expensive component of the simulation
process, which is why we need to have a good bound \paramq on the
number of times an algorithm may need to access a stream of \auxtok{s}
in order to guarantee a suitable round complexity.

\subsection{The Simulation}

We efficiently simulate \specialalgos when their input is distributed
to the vertices of a communication cluster in a suitable manner.
Specifically, we require that each vertex holds a small, contiguous
part of the input stream. We also require that the vertices hold these
parts in order of their number, i.e., the lowest-numbered vertex holds
some prefix of the input stream, the next-lowest-numbered vertex holds
some prefix of the rest of the stream, etc. The reason for this
requirement is that we simulate the \specialalgo by moving the state
of the algorithm from one vertex to another in the order that they
hold the input. As the degree of vertices in the communication cluster
can be low, we may not have enough bandwidth to distribute information
about which part of the input is held by which vertex; therefore, we
need the input to be given in this predetermined manner. In our
applications, it is not difficult to arrange the input so that this
condition holds.

\begin{definition}[Streaming Input Cluster]%
\label{def:streaming-input-cluster}
    A \emph{streaming input cluster} for a parameter $T_{\max}$ is a
    $(\phi, \delta)$-communication cluster
    $\genericCluster = (\genericClusterNodes, \genericClusterEdges)$
    in $G = (V, E)$ for some values $\phi \in O(\polylog n)$ and
    $\delta$, together with a set of streaming algorithms
    $\algo{A}_1,\dots,\algo{A}_\zeta$ with parameters
    $L,\nin,\nout,\paramq,\paramy$ known to all $v \in \vlist$. The
    token lengths 
    satisfy
    $L = O(\polylog n)$.

    $\genericCluster$ receives the input to
    the algorithms distributed among its vertices so that each vertex
    $v \in \vlist$ holds a contiguous interval $\sigma_{j,v}$ of at
    most $T_{\max}$ \maintok{s} from the input stream $S_j$ to
    algorithm $\algo{A}_j$, and the vertices $\vlist$ are numbered in
    order of the input intervals they hold; that is, concatenating the
    intervals of \maintok{s} $(\sigma_{j,v_1}, \dots, \sigma_{j,v_k})$
    produces exactly $S_j$ for all algorithms $\algo{A}_j$. We call
    this condition \emph{\conscont}. Additionally, for each \maintok
    $\tau_i$ from stream $S_j$ held by a vertex $v$, $v$ also holds
    the associated \auxtok{s}
    $\auxt_{i,1},\auxt_{i,2},\dots,\auxt_{i,\naux_i}$.
\end{definition}

We
introduce a definition that is useful for some components of our
algorithms. Several
procedures presented in this paper involve delegating the
responsibility of collecting information from some set of vertices
$V'$, so that each vertex in a set $\vchain{V}$ is responsible for a
relatively small number of vertices $u \in V'$. We often additionally
require that the vertices $u \in V'$ for which some $v \in \vchain{V}$
is responsible be contiguously numbered. We formalize this notion
here.

\begin{definition}[$(\beta,V')$-Vertex Chain]%
\label{def:vchain}
    Given a graph $G = (V, E)$ and some contiguously-numbered set of
    vertices $V' \subseteq V$, a \emph{$(\beta,V')$-vertex chain} for
    some integer parameter $\beta$ is an ordered set of
    $y = \left\lceil\frac{|V'|}{\beta}\right\rceil$ vertices
    $\vchain{V} \subseteq V$, where the elements of \vchain{V} are
    denoted $\chainmem{V}{1}{},\dots,\chainmem{V}{y}{}$. In addition,
    each $u \in V'$ is \emph{assigned} to some $v \in \vchain{V}$
    according to a many-to-one function $\chainass{V}{u}{}$, which
    satisfies (i) for all $v \in \vchain{V}$, there are at most
    $\beta$ vertices in $\chainres{V}{v}{}$ and they are contiguously
    numbered and known to $v$, and (ii) each $u \in V'$ knows
    $\chainass{V}{u}{}$.
\end{definition}

We now state and prove the conditions for efficient simulation of
\specialalgos in \congest.

\begin{theorem}%
\label{thm:simulate-special}
    Given a streaming input cluster
    $\genericCluster = (\genericClusterNodes, \genericClusterEdges)$
    with $k = |\vlist|$, for any integer parameter
    $1 \le \lambda \le \frac{k}{\zeta}$, all
    $\algo{A}_1,\dots,\algo{A}_\zeta$ can be simulated in parallel in
    \congest in
    $
        \left(
            \frac{T_{\max}}{\delta}(\zeta + \frac{k}{\lambda})
            + (\paramq + 1)(\lambda + \frac{\zeta}{\delta})
        \right)\rcf
    $
    rounds. After the simulation, each token in the output streams
    $R_1,\dots,R_\zeta$ is known to some vertex $v \in \vlist$, and
    each $v \in \vlist$ knows
    $O(\min\{\nout, \paramy T_{\max}\frac{k}{\lambda}\})$ output
    tokens if $\paramq = 0$ and
    $O(\min\{\zeta\nout, \paramy T_{\max}\frac{k}{\lambda}\})$
    output tokens otherwise.
\end{theorem}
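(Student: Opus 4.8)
The goal is to simulate all $\zeta$ streaming algorithms by passing the state of each algorithm through the vertices in the order that they hold the corresponding input stream, as sketched in the introduction's "Our approach." The main moving parts are: (1) setting up a vertex chain that lets us route state from one vertex to the next, (2) executing the \readop/\auxop/\writeop operations locally, (3) bounding the number of rounds each of the two "costs" contributes — forwarding the state along the chain, and servicing \auxop requests — and (4) bounding how many output tokens end up at a single vertex.

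\textbf{Step 1: Organizing the simulation.} First I would fix one of the $\zeta$ algorithms $\algo{A}_j$ and describe its simulation; the $\zeta$ simulations run in parallel and we pay for congestion at the end. By \conscont, vertex $v_i \in \vlist$ (in the prescribed numbering) holds a contiguous block $\sigma_{j,v_i}$ of at most $T_{\max}$ \maintok{s} together with all of their \auxtok{s}. So $v_1$ begins running $\algo{A}_j$ locally: it can execute \readop (advance within $\sigma_{j,v_1}$), \writeop (append to a locally-held suffix of $R_j$), and — crucially — \auxop entirely locally, because it holds the \auxtok{s} for every \maintok in its own block. When $v_1$ exhausts $\sigma_{j,v_1}$, it ships the $\operatorname{poly}(L) = \operatorname{polylog}(n)$-bit algorithm state to $v_2$, and so on. The key observation making this work is that \auxop never forces communication: the expensive-looking \auxtok{s} stay put, and only the small state travels. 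This is why the \congest cost does not scale with $\sum_i \naux_i$.

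\textbf{Step 2: Routing the state efficiently.} Naively handing the state from $v_i$ to $v_{i+1}$ costs up to $k$ sequential hops; this is Approach 1 from the introduction and is too slow. Instead, I would group the $k$ vertices of $\vlist$ into $k/\lambda$ consecutive "chunks" of $\lambda$ vertices, build a $(\lambda, \vlist)$-vertex chain $\vchain{V}$ (Definition~\ref{def:vchain}) so that one representative per chunk knows the members of its chunk, and have each chunk representative learn, via the routing of Theorem~\ref{thm:cs-routing}, the $\le \lambda T_{\max}$ \maintok{s} held inside its chunk — this is the "leader" half of the hybrid. Within a chunk the state still moves vertex-to-vertex ($\lambda$ hops per \auxop excursion and per chunk traversal), but between chunks the representatives form a short chain of length $k/\lambda$, and state can be shipped across this short chain through the low-diameter cluster (Theorem~\ref{thm:diameter-conductance}) quickly. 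Accounting: traversing all $k$ vertices once costs $O(\lambda)$ inside each chunk times $k/\lambda$ chunks plus the between-chunk hops — this yields the $\frac{T_{\max}}{\delta}(\zeta + \frac{k}{\lambda})$-type term after one multiplies by the per-hop cost of moving $\zeta$ states of total size $\zeta L$ across a cluster whose vertices have degree $\ge \delta$ (so $L/\delta$-ish per unit, hence the $1/\delta$ and the $T_{\max}$, which bounds how much input-induced shuffling a single vertex's block can trigger). Each of the $\le \paramq$ \auxop excursions (if $\paramq > 0$ they may now need to cross chunk boundaries to reach the vertex actually holding the \auxtok{s}, since the chunk representative may not hold them itself) costs $O(\lambda + \zeta/\delta)$: $\lambda$ to walk within a chunk and $\zeta/\delta$ to route $\zeta$ states through the diameter. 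Summing gives the stated bound, after the $\rcf$ factor absorbs the $2^{O(\log^{2/3} n \log^{1/3}\log n)}$ routing overhead and the $\operatorname{polylog}$ from $L$ and the diameter.

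\textbf{Step 3: Counting output tokens and handling $\paramq = 0$ separately.} For the output-token bound: each \maintok triggers at most $\paramy$ \writeop operations "between" it and the next \maintok, by the \specialalgo definition, and a single chunk holds at most $\lambda T_{\max}$ \maintok{s} across one stream, i.e. $T_{\max}\frac{k}{\lambda}$ after aggregating per chunk — wait, more carefully: a vertex or chunk representative accumulates outputs for exactly the \maintok{s} in the blocks it traverses, of which there are $\le \lambda T_{\max}$ per stream, hence $O(\paramy T_{\max}\frac{k}{\lambda})$ total when $\paramq=0$ (each stream handled independently, outputs localized), versus $O(\paramy T_{\max}\frac{k}{\lambda})$ still, or the trivial $O(\nout)$, whichever is smaller; when $\paramq > 0$ the \auxop excursions can route output tokens to vertices that then hold outputs from up to $\zeta$ streams at once, giving the $O(\zeta\nout)$ alternative. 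I would state the $\paramq = 0$ case as the clean base case — no excursions, so state only ever moves forward, and each vertex's local output is exactly what it produced — and then treat $\paramq > 0$ by charging each of the $\paramq$ excursions its $O(\lambda + \zeta/\delta)$ cost on top.

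\textbf{Main obstacle.} The delicate part is the congestion analysis when all $\zeta$ streams are simulated simultaneously through the \emph{same} cluster with only $\delta$-bounded degrees: one must verify that at every point in time the total volume of state (and of \maintok{s} being learned by chunk representatives, and of output tokens being shuffled) that any single vertex must send or receive is $O(L)$ times its degree, so that Theorem~\ref{thm:cs-routing}'s precondition (each $v$ is source/destination of $O(\text{congestion}) \cdot \deg(v)$ messages) is met and the $\frac1\delta$ factors are legitimate rather than hiding a $\frac{1}{\delta}$-degree bottleneck at a single low-degree vertex. I expect this bookkeeping — matching the two terms in the round bound to "chunk traversal" and "\auxop excursions" respectively while keeping the $\zeta$ parallel copies from colliding — to be where the real work lies; the local simulation logic (Steps 1) is essentially immediate from the definitions.
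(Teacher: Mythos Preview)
Your high-level plan matches the paper's: designated simulator vertices collect the \maintok{s}, pass the small algorithm state along a short chain, and make excursions to the original input holders only when \auxop is invoked. However, you have inverted the roles of $\lambda$ and $k/\lambda$. You build a $(\lambda,\vlist)$-vertex chain, which by Definition~\ref{def:vchain} has $k/\lambda$ elements, each responsible for $\lambda$ input vertices; the paper does the opposite, building a $(\frac{k}{\lambda},\vlist)$-chain $\vchain{\Sim}_j$ with $\lambda$ simulators, each responsible for $k/\lambda$ input vertices. With your orientation, each representative receives $\lambda T_{\max}$ \maintok{s} (not $\frac{k}{\lambda}T_{\max}$), the state traverses a chain of length $k/\lambda$ (not $\lambda$), and the two terms of the round bound come out as $\frac{T_{\max}}{\delta}(\zeta+\lambda)$ and $(\paramq+1)(\frac{k}{\lambda}+\frac{\zeta}{\delta})$, i.e.\ with $\lambda$ and $k/\lambda$ swapped relative to the theorem. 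The hypothesis $\lambda\le k/\zeta$, which in the paper guarantees that the $\zeta$ disjoint simulator chains of $\lambda$ vertices each fit inside $\vlist$, no longer has that meaning under your inversion. Your own arithmetic reflects the confusion: you say a representative handles $\le\lambda T_{\max}$ \maintok{s} but then claim it accumulates $O(\paramy T_{\max}\frac{k}{\lambda})$ output tokens, and your accounting ``$O(\lambda)$ inside each chunk times $k/\lambda$ chunks'' is $O(k)$ sequential hops, not the claimed $\frac{T_{\max}}{\delta}(\zeta+\frac{k}{\lambda})$.

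A second gap is the \auxop congestion you flag as the main obstacle but do not resolve. The issue is not merely per-round volume for Theorem~\ref{thm:cs-routing}: a single input vertex may receive \auxop requests from many of the $\zeta$ algorithms, and if it services them in arbitrary order, some $\algo{A}_j$ may wait behind $\Theta(\zeta\paramq)$ other responses before its first \auxop returns, and this can recur for each of its $\paramq$ calls, inflating the bound by a factor of $\paramq$. The paper's fix is explicit synchronization: Phase~2 is split into $\paramq+1$ steps, and in step $t$ every algorithm processes exactly its $t$-th \auxop (if any), then the chains advance until the next \auxop. This ensures each vertex handles at most one request per algorithm per step, so each step costs $(\lambda+\frac{\zeta}{\delta})\rcf$ and there are $\paramq+1$ of them. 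Without this scheduling argument the $(\paramq+1)(\lambda+\frac{\zeta}{\delta})$ term does not follow.
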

\begin{proof}
    The high-level idea of the proof is to simulate each of the
    algorithms $\algo{A}_1,\dots,\algo{A}_\zeta$ over several cluster
    vertices by using the fact that the state
    (which, by definition, has size polynomial in
    $L = O(\polylog n)$)
    of each algorithm
    $\algo{A}_j$ can be sent from one vertex in $\vlist$ to another in
    $\rct$ rounds, using \Cref{thm:cs-routing}.

    The simulation that processes the stream $S_j$ of
    algorithm $\algo{A}_j$ is coordinated by a
    $(\frac{k}{\lambda},\vlist)$-vertex chain $\vchain{\Sim}_j$
    (Definition~\ref{def:vchain}) that contains $\lambda$ vertices; we
    call these chains \emph{simulator chains}, and we make sure that
    they are disjoint. The simulator chain $\vchain{\Sim}_j$ collects
    all \maintok{s} for algorithm $\algo{A}_j$ so that, except for the
    at most \paramq occasions when the algorithm invokes \auxop,
    $\algo{A}_j$ can be simulated entirely by $\vchain{\Sim}_j$.

    The protocol proceeds in several phases, as follows.

    \paragraph{\textbf{Phase 0: Assigning simulator chains in 0 rounds.}}

    All vertices in $\vlist$ locally assign vertices to simulator
    chains $\vchain{\Sim}_j$ for each algorithm $\algo{A}_j$ in an
    identical and predetermined manner. Each vertex is assigned to
    some position in a simulator chain at most once and the simulator
    chains are disjoint. This is possible because the parameter
    $\beta$ of each chain is $\beta = \frac{k}{\lambda}$ and therefore
    contains $\lambda$ vertices, and $\lambda$ satisfies
    $\lambda\zeta \le k$.

    \paragraph{\textbf{Phase 1: Sending stream information in
    \texorpdfstring{%
        $\boldsymbol{
            \frac{T_{\max}}{\delta}
            (\zeta+\frac{k}{\lambda})\rcf
        }$
    }{not too many }rounds.}}

    For each algorithm $\algo{A}_j$, each simulator vertex
    $v = \chainmem{\Sim}{i}{_j}$ receives the \maintok{s} $\tau$ from
    $S_j$ held by all $u \in \chainres{\Sim}{v}{_j}$. Since the number
    of \maintok{s} in $\sigma_{j,u}$ is at most $T_{\max}$, each
    vertex sends at most $\zeta T_{\max}$ tokens during this phase,
    and each simulator vertex receives at most
    $\frac{k}{\lambda}T_{\max}$ tokens. Since each token has length
    $L = O(\polylog(n))$, by Theorem~\ref{thm:cs-routing} all of the
    \maintok{s} $\tau$ can be sent and received in
    $\frac{T_{\max}}{\delta}(\zeta+\frac{k}{\lambda})\rcf$
    rounds.

    At the end of this phase, each \maintok $\tau$ in the stream $S_j$
    is known by some $v \in \vchain{\Sim}_j$. Additionally, because of
    the input contiguity property of
    Definition~\ref{def:streaming-input-cluster}, the stream $S_j$ can
    be partitioned into $\lambda$ contiguous intervals
    $s_{j,1},\dots,s_{j,\lambda}$, where
    $
        s_{j,i} =
            \bigcup_{u\in\chainres{\Sim}{\chainmem{\Sim}{i}{_j}}{_j}}
            \sigma_{j,u}
    $,
    so that each $\chainmem{\Sim}{i}{_j}$ knows the \maintok{s} $\tau$
    in $s_{j,i}$. Because the indices of the endpoints of the
    intervals $\sigma_{j,u}$ may not be known, it is possible that the
    indices of the endpoints of $s_{j,i}$ also are not known even to
    $\chainmem{\Sim}{i}{_j}$~--- however, $\chainmem{\Sim}{i}{_j}$
    knows all $\tau \in s_{j,i}$, knows that this set of tokens is
    guaranteed to be contiguous, and knows the identities of the other
    members of $\vchain{\Sim}_j$.

    \paragraph{\textbf{Phase 2: Simulating the algorithms in
    \texorpdfstring{%
        $\boldsymbol{
            (\paramq+1)(\lambda+\frac{\zeta}{\delta})\rcf
        }$
    }{also not too many }rounds.}}

    The simulator chain $\vchain{\Sim}_j$ can now begin simulating
    $\algo{A}_j$. Initially, all $v = \chainmem{\Sim}{i}{_j}$ for
    $i > 1$ begin in an \emph{inactive} state, in which they wait to
    be \emph{activated} by the previous vertex
    $\chainmem{\Sim}{i - 1}{_j}$. The vertices
    $\chainmem{\Sim}{1}{_j}$ are activated as soon as Phase~1 finishes
    and locally construct the starting state of the algorithm
    $\algo{A}_j$. Upon activation, the simulator vertex
    $\chainmem{\Sim}{i}{_j}$ behaves as follows.
    \begin{itemize}
        \item If the next operation of $\algo{A}_j$ is to perform
            \readop on a \maintok $\tau_l$, $\chainmem{\Sim}{i}{_j}$
            simulates $\algo{A}_j$ using the value of $\tau_l$ if it
            is known (i.e., if $\tau_l \in s_{j,i}$). If the value is
            not known, $\chainmem{\Sim}{i}{_j}$ sends the state of
            $\algo{A}_j$ to the next simulator vertex
            $\chainmem{\Sim}{i + 1}{_j}$ and activates it.
        \item If the next operation of $\algo{A}_j$ is to perform
            \auxop on the \auxtok{s} associated with $\tau_l$,
            $\chainmem{\Sim}{i}{_j}$ simulates $\algo{A}_j$ by sending
            the state of $\algo{A}_j$ to the vertex $v$ that initially
            held $\tau_l$ from $S_j$. The vertex $v$ simulates
            $\algo{A}_j$ until \readop is performed on the next
            \maintok $\tau_{l + 1}$, which it can do since it holds
            all needed \auxtok{s}. Then, $v$ returns the updated state
            of $\algo{A}_j$ to $\chainmem{\Sim}{i}{_j}$. The vertex
            $v$ does not return any output tokens emitted by the
            simulated algorithm; instead, it stores these locally.
        \item If the next operation of $\algo{A}_j$ is a
            \writeop on token $\rho_l$, vertex
            $\chainmem{\Sim}{i}{_j}$ simulates $\algo{A}_j$ by storing
            $\rho_l$ locally.
    \end{itemize}

    During the simulation, most of the congestion occurs around
    vertices that receive many requests to simulate \auxop. The total
    number of \auxop operations is $O(\zeta\paramq)$, thus we hope to
    handle them all in $\frac{\zeta\paramq}{\delta}\rcf$ rounds.
    However, a $v \in \vlist$ can receive up to $\zeta\paramq$ such
    requests, thus if $v$ dispatches the responses in an arbitrary
    order, the request could take $\frac{\zeta\paramq}{\delta}\rcf$
    rounds to answer. This is because a vertex saturated with such
    requests may simulate all $O(\zeta\paramq)$ of a large fraction of
    \auxop operations of other algorithms before processing even the
    first \auxop of some algorithm $\algo{A}_j$. As $\algo{A}_j$
    cannot continue and issue additional \auxop queries until it
    receives the first response, $\algo{A}_j$ may encounter such
    delays during a large fraction of its $\paramq$ requests to
    simulate \auxop operations and expend
    $\omega(\frac{\zeta\paramq^2}{\delta})$ rounds waiting for those
    requests. This would increase the
    $(\paramq + 1)(\lambda + \frac{\zeta}{\delta})\rcf$ term of the
    complexity by a factor of \paramq. Thus we sequence the simulation
    of \auxop operations such that the first \auxop of every algorithm
    is complete before vertices begin the second \auxop of some
    algorithms, ensuring delays cannot not accumulate.

    More precisely, we split Phase~2 into $\paramq + 1$ steps of
    $(\lambda + \frac{\zeta}{\delta})\rcf$ rounds each.
    Within each step, the simulation of each algorithm proceeds until
    the next time \auxop is performed on some sequence of \auxtok{s}
    $\auxt_i$ (or until the end of the algorithm, if there are no more
    invocations of \auxop), then pauses until the beginning of the
    next step. During step $t$ for $1 \le t \le \paramq$, the
    following actions are performed:
    \begin{enumerate}
        \item Any vertex that received a request to simulate
            $\algo{A}_j$ invoking \auxop on $\auxt_i$ performs the
            simulation until \readop is invoked on the next \maintok
            $\tau_{i + 1}$, then sends the updated state of
            $\algo{A}_j$ to the appropriate vertex in
            $\vchain{\Sim}_j$.
        \item Upon receiving the response, each chain
            $\vchain{\Sim}_j$ continues simulating $\algo{A}_j$ until
            the next time \auxop is invoked on a sequence of
            \auxtok{s} $\auxt_{i'}$. Then $\vchain{\Sim}_j$ sends the
            current state of $\algo{A}_j$ to the vertex that holds
            $\auxt_{i'}$.
    \end{enumerate}

    In each step and for each algorithm, $v \in \vlist$ responds to
    and receives at most one request to process a sequence of
    \auxtok{s}, thus the these can be delivered in
    $\frac{\zeta}{\delta}\rcf$ rounds. Additionally, the time required
    to simulate $\algo{A}_j$ before it reads the next \auxtok is
    $\lambda\rcf$, as the only communication required is propagation
    of the state of $\algo{A}_j$ through the at most $\lambda$
    vertices in $\vchain{\Sim}_j$. Phase~2 overall therefore finishes
    in $(\paramq + 1)(\lambda + \frac{\zeta}{\delta})\rcf$ rounds.

    To prove that the distribution of output tokens satisfies the
    guarantees of the theorem, consider the $O(\nout)$ output tokens
    of $\algo{A}_j$. Each $v = \chainmem{\Sim}{i}{_j}$ knows
    $O(\paramy)$ tokens for each $\tau \in s_{j,i}$, of which there
    are
    $
        O(T_{\max}|\chainres{\Sim}{i}{_j}|)
        = O(T_{\max}\frac{k}{\lambda})
    $
    total. Also, if $\paramq \ne 0$, any vertex that processed
    \auxtok{s} holds $O(\paramy)$ tokens for each of the
    $O(T_{\max}\zeta)$ sequences of \auxtok{s} processed. In both
    cases, a vertex knows at most all $O(\nout)$ output tokens for a
    given $\algo{A}_j$. So the number of output tokens held by any
    vertex is $O(\min\{\nout, \paramy T_{\max}\frac{k}{\lambda}\})$ if
    $\paramq = 0$. Otherwise, the number of output tokens known by any
    vertex is
    $O(\min\{\zeta\nout,\paramy T_{\max}(\frac{k}{\lambda}+\zeta)\})$,
    which, since $\zeta$ is bounded by $\frac{k}{\lambda}$, is
    $O(\min\{\zeta\nout,\paramy T_{\max}\frac{k}{\lambda}\})$.
\end{proof}

\section{Partition Trees}
\label{sec:partitionTrees}
We describe two classes of partition trees. Due to the low bandwidth
of \congest and the need to incorporate information from vertices
outside high-conductance expanders, our partition trees require
stronger balancing properties than those initially presented in
\cite{CFG+21} for \clique. We demonstrate that, within a suitable
communication cluster, the partition tree needed to list all instances
of $K_p$, for $p \ge 3$, can be constructed in $n^{1 - 2/p \rc}$
rounds.

\begin{definition}[\hspace{-1pt}\cite{CFG+21}, $p$-partition tree]%
\label{def:ptree}
    Let $G = (V, E)$ be a graph with $n = |V|$, $m = |E|$, and let
    $p \leq \log n$. A \emph{$p$-partition tree} $T = T_{G,p}$ is a
    tree of $p$ layers (depth $p - 1$), where each non-leaf node has
    at most $x = n^{1/p}$ children. Each tree node is associated with
    a partition of $V$ consisting of at most $x$ parts.

    The partition associated with the root of $T$ is denoted
    $U_{\emptyset}$. Given a node with partition
    $U_{\left( \ell_1,\dots,\ell_{i - 1} \right)}$, the partition
    associated with its $j$th child, for $0 \leq j \leq x - 1$, is
    $U_{\left( \ell_1,\dots,\ell_{i - 1}, j \right)}$.

    The parts of each $U_{\left( \ell_1,\dots,\ell_i \right)}$ are
    denoted $U_{\left( \ell_1,\dots,\ell_i \right),j}$, for
    $0 \leq j \leq x - 1$. For each $0 \leq j \leq x - 1$, also denote
    $
        {\tt parent} (
                U_{\left( \ell_1,\dots,\ell_{i - 1},\ell_i \right),j}
            ) =
        U_{\left( \ell_1,\dots,\ell_{i - 1} \right), \ell_i}
    $
    and recursively define the set of ancestor parts for
    $S = \left( \ell_1,\dots,\ell_{i - 1},\ell_i \right)$
    \[
        {\tt anc}(U_{S,i}) =
        \begin{cases}
            {\tt anc}({\tt parent}(U_{S,i}))
            \cup \{U_{S,i}\}
            &
            S \ne \emptyset
            \\
            \{U_{{\emptyset},i}\}
            &
            S = \emptyset
        \end{cases}
    \]
\end{definition}

The following distributes the work of detecting all instances of a
subgraph $H$ by assigning the task of learning all edges between parts
in ${\tt anc}(U_{S,i})$ for each leaf part $U_{S,i}$ to some vertex.
The idea of the proof is to consider an instance $H'$ of $H$ in $G$
with vertices $v_0,v_1,\dots,v_{p - 1}$ and trace a path from the root
of a $p$-partition tree $T$ to the leaf layer, choosing first the part
that contains $v_0$, then the part that contains $v_1$, etc. As each
vertex is in a distinct part, each edge of $H'$ must be be present
between some pair of distinct parts. We refer the reader to the proof
of~\cite[Theorem~11]{CFG+21}.

\begin{theorem}%
\label{thm:cfg-tree-can-list}
    For any instance $H'$ of a $p$-vertex subgraph $H$ in $G = (V, E)$
    and any $p$-partition tree $T$ of $G$, there exists a leaf part
    $U_{S,i}$ in $T$ for which all edges of $H'$ are contained in the
    union
    $
        \bigcup_{U \in {\tt anc}(U_{S,i})}
        \bigcup_{W \in {\tt anc}(U_{S,i}) : W \ne U} E(U,W)
    $.
\end{theorem}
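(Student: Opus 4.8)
The statement is essentially Theorem 11 of \cite{CFG+21} adapted to the current notation, and the proof is a direct tracing argument through the partition tree. I would prove it by explicitly constructing the leaf part $U_{S,i}$ from the given instance $H'$, using the fact that each layer of $T$ is a partition of $V$ and hence each vertex of $H'$ lies in exactly one part at each layer.

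\textbf{Step 1: Fix a labelling of the instance.} Let $H'$ have vertex set $\{v_0, v_1, \dots, v_{p-1}\} \subseteq V$; since $H$ has $p$ vertices and $T$ has $p$ layers this is the right count. I will use $v_i$ to select which child to descend into at layer $i$.

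\textbf{Step 2: Descend the tree.} Starting at the root partition $U_{\emptyset}$, inductively define a path of parts. At the root, each $v_\ell$ lies in exactly one part $U_{\emptyset, j}$ since $U_{\emptyset}$ partitions $V$; let $\ell_1$ be the index such that $v_0 \in U_{\emptyset, \ell_1}$, and descend to the child with partition $U_{(\ell_1)}$. Having reached the node with partition $U_{(\ell_1,\dots,\ell_{i-1})}$, let $\ell_i$ be the index of the part containing $v_{i-1}$, i.e.\ $v_{i-1} \in U_{(\ell_1,\dots,\ell_{i-1}),\ell_i}$, and descend to the child with partition $U_{(\ell_1,\dots,\ell_{i-1},\ell_i)}$. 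After $p-1$ steps we reach a leaf node, whose partition has parts indexed $0,\dots,x-1$; let $S = (\ell_1,\dots,\ell_{p-1})$ and let $i$ be the index with $v_{p-1} \in U_{S,i}$. This is the claimed leaf part.

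\textbf{Step 3: Each $H'$-vertex sits in a distinct ancestor part, and close with the edge argument.} By the construction, for each $0 \le \ell \le p-1$ the vertex $v_\ell$ lies in the ancestor part chosen at layer $\ell+1$: unwinding the recursive definition of ${\tt anc}(U_{S,i})$, these are exactly $U_{(\ell_1),\ell_2}, \dots$, wait — more carefully, ${\tt anc}(U_{S,i})$ consists of $U_{S,i}$ together with ${\tt parent}(U_{S,i}) = U_{(\ell_1,\dots,\ell_{p-2}),\ell_{p-1}}$, its parent, and so on up to $U_{\emptyset,\ell_1}$, giving exactly $p$ parts, one per layer, and $v_\ell$ lies in the one at layer $\ell+1$. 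The $p$ vertices $v_0,\dots,v_{p-1}$ therefore lie in $p$ pairwise-distinct members of ${\tt anc}(U_{S,i})$ — distinct because they come from different layers. Now take any edge $(v_a, v_b)$ of $H'$; it has one endpoint in an ancestor part $U$ and the other in a different ancestor part $W \ne U$, so $(v_a,v_b) \in E(U,W)$, which is one of the sets in the union $\bigcup_{U \in {\tt anc}(U_{S,i})} \bigcup_{W \in {\tt anc}(U_{S,i}): W \ne U} E(U,W)$. Hence every edge of $H'$ lies in this union, proving the theorem.

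\textbf{Main obstacle.} There is no real mathematical difficulty here; the only thing to be careful about is the bookkeeping of indices — matching the layer at which each $v_\ell$ is ``used'' to select a child versus the layer whose ancestor part actually \emph{contains} $v_\ell$, and verifying that ${\tt anc}(U_{S,i})$ as defined recursively really does pick out one part from every layer along the root-to-leaf path (so that the chosen parts are genuinely pairwise distinct). Once the indexing is pinned down, the edge-containment conclusion is immediate from the definition of $E(U,W)$, and I would simply cite \cite[Theorem~11]{CFG+21} for the detailed version as the excerpt already suggests.
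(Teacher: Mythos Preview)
Your proposal is correct and follows exactly the approach the paper sketches: trace a root-to-leaf path by choosing at layer $i$ the part containing $v_{i-1}$, observe that ${\tt anc}(U_{S,i})$ then contains one part per layer with $v_\ell$ in the $\ell$-th, and conclude that every edge of $H'$ joins two distinct ancestor parts. The paper itself gives only this sketch and defers the details to \cite[Theorem~11]{CFG+21}, just as you suggest.
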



\subsection{Partition Trees for\texorpdfstring{
    $\boldsymbol{K_3}$
}{
    K\_3
}Listing}

Theorem~\ref{thm:cfg-tree-can-list} shows that the leaves of a
$p$-partition tree can be used to distribute the work required for
subgraph listing. In~\cite{CFG+21}, $p$-partition trees are further
constrained to ensure that the number of edges from each part
$U_{S,j}$ to its ancestors ${\tt anc}(U_{s,j})$ is balanced to within
some small, additive error term of at most $n$. However, in \congest,
we are forced to reduce the permissible error to $n^{2/3}$. Following
is our definition of partition tress with the stronger constraints.

\begin{definition}[$H$-partition Tree]%
\label{def:htree-congest}
    Let $G' = (V', E')$ be a subgraph of $G = (V, E)$ with $n = |V|$,
    $k = |V'|$, $m = |E'|$, and let $H = (V_H, E_H)$ be a graph with
    $p = |V_H| \leq \log n$ vertices
    $\left\{ z_0,\dots,z_{p - 1} \right\}$, and denote
    $
        d_i = \left|\left\{
            \left\{ z_i, z_t \right\} \in E_H \; | \; t < i
        \right\}\right|
    $
    for each $0 \leq i \leq p - 1$, $x = k^{1/p}$ and
    $\tilde{m} = \max\left\{ m, kx \right\}$. \emph{An $H$-partition
    tree} $T = T_{G',H}$ is a $p$-partition tree with the following
    additional constraints, for some constants $c_1, c_2, c_3$:
    \begin{enumerate}
        \item \consdeg: For every part $U_{S,j}$, it holds that
            $|E(U_{S,j},V')| \leq c_1\tilde{m}/x$.
        \item \consupdeg: For every part $U_{S,j}$, and for all of its
            ancestor parts $W \in {\tt anc}(U_{S,j})$,\\
            $
                \sum_{W \in {\tt anc}(U_{S,j})\setminus\{U_{S,j}\}}
                    \left| E( U_{S,j}, W ) \right| \leq
                    c_2d_i\tilde{m}/x^2 + c_3pk/x
            $.
        \item \conssz: For every part $U_{S,j}$, it holds that
            $|U_{S,j}| \leq c_3k/x$.
    \end{enumerate}
\end{definition}

We show that the simulation techniques of
Section~\ref{sec:special-algo} can efficiently construct the more
strongly constrained $p$-partition tree for $K_3$ listing in \congest.
We describe the properties of a cluster that allow us to construct a
$K_3$-partition tree on it and formally state the guarantees of the
$K_3$-partition-tree construction. We use
$(\phi, \delta)$-communication clusters
(Definition~\ref{def:com-cluster}) with suitable values of $\phi$ and
$\delta$. As before, for $C$ and its corresponding $\vlist$, we denote
$n = |V|$, $k = |\vlist|$, $K = |\genericClusterNodes|$. We also let
$\mu$ be the average communication degree of vertices in $\vlist$ and
$\vhd$ be the set of $v \in \vlist$ with at least half of the average
communication degree.

\begin{definition}[$K_3$-Compatible Cluster]%
\label{def:k3-cluster}
    A \emph{$K_3$-compatible cluster} is a
    $(\phi, \delta)$-communication cluster
    $\genericCluster = (\genericClusterNodes, \genericClusterEdges)$
    in $G = (V, E)$ for $\phi = O(\polylog n)$ and
    $\delta = K^{1/3}$.
\end{definition}

\begin{theorem}%
\label{thm:congest-tree-algo-tri}
    Given a $K_3$-compatible cluster
    $\genericCluster = (\genericClusterNodes, \genericClusterEdges)$,
    there exists a deterministic \congest algorithm on
    $\genericCluster$ that, in $k^{1/3}\rcf$
    rounds, constructs a $K_3$-partition tree $T$ of $C[\vlist]$,
    such that:
    \begin{itemize}
        \item The root and first layer of $T$ are known to all
            vertices $\vlist$.
        \item Each vertex $v \in \vhd$ knows
            $O(\frac{1}{\mu}\comdeg v)$ parts of
            the leaf layer of $T$.
        \item Each part of the leaf layer is known to some vertex
            $v \in \vhd$.
    \end{itemize}
\end{theorem}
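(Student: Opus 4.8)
The overall strategy is to build the $K_3$-partition tree layer by layer, and at each layer to realize the refinement of every current part into at most $x = k^{1/3}$ child parts by a single invocation of the streaming-simulation machinery of Theorem~\ref{thm:simulate-special}. So the first task is to encode "split a part $U$ into balanced subparts" as a \specialalgo. The natural encoding, following the sketch in the introduction, is: the vertices of $\vlist$ that lie in $U$ are ordered by identifier; the \maintok for such a vertex $v$ reports the aggregate edge contribution of $v$ toward the relevant ancestor set (and toward $V'$ for the \consdeg constraint), while the \auxtok{s} report the per-neighbor (per-ancestor-part) breakdown. The streaming algorithm greedily fills an "incomplete" child part: at each \maintok it checks whether adding $v$ wholesale keeps all three budgets ($c_1\tilde m/x$ for \consdeg, $c_2 d_i \tilde m/x^2 + c_3 pk/x$ for \consupdeg, $c_3 k/x$ for \conssz) satisfied; if yes it proceeds using only the \maintok, and if not it invokes \auxop, adds neighbors/vertices one at a time, closes the part when a budget is about to overflow, and opens the next one. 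The output stream $R$ is the sequence of (vertex, child-part-index) assignments.

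The second task is the counting argument that this greedy algorithm (i) produces at most $x$ parts, so the tree stays $p$-ary, and (ii) respects all three constraints, i.e. it really outputs an $H=K_3$-partition tree. Boundedness of the number of parts is the usual averaging argument: total edge mass toward the ancestors across the whole part being split is at most (roughly) $\tilde m/x$ times a constant, each closed part absorbs a constant fraction of its budget before closing, and the additive slack terms $c_3 pk/x$, $c_3 k/x$ are exactly calibrated so that the "one vertex can blow a budget by at most its own degree" overshoot is absorbed — here I would lean on the corresponding lemma/analysis of \cite{CFG+21} and just verify the error term tightens from $n$ to $n^{2/3}$ because within a $K_3$-compatible cluster $\delta = K^{1/3}$ bounds the per-vertex communication degree appropriately and $x = k^{1/3}$. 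I would state the invariants (every closed part is within budget; the running part never exceeds budget plus one vertex's contribution) and induct over the stream.

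The third task is to instantiate Theorem~\ref{thm:simulate-special} with the right parameters and check the round bound collapses to $k^{1/3}\rcf$. For the construction of one layer: we run $\zeta$ algorithms in parallel, one per part at that layer, but crucially $\sum_j (\text{size of part } j) = k$, so the streams partition the $k$ vertices of $\vlist$; the token length is $L = O(\polylog n)$ since each token is a handful of degree counts; $T_{\max}$, $\nin$, and $\zeta$ are all controlled because a part has $O(k/x)$ vertices by \conssz and $x \ge$ (number of parts we split into). The key quantitative points are: $\paramy = O(1)$ (one output token per vertex handled), $\paramq$ is bounded by the number of parts produced which is $O(x) = O(k^{1/3})$ per algorithm (each \auxop happens only when a part is about to close, plus boundary cases), and we choose $\lambda$ to balance the two terms $\frac{T_{\max}}{\delta}(\zeta + \frac{k}{\lambda})$ and $(\paramq+1)(\lambda + \frac{\zeta}{\delta})$. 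With $\delta = K^{1/3} \ge k^{1/3}$ (up to the cluster's degree guarantees), $T_{\max} = O(k/x) = O(k^{2/3})$, and $\zeta \le x = k^{1/3}$, plugging in $\lambda \approx k^{1/3}$ makes the first term $\frac{k^{2/3}}{k^{1/3}} \cdot (k^{1/3} + k^{2/3}) \approx k^{1/3}\cdot k^{2/3}$... — this is the place I expect to have to be careful, so I would track the arithmetic honestly: the right choice of how to partition a layer's work among the $\lambda$-vertex simulator chains, and the observation that we need only $O(\log n) = p-1$ layers (a constant), so the per-layer cost $k^{1/3}\rcf$ multiplied by $O(1)$ layers stays $k^{1/3}\rcf$. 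The output-distribution claims follow directly from the output-token guarantees of Theorem~\ref{thm:simulate-special}: with $\paramy = O(1)$ and the simulator chains spread over $\vhd$, each $v \in \vhd$ holds $O(\paramy T_{\max} \frac{k}{\lambda})$ tokens, which I would match against $O(\frac{1}{\mu}\comdeg v)$ by assigning simulator-chain positions in proportion to communication degree — vertices in $\vhd$ have $\comdeg v = \Theta(\mu)$, so "$O(\frac1\mu \comdeg v)$" and "$O(1)$ per unit of the $\Theta(k/\lambda)$-length sub-stream" coincide up to constants. That every leaf part is known to some $v \in \vhd$ is just coverage: the simulator chains together cover the whole output stream and they consist of $\vhd$ vertices.

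The main obstacle, as flagged above, is the bookkeeping that ties together (a) the three budget parameters of Definition~\ref{def:htree-congest} with the per-vertex degree bound forced by $\delta = K^{1/3}$, so that the greedy streaming algorithm genuinely produces $\le x$ parts with the tightened $n^{2/3}$-type slack, and (b) the parameter instantiation in Theorem~\ref{thm:simulate-special} so that all of $T_{\max}/\delta \cdot (\zeta + k/\lambda)$ and $(\paramq+1)(\lambda + \zeta/\delta)$ land at $k^{1/3}\rct$ simultaneously under one choice of $\lambda$; getting (b) right is mostly a matter of pushing $\delta^{-1} \le k^{-1/3}$ and $\zeta, x \le k^{1/3}$ through carefully, and getting (a) right is where the bulk of the genuinely new verification (versus citing \cite{CFG+21}) lives.
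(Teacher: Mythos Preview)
Your high-level framework (layer-by-layer construction, greedy part-filling, simulate via Theorem~\ref{thm:simulate-special}) matches the paper, but the instantiation has real gaps that prevent the round bound from closing.

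\textbf{The parametrization is wrong, and this is not just bookkeeping.} You take $T_{\max} = O(k/x) = O(k^{2/3})$, $\zeta \le k^{1/3}$, and $\paramq = O(k^{1/3})$. With these, as you yourself notice, the first term of Theorem~\ref{thm:simulate-special} becomes $\frac{k^{2/3}}{k^{1/3}}\cdot k^{2/3} = k$, which is fatal. The paper's insight is that for $K_3$ each vertex $v$ holds only its \emph{own} single \maintok consisting of $|E(v,U')|$ for the $O(1)$ ancestors $U'$; hence $T_{\max} = O(1)$. There are no per-neighbor \auxtok{s} at all: when a counter overflows upon reading $v$'s token, the algorithm simply closes the current part and starts a fresh one containing $v$ alone. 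This is always legal because a single vertex's contribution is absorbed by the additive slack (for \consdeg, $\deg_{V'}(v) \le k \le \tilde m/x$; for \consupdeg, $\sum |E(v,W)| \le c_3 pk/x$ by \conssz on the ancestors). So $\paramq = 0$. You have imported the \auxtok{} mechanism from the $K_{p\ge 4}$ case (where a cluster vertex may be responsible for many outside vertices), but for $K_3$ it is unnecessary and is precisely what breaks your arithmetic. Separately, $\zeta$ at the leaf layer is the number of middle-layer parts, which is $x^2 = k^{2/3}$, not $k^{1/3}$. With the correct $T_{\max}=O(1)$, $\paramq=0$, $\zeta=O(k^{2/3})$, $\lambda=k^{1/3}$, both terms of Theorem~\ref{thm:simulate-special} come out to $k^{1/3}\rcf$.

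\textbf{The output-distribution argument does not work as stated.} You propose to place simulator chains on $\vhd$ with positions ``in proportion to communication degree'' and then assert that $\comdeg v = \Theta(\mu)$ for $v \in \vhd$; but $\vhd$ only guarantees $\comdeg v \ge \mu/2$, not an upper bound, so the ``$O(\frac{1}{\mu}\comdeg v)$ leaves'' claim does not follow from chain membership alone. Moreover, Theorem~\ref{thm:simulate-special} does not let you choose which vertices sit in which chain based on degree information that is not globally known. The paper instead treats output distribution as a \emph{separate} step after the simulation: the root and middle layers ($O(k^{2/3})$ tokens) are broadcast to all of $\vlist$ by a dedicated two-phase amplification (Lemma~\ref{lem:k3-amplify-distribute}), and the $O(k)$ leaf parts are rebalanced by a second tiny streaming algorithm (Lemma~\ref{lem:k3-chain-distribute}) that reads degrees and emits intervals of leaves per vertex. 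Your plan omits the broadcast entirely, yet the theorem requires the first two layers to be known to \emph{all} of $\vlist$.

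\textbf{A smaller correction.} The tightening of the additive error from $n$ to $O(k^{2/3})$ has nothing to do with $\delta = K^{1/3}$, which is a \emph{lower} bound on communication degree. It comes from the \conssz constraint: every ancestor part has at most $c_3 k/x$ vertices, so $\sum_{W \in {\tt anc}} |E(v,W)| \le c_3 p k/x = O(k^{2/3})$.
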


We begin by proving that there is a \specialalgo for constructing one
layer of an $H$-partition tree. This will then allow us to apply the
algorithm of Theorem~\ref{thm:simulate-special} for simulating
\specialalgos to efficiently construct one layer of a $K_3$-partition
tree of $C[\vlist]$.

\begin{lemma}%
\label{lem:k3-htree-streaming}
    There exist constants $c_1,c_2,c_3$ such that given a graph
    $G = (V, E)$ with $|V| = n$, subgraphs $G' = (V', E')$ and
    $H = (V_H, E_H)$ with $|V'| = K$ and $|V_H| = p < \log n$,
    $0 \le i < p$ levels of an $H$-partition tree $T_{G',H}$ for
    $c_1,c_2,c_3$, and a part $U$ in the $i$-th level of $T$, there
    exists a \specialalgo with parameters $L = O(\polylog n)$,
    $\nin = \tilde{O}(n)$, $\nout = O(n^{1/p})$, $\paramq = 0$,
    $\paramy = \nout$ that reads a stream $S$ and produces a stream
    $R$, where:
    \begin{itemize}
        \item The elements of $R$ are the endpoints of intervals of
            vertex numbers that define a valid child partition of $U$
            in some $H$-partition tree for $c_1,c_2,c_3$.
        \item Stream $S$ contains $|E(v, U')|$ for each part
            $U' \in {\tt anc}(U)$ for each $v \in V'$ in order of
            increasing vertex number.
    \end{itemize}
\end{lemma}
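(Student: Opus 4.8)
The plan is to realize the high-level ``one incomplete part at a time'' procedure described in the introduction as a formal \specialalgo. First I would fix the input encoding. The stream $S$ is a concatenation, over $v \in V'$ in increasing vertex-number order, of a block for $v$: the \maintok $\tau_v$ carries the vertex number of $v$ together with the aggregate $\sum_{U' \in {\tt anc}(U)} |E(\{v\}, U')|$ (the total number of ancestor-edges that adding $v$ to the current child part would contribute), and the associated \auxtok{s} $\auxt_{v,1},\dots$ carry the per-ancestor values $|E(\{v\}, U')|$ for $U' \in {\tt anc}(U)$, one token per ancestor. Since $p \le \log n$, each block has at most $\polylog n$ \auxtok{s}, each $O(\log n)$ bits, so $L = O(\polylog n)$ and $\nin = \tilde O(n)$ as required; the fact that ${\tt anc}(U)$ has $\le p$ parts and $i$ levels are already fixed is what lets us compute these tokens. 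Because ${\tt anc}(U)$ is the same for every child we are trying to build (children of $U$ all share the ancestor set ${\tt anc}(U)$ minus themselves, which is already determined by $U$), these $|E(\{v\},U')|$ values are exactly the quantities whose partial sums along a part $U_{S,j}$ the \consupdeg constraint bounds.

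Next I would describe the algorithm's local state, which must be $\polylog(L)$-bounded. It maintains: the index $j$ of the child part currently being filled; the left endpoint of the current interval; and, for each ancestor $U' \in {\tt anc}(U)$, the running total of $|E(\{v\}, U')|$ accumulated over vertices placed in part $j$ so far, plus one running total for $|E(U_{S,j}, V')|$ (for \consdeg) and a count $|U_{S,j}|$ (for \conssz). That is $O(p) = O(\log n)$ counters of $O(\log n)$ bits each. The main loop: call \readop to get the next \maintok $\tau_v$. Using the aggregate in $\tau_v$, tentatively add $v$ (and, in the cluster application, its block as a unit --- here each block is a single vertex so this is just $v$) to part $j$ and check whether all three constraints \consdeg, \consupdeg, \conssz still hold with the stated constants $c_1,c_2,c_3$ (choosing $c_1,c_2,c_3$ large enough is where the slack comes from). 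If they do, commit: update the counters, advance. If some constraint would be violated, \emph{close} part $j$ --- emit via \writeop the right endpoint (the number of the last committed vertex) and the left endpoint of part $j{+}1$ --- reset the counters, increment $j$, and then re-read $v$ into the fresh part $j{+}1$. Crucially, $\paramq = 0$: we never actually invoke \auxop, because the \maintok alone suffices for the all-or-nothing decision. (In the cluster version one does need \auxtok{s} to split a block finer than one vertex, but for this lemma a block is a single vertex, so the coarse test is exact and no \auxop is needed. If one prefers to keep \auxtok{s} in the stream for uniformity with the cluster simulation, they are simply never read.)

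Then I would verify the output guarantees. Each \writeop emits $O(1)$ tokens, and we emit tokens only when closing a part; the number of parts is at most $x = K^{1/p}$ by the \conssz bound together with the pigeonhole argument $|V'| = K$ and $|U_{S,j}| \ge 1$, so $\nout = O(K^{1/p}) = O(n^{1/p})$; since all \writeop{s} can in the worst case fall between two consecutive \readop{s} (if a block triggers a close), $\paramy = \nout$ is the honest bound. The space is $O(\polylog n)$ as computed. What remains is correctness: that the emitted intervals form a \emph{valid} child partition of $U$, i.e.\ satisfy Definition~\ref{def:htree-congest}'s three constraints for \emph{some} constants $c_1,c_2,c_3$. \conssz and \consdeg are immediate from the closing rule (we close before exceeding $c_3 k/x$ vertices, resp.\ $c_1 \tilde m/x$ edges). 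The content of the lemma is \consupdeg: I would argue that along part $U_{S,j}$ we never let $\sum_{W} |E(U_{S,j},W)|$ pass $c_2 d_i \tilde m/x^2$ by our own stopping rule, and when a vertex $v$ forces a close, the overshoot from committing nothing extra is $0$; the $+c_3 pk/x$ additive slack absorbs the worst single-vertex jump. The delicate point, and the one I expect to be the main obstacle, is showing the partition does not fragment into \emph{too many} parts: each constraint, when it is the binding one, must allow $\Omega(K/x)$ vertices (amortized) into a part, so that $\le x$ parts suffice. This is a counting argument using $\tilde m = \max\{m, kx\}$ precisely to guarantee $\tilde m / x \ge k$, hence the \consdeg budget per part is $\ge \Omega(k)$ while average degree considerations bound how fast it fills; the same $\tilde m$ choice feeds the $d_i \tilde m / x^2$ term. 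I would mirror the corresponding counting step in \cite[proof of Theorem~11]{CFG+21}, adapted to the tighter $n^{2/3}$-type error, and this bookkeeping --- pinning down that $c_1, c_2, c_3$ can be chosen uniformly so that every layer stays within $x$ parts --- is the crux.
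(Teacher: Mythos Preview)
Your approach is essentially the paper's: a greedy, counter-based sweep that opens a new part whenever one of the three constraints would overflow, and the crux you identify---bounding the number of parts by at most $x$---is exactly what the paper proves via a charging argument (a part closed by \consdeg carries $\ge (c_1-1)\tilde m/x$ edges of $2m$ total; one closed by \consupdeg carries $\ge c_2 i\tilde m/x^2$ ancestor-edges of $\le c_1 i\tilde m/x$ total; one closed by \conssz is trivially bounded), yielding explicit constants $c_1=9,\ c_2=36,\ c_3=4$.

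One small inconsistency in your encoding: your \maintok carries only the aggregate $\sum_{U'\in{\tt anc}(U)}|E(\{v\},U')|$, yet you also want to maintain a counter for $|E(U_{S,j},V')|$ (for \consdeg) and per-ancestor running totals. Neither can be updated from that aggregate alone, and with $\paramq=0$ you never read the per-ancestor \auxtok{s}. The fix is what the paper does: put the total degree and the per-ancestor degrees directly as \maintok{s} (there are $O(p)=O(\log n)$ of them per vertex, so $\nin=\tilde O(n)$ still holds and no \auxtok{s} are needed). Also note that \consupdeg is a single sum over ancestors, so you only need one counter for it, not one per ancestor.
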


\begin{proof}
    We process vertices one at a time, maintaining counters for each
    of the balancing constraints \consdeg, \consupdeg, and \conssz in
    the $H$-partition tree definition. We greedily add vertices to the
    current part until it is no longer possible to do so without
    causing a counter to exceed its maximum value, at which point we
    start a new part.

    We show that this counter-based algorithm is a
    \specialalgo with the desired parameters. Upon initialization, the
    algorithm records in its state that the first interval begins with
    vertex 1. The algorithm then maintains one counter for each of the
    3 constraints \consdeg, \consupdeg, and \conssz in
    Definition~\ref{def:htree-congest} on the size and number of edges
    in an $H$-partition tree part. Each vertex is represented in the
    input stream $S$ by a constant number of tokens that give its
    number, its total degree, and its degree into each part in
    ${\tt anc}(U)$; when the algorithm processes a vertex, it reads
    these tokens and adds them to the corresponding counters. If upon
    processing vertex $i$ any counter exceeds the maximum value, the
    algorithm outputs that vertex $i - 1$ is the last vertex of the
    previous interval and stores in its state that vertex $i$ is the
    first vertex of the next one; the algorithm then resets the values
    of all counters to zero and adds the corresponding values received
    from vertex $i$.

    The entire state of the algorithm is maintained by $\tilde{O}(1)$
    counters, each of which requires $O(\log n)$ bits, so
    $L = O(\polylog n)$. The algorithm only uses the $O(N)$ main
    tokens in the stream, so it never performs \auxop; therefore,
    $\paramq = 0$. The total number of \writeop operations performed
    by the algorithm is $\nout$, so the number of \writeop operations
    between any two \readop operations is bounded by $\paramy = \nout$
    as well. The number of input tokens is linear in the size of the
    graph, so $\nin = O(n)$. The number of output tokens is linear in
    the number of child parts of a given part, so that by
    Definition~\ref{def:htree-congest} we have $\nout = O(n^{1/p})$.

    We now show that there exist constants $c_1,c_2,c_3$ for which
    this algorithm is guaranteed to produce a valid partition in an
    $H$-partition tree, namely, that each part of the partition
    satisfies the constraints \consdeg, \consupdeg, and \conssz, and
    that the number of parts in this partition is at most
    $x = k^{1/p}$. Since the counters guarantee that each of the
    properties of an individual part are satisfied, all that remains
    is to show that the number of parts in any partition is at most
    $x = k^{1/p}$. First, in any partition, each vertex can be
    responsible for the overflow of any counter at most once, and in
    particular it is always possible to start a new partition without
    exceeding the maximum value of any counter. For any $c_1 \geq 1$
    and for all $v$, it holds that
    $c_1\tilde{m}/x \geq c_1(kx)/x \geq k \geq \deg_{V'} v$.
    Additionally, since we enforce that each part $U_{S,i}$ of each
    partition satisfies $\left|U_{S,i}\right| \leq c_3k/x$, we have
    that
    \[
        \sum_{W \in {\tt anc}(U_{S,i})}
            \left|E( v, W )\right|
            \leq c_3|S|k/x \leq c_3pk/x.
    \]
    We can bound the number of parts in each partition by bounding the
    number of times these counters exceed their maximum values. A part
    that ends due to counter \consdeg has at least
    $
        c_1\tilde{m}/x - k
            \geq (c_1 - 1)\tilde{m}/x
            \geq (c_1 - 1)m/x
    $
    edges out of $2m$ total, so that there are at most $2x/(c_1 - 1)$
    such parts. Likewise, a part that ends due to counter \consupdeg
    has, for the given set of parts in the ancestors of $U$, at least
    $c_2|S|\tilde{m}/x^2$ edges into them, out of at most
    $c_1\tilde{m}|S|/x$ total, so there are at most $c_1x/c_2$ such
    parts. Finally, the number of times that counter \conssz starts a
    new part is at most $x/c_3$. The maximum number of parts in any
    partition is therefore $2x/(c_1 - 1) + c_1x/x_2 + x/c_3 + 1$,
    which for $c_1 = 9, c_2 = 36, c_3 = 4$ is at most $x$.
\end{proof}

We can now use the algorithm for the simulation of \specialalgos
described in \Cref{sec:special-algo} to build a single layer of a
$K_3$-partition tree in $k^{1/3}\rcf$ rounds. We state this
formally in the following.

\begin{lemma}%
\label{lem:k3-htree-one-layer}
    There exist constants $c_1,c_2,c_3$ such that, given a
    $(\phi, \delta)$-communication cluster
    $\genericCluster = (\genericClusterNodes, \genericClusterEdges)$
    in graph $G = (V, E)$ for $\phi = O(\polylog n)$ and
    $\delta = K^{1/3}$, where each $v \in \vlist$ knows the first
    $0 \le i < 3$ layers of some $K_3$-partition tree $T_{C,K_3}$ with
    constants $c_1,c_2,c_3$ in \congest, there exists a deterministic
    \congest algorithm that finishes in $k^{1/3}\rcf$
    rounds and constructs a next layer for the tree $T_{C,K_3}$ so
    that each endpoint of a part of the next layer is known to some
    vertex $v \in \vlist$ and each $v \in \vlist$ knows $O(k^{1/3})$
    endpoints of parts.
\end{lemma}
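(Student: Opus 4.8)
The plan is to instantiate Theorem~\ref{thm:simulate-special} with the \specialalgo provided by Lemma~\ref{lem:k3-htree-streaming}. First I would set up the streaming input cluster: the cluster $\genericCluster$ is a $(\phi,\delta)$-communication cluster with $\phi = O(\polylog n)$ and $\delta = K^{1/3}$, so I only need to show that the input to the layer-construction \specialalgo can be distributed to the vertices of $\vlist$ with the \conscont property of Definition~\ref{def:streaming-input-cluster}. The stream $S$ consists, for each vertex $v \in V'$ (here $V' = \vlist$) in increasing order of vertex number, of the tokens encoding $v$'s number, total degree, and $|E(\{v\}, U')|$ for each ancestor part $U' \in {\tt anc}(U)$ — a constant number of $O(\log n)$-bit tokens per vertex since $|{\tt anc}(U)| \le p = O(1)$. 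Each vertex $v$ can produce its own tokens locally (it knows its communication degree and, since the first $i$ layers of the tree are known to all of $\vlist$, it knows which ancestor part each of its neighbors lies in). To run the simulation for \emph{all} parts $U$ in layer $i$ at once, I would take $\zeta$ to be the number of parts in layer $i$, which is at most $x = k^{1/p} = k^{1/3}$ by Definition~\ref{def:ptree}, with one streaming algorithm $\algo{A}_j$ per part; the interval $\sigma_{j,v}$ held by $v$ has a single \maintok, so $T_{\max} = 1$.

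Next I would choose the parameter $\lambda$ in Theorem~\ref{thm:simulate-special}. Plugging in the parameters from Lemma~\ref{lem:k3-htree-streaming} — namely $\paramq = 0$, $\paramy = \nout = O(k^{1/3})$, $L = O(\polylog n)$ — and $T_{\max} = 1$, $\zeta \le k^{1/3}$, $\delta = K^{1/3}$, the round complexity becomes
\[
    \left(
        \frac{1}{K^{1/3}}\Bigl(\zeta + \frac{k}{\lambda}\Bigr)
        + \Bigl(\lambda + \frac{\zeta}{K^{1/3}}\Bigr)
    \right)\rcf .
\]
Since $k \le K$, taking $\lambda = k^{1/3}$ (which satisfies $1 \le \lambda \le k/\zeta$ because $\zeta \le k^{1/3}$) makes every term $O(k^{1/3})$: $\frac{1}{K^{1/3}}\cdot\frac{k}{\lambda} = \frac{k^{2/3}}{K^{1/3}} \le k^{1/3}$, $\frac{\zeta}{K^{1/3}} \le 1$, and $\lambda = k^{1/3}$. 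This gives $k^{1/3}\rcf$ rounds overall. For the output guarantee: since $\paramq = 0$, Theorem~\ref{thm:simulate-special} says each $v \in \vlist$ learns $O(\min\{\nout, \paramy T_{\max} \frac{k}{\lambda}\}) = O(\min\{k^{1/3}, k^{1/3}\cdot 1 \cdot k^{2/3}\}) = O(k^{1/3})$ output tokens, and every output token is known to some vertex of $\vlist$ — which is exactly the claimed distribution of the partition endpoints (these endpoints are the \writeop outputs of the \specialalgo, and Lemma~\ref{lem:k3-htree-streaming} guarantees they define a valid child partition of $U$ satisfying \consdeg, \consupdeg, \conssz).

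One subtlety I would address carefully is that the simulation is run in \emph{parallel} across the $\zeta$ parts of layer $i$, and the \conscont requirement of Definition~\ref{def:streaming-input-cluster} is that $\vlist$ is numbered in order of the input intervals it holds, \emph{simultaneously for all streams $S_j$}. This is automatic here because every stream $S_j$ is indexed by the same underlying vertex order on $\vlist$ (the $j$-th stream just records, for each $v$, its edge counts to the ancestors of the $j$-th layer-$i$ part $U$), so the single ordering $v_1, \dots, v_k$ by vertex number works for all streams at once. I would also note that $\phi = O(\polylog n)$ is what allows Theorem~\ref{thm:cs-routing} to contribute only a $\rct$ factor. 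I expect the main (though still light) obstacle to be verifying that everything the \specialalgo needs — in particular, which ancestor part each neighbor of $v$ falls into — is genuinely locally computable from the information each vertex is assumed to hold, i.e., that ``knowing the first $i$ layers of $T_{C,K_3}$'' at every $v \in \vlist$ really does let $v$ assemble its own \maintok and \auxtok{s}; and to confirm that the constants $c_1, c_2, c_3$ from Lemma~\ref{lem:k3-htree-streaming} are consistent with those in the cluster's tree, which they are since we pass the same triple through both statements.
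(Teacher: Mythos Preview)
Your approach matches the paper's: instantiate Theorem~\ref{thm:simulate-special} on the \specialalgo of Lemma~\ref{lem:k3-htree-streaming}, one instance per part of the lowest completed layer, with $T_{\max}=O(1)$ and $\lambda=k^{1/3}$, then read off the $O(\nout)=O(k^{1/3})$ output bound from the $\paramq=0$ case. One small correction that does not change the outcome: $\zeta$ (the number of parts in the lowest completed layer, hence the number of parallel instances) can be as large as $x^{2}=k^{2/3}$ when constructing the leaf layer, not $x=k^{1/3}$ as you wrote; the paper takes $\zeta=O(k^{2/3})$, but with this value the constraint $\lambda\le k/\zeta$ is still met at $\lambda=k^{1/3}$ and your round-complexity computation still yields $k^{1/3}\rcf$.
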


\begin{proof}
    We invoke \Cref{thm:simulate-special} on the streaming algorithm
    implied by Lemma~\ref{lem:k3-htree-streaming}. We execute several
    instances of this \specialalgo in parallel, one for each part of
    the lowest completed level of the $K_3$-partition tree $T$. Since
    each vertex $v \in \vlist$ knows all prior layers of the tree
    $T_{C,K_3}$ and its own edges, it holds the $O(1)$ tokens that
    describe it for each of the $O(k^{2/3})$ parts in layer
    $0 \le i < 3$. Applying \Cref{thm:simulate-special} with
    parameters $T_{\max} = O(1), \lambda = k^{1/3}$ gives that all
    $O(k^{2/3})$ algorithms finish in parallel in
    $
            (\frac{1}{K^{1/3}}(k^{2/3} + \frac{k}{k^{1/3}})
            + (k^{1/3} + \frac{k^{2/3}}{K^{1/3}}))
            \rcf
        =
        k^{1/3}\rcf
    $
    rounds as desired. Together, the output streams of all of the
    executed algorithms contain all parts in the next layer of $T$.
    Since $\paramq = 0$, each $v \in \vlist$ knows
    $O(\nout) = O(k^{1/3})$ output tokens.
\end{proof}

\paragraph{\textbf{Load balancing the output.}} The distribution of
the output tokens of the \specialalgo as simulated with
Theorem~\ref{thm:simulate-special} is very loosely constrained, so we
show load balancing tools in order to redistribute it. After applying
Lemma~\ref{lem:k3-htree-one-layer} to construct a new layer of a
$K_3$-partition tree, each part of the new layer is known only to one
vertex, and the only guarantee about how many parts each vertex knows
is that it is $O(k^{1/3})$. In order to apply
Lemma~\ref{lem:k3-htree-one-layer} again to construct the next layer,
and in order to satisfy the guarantees of
Theorem~\ref{thm:congest-tree-algo-tri}, we wish to distribute the
root and middle layers of the $K_3$-partition tree so that they are
known by all vertices in $\vlist$. We show how to do this in
Lemma~\ref{lem:k3-amplify-distribute}. Likewise, in order to satisfy
the guarantees of Theorem~\ref{thm:congest-tree-algo-tri}, we wish to
distribute the parts of the leaf layer so that each $v \in \vhd$ knows
$O(\frac{1}{\mu}\comdeg v)$ parts. We accomplish this in
Lemma~\ref{lem:k3-chain-distribute}.

\begin{lemma}%
\label{lem:k3-amplify-distribute}
    Given a $(\phi, \delta)$-communication cluster
    $\genericCluster = (\genericClusterNodes, \genericClusterEdges)$
    in graph $G = (V, E)$ for $\phi = O(\polylog n)$,
    $\delta = K^{1/3}$, $K = |\vcluster|$, and $k = |\vlist|$, and
    $O(k^{2/3})$ numbered messages that must be learned by all
    $v \in \vlist$, where initially each message is known by a unique
    $v \in \vlist$ and each $v \in \vlist$ knows $O(k^{1/3})$
    messages, there is a deterministic \congest algorithm that
    finishes in $k^{1/3}\rcf$ rounds and distributes all
    $O(k^{2/3})$ messages to all vertices in $\vlist$.
\end{lemma}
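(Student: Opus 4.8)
The plan is to distribute the $O(k^{2/3})$ messages using an "all-to-all within an expander" routing step, broken into two sub-steps to respect the low-bandwidth constraint coming from $\delta = K^{1/3}$. The core difficulty is that a vertex $v \in \vlist$ with communication degree $\comdeg{v}$ can only send or receive $O(\comdeg{v}) \cdot \rct$ messages per invocation of Theorem~\ref{thm:cs-routing}; a naive "every vertex sends all its messages to everyone" would need each message delivered $k$ times, i.e.\ a total of $\Theta(k^{5/3})$ deliveries, and a low-degree vertex might have to receive $\Theta(k^{2/3})$ of them even though it may afford only $O(K^{1/3}) \cdot \rct = O(\delta)\cdot\rct$ per round. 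So we cannot route directly to all of $\vlist$; we must first funnel the messages through a small, high-degree core and then broadcast.

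\textbf{Step 1: Gather into a vertex chain.} Form a $(k^{2/3},\vlist)$-vertex chain $\vchain{V}$ (Definition~\ref{def:vchain}) with $\lceil k / k^{2/3} \rceil = k^{1/3}$ vertices, computed locally and identically by all $v \in \vlist$ as in Phase~0 of Theorem~\ref{thm:simulate-special}. Each $v \in \vlist$ sends each of its $O(k^{1/3})$ messages to the (unique) chain vertex $\chainass{V}{v}{}$ responsible for it; actually, since the chain is indexed by which vertex holds the message, and the messages are numbered $1,\dots,O(k^{2/3})$, we instead split the $O(k^{2/3})$ message-slots into $k^{1/3}$ contiguous blocks of size $O(k^{1/3})$ and let chain vertex $\chainmem{V}{t}{}$ collect the messages whose numbers fall in block $t$. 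Each vertex sends its $O(k^{1/3})$ messages (spread across blocks) and each chain vertex receives $O(k^{1/3})$ messages; since $\delta = K^{1/3} \ge k^{1/3}$ is at least $\Omega(k^{1/3})$ is not guaranteed directly, but $\comdeg{v} \ge \delta = K^{1/3}$ for $v \in \vlist$, and $K \ge k$, so every $v \in \vlist$ can send and receive $O(K^{1/3})\cdot\rct \supseteq O(k^{1/3})\cdot\rct$ messages, so by Theorem~\ref{thm:cs-routing} this takes $\frac{1}{\delta} \cdot O(k^{1/3}) \cdot \rct = O(k^{1/3})\cdot\rct$ rounds. After this step, each of the $O(k^{2/3})$ messages is known to exactly one of the $k^{1/3}$ chain vertices, each of which holds $O(k^{1/3})$ messages.

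\textbf{Step 2: Pipelined broadcast along the chain.} Now all $k^{1/3}$ chain vertices together hold all the messages, with $O(k^{1/3})$ each. We broadcast: chain vertex $\chainmem{V}{1}{}$ sends its $O(k^{1/3})$ messages to every vertex in $\vlist$, then $\chainmem{V}{2}{}$ does so, and so on, with the $k^{1/3}$ broadcasts pipelined so that while $\chainmem{V}{t}{}$ is broadcasting, $\chainmem{V}{t+1}{}$ can start. Each individual broadcast from one chain vertex delivers $O(k^{1/3})$ messages to each of the $k$ vertices in $\vlist$; a single vertex $v\in\vlist$ must receive $O(k^{1/3})$ messages per broadcast, which is within its budget of $O(\comdeg{v})\cdot\rct \ge O(K^{1/3})\cdot\rct \ge O(k^{1/3})\cdot\rct$, and the broadcasting vertex $\chainmem{V}{t}{}$ must send $O(k^{1/3}) \cdot k = O(k^{4/3})$ messages total — but it only has $O(\comdeg{\chainmem{V}{t}{}})\cdot\rct$ bandwidth per routing invocation, so this naive broadcast is too slow unless $\comdeg{\chainmem{V}{t}{}}$ is large. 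The fix is to not have the chain vertex be the single source: instead, use the fact that after Step~1 each message is already known by its original holder too (we need not delete it), or, more robustly, first replicate each message $k^{1/3}$-fold across the chain and then have each vertex of $\vlist$ pull its copies; with $k^{1/3}$ sources per message and each message needing $k$ destinations, the per-message load is $k/k^{1/3} = k^{2/3}$ deliveries from each source, for $O(k^{2/3}\cdot k^{1/3}) = O(k)$ total messages sent by each chain vertex — still potentially over budget if $\comdeg{\cdot}$ is only $K^{1/3}$.

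\textbf{The real argument} — and here is where I expect the main obstacle — is that we should \emph{not} route from a fixed small set, but rather observe that once every message is held by one vertex and we want it held by all $k$ vertices, we are doing a broadcast of $O(k^{2/3})$ messages; this can be done in $O(k^{2/3}/\delta + D)\cdot\rct$ rounds where $D = O(\phi^{-2}\log n)$ is the cluster diameter (Theorem~\ref{thm:diameter-conductance}), by pipelining along a BFS tree, which gives $O(k^{2/3}/K^{1/3})\cdot\rct + \polylog(n)\cdot\rct$; since $K \ge k$ this is $O(k^{2/3}/k^{1/3})\cdot\rct = O(k^{1/3})\cdot\rct$. Wait — but BFS-tree pipelining requires bandwidth $1$ on tree edges and delivers $1$ message per round past the pipeline fill, giving $O(k^{2/3} + D)$ rounds, which is $O(k^{2/3})$, not $O(k^{1/3})$. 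So the genuinely correct approach must exploit the $K^{1/3}$-factor of parallelism afforded by Theorem~\ref{thm:cs-routing}: we do $\lceil k^{2/3}/\delta \rceil = \lceil k^{2/3}/K^{1/3}\rceil \le k^{1/3}$ rounds of the following, where in each "super-round" $\delta$ distinct messages are simultaneously broadcast. To broadcast $\delta$ messages at once to all of $\vlist$: designate $\delta$ sources (the $\delta$ current holders), and run a routing instance where each of the $k$ destinations requests all $\delta$ messages; the total demand is $k\delta$ deliveries, each source supplies $k$ copies of one message, i.e.\ $k$ messages sent per source, which at $O(\comdeg{\text{source}})\cdot\rct \ge O(\delta)\cdot\rct = O(K^{1/3})\cdot\rct$ bandwidth per source takes $O(k/K^{1/3})\cdot\rct = O(k^{2/3})\cdot\rct$ rounds — again too slow per super-round. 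I will therefore instead replicate-then-scatter: first, in $O(\log k)$ doubling rounds of Theorem~\ref{thm:cs-routing} (each $O(k^{1/3})\cdot\rct$ rounds because in the $j$-th doubling round $2^{j-1}$-fold-replicated messages, $O(k^{2/3})$ of them, means each of the $2^j \cdot (\text{holders})$ vertices sends/receives $O(k^{2/3}/2^j \cdot 2^{j-1})$... this needs care), spread each message to $\Theta(k^{2/3})$ holders, so each message has enough sources that delivering its remaining $\approx k$ copies costs each source only $O(1)$ extra messages; then a final routing instance of $O(k^{2/3})$ messages $\times$ $O(1)$ copies-per-source $= O(k^{2/3})$ sends per source and $O(k^{2/3})$ receives per destination, in $O(k^{2/3}/\delta)\cdot\rct = O(k^{1/3})\cdot\rct$ rounds. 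The main obstacle — and the step I'd spend the most care on — is exactly this load-balancing bookkeeping: verifying that the replication schedule keeps both the per-vertex send-load and receive-load within the $O(\comdeg{v})\cdot\rct$ budget that Theorem~\ref{thm:cs-routing} requires at every one of the $O(\log k)$ stages, using only that $\comdeg{v}\ge\delta = K^{1/3}\ge k^{1/3}$ and that the message count is $O(k^{2/3})$, so that the whole distribution closes in $O(\log k)\cdot O(k^{1/3})\cdot\rct = k^{1/3}\cdot\rct = k^{1/3}\rcf$ rounds.
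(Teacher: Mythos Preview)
Your final approach (replicate by doubling until each message has $\Theta(k^{2/3})$ holders, then scatter) can be made to work, but the route you take to get there is needlessly tortuous, and the arithmetic slip ``$O(1)$ copies-per-source'' is wrong: with $k^{2/3}$ holders per message and $k$ destinations, each holder must send $k^{1/3}$ copies, not $O(1)$. Since each vertex ends up holding $\Theta(k^{1/3})$ distinct messages, the total send load per vertex is indeed $O(k^{2/3})$, so the final step costs $O(k^{2/3}/\delta)\cdot\rct = k^{1/3}\rcf$ rounds as you want, but the bookkeeping for the $O(\log k)$ doubling stages (ensuring each stage stays within the $O(\comdeg v)$ budget and that no two sources pick the same destination) is real work that you have left open.

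The paper sidesteps all of this with a single structural idea you did not consider: rather than one global chain or generic doubling, it assigns a \emph{separate} $(k^{2/3},\vlist)$-vertex chain $\vchain{A}_j$ to \emph{each} message $m_j$. Each such amplifier chain has $O(k^{1/3})$ vertices, and since there are $O(k^{2/3})$ chains over $k$ vertices, every $v\in\vlist$ lies in only $O(1)$ chains. This gives a two-phase protocol with no recursion: in Phase~1 the original holder of $m_j$ sends it to all of $\vchain{A}_j$ (each vertex sends $O(k^{1/3})\cdot O(k^{1/3})=O(k^{2/3})$ messages and receives $O(1)$); in Phase~2 each chain member forwards $m_j$ to the $O(k^{2/3})$ vertices of $\vlist$ it is responsible for (each vertex sends $O(k^{2/3})$ messages and receives $O(k^{2/3})$). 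Both phases are a single invocation of Theorem~\ref{thm:cs-routing} with load $O(k^{2/3}/\delta)=O(k^{1/3})$. The per-message chain is what makes the load balancing trivial: there is no scheduling problem because disjointness of responsibilities is built into the chain definition, and the $O(1)$-chains-per-vertex bound falls out of a counting argument rather than an inductive construction.
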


\begin{proof}
    We begin by deterministically assigning a
    $(k^{2/3},\vlist)$-vertex chain $\vchain{A}_j$ for each message
    $m_j$ in a way that each vertex $v \in \vlist$ can compute
    locally. We call these chains \emph{amplifier chains}. All of the
    elements $\chainmem{A}{i}{_j}$ of the amplifier chains are in
    $\vlist$, and each $v \in \vlist$ is in $O(1)$ chains. This is
    possible because the number of elements in each amplifier chain is
    $O(|\vlist| / k^{2/3}) = O(k^{1/3})$ and there are $O(k^{2/3})$
    such chains.

    The messages are distributed in two phases, each of which finishes
    in $k^{1/3}\rcf$ rounds:
    \begin{enumerate}
        \item Each vertex $u \in \vlist$, for each message $m_j$
            initially held by $u$, sends $m_j$ to all
            $v \in \vchain{A}_j$.
        \item For each message $m_j$, each vertex $v \in \vchain{A}_j$
            sends $m_j$ to all $u \in \chainres{A}{v}{_j}$, where
            $\chainres{A}{v}{_j}$ is the set of $u \in \vlist$
            assigned to $v$ in the vertex chain.
    \end{enumerate}

    In Phase~1, each vertex initially holds $O(k^{1/3})$ messages and
    sends each message to each of the $O(k^{1/3})$ vertices in
    $\vchain{A}_j$. Each vertex receives 1 message for each of the
    $O(1)$ amplifier chains of which it is a part. Phase~1 therefore
    finishes in $k^{1/3}\rcf$ rounds.

    In Phase~2, each vertex sends the message $m_j$ to $O(k^{2/3})$
    vertices for each of the $O(1)$ amplifier chains $\vchain{A}_j$ of
    which it is a part. Each vertex receives all $O(k^{2/3})$ messages
    $m_j$. Phase~2 therefore finishes in $k^{1/3}\rcf$
    rounds as well.
\end{proof}

In the following lemma, we show that a $K_3$-compatible cluster can
distribute $O(k)$ messages so that each $v \in \vhd$ knows
$O(\frac{1}{\mu}\comdeg v)$ messages. We present a \specialalgo that
has as its input a list of the degrees of vertices $v \in \vlist$. For
each $v \in \vlist$, the algorithm outputs an interval of messages
that $v$ is responsible for learning. We simulate the algorithm by
Theorem~\ref{thm:simulate-special} and redistribute the intervals so
that each $v$ knows the numbers of the messages it is responsible for
learning. For each such message, $v$ can ask the vertex $u$ that holds
that message for the contents of the message.

\begin{lemma}%
\label{lem:k3-chain-distribute}
    Given a $(\phi, \delta)$-communication cluster
    $\genericCluster = (\genericClusterNodes, \genericClusterEdges)$
    in graph $G = (V, E)$ for $\phi = O(\polylog n)$,
    $\delta = K^{1/3}$, $K = |\vcluster|$, $k = |\vlist|$, and $O(k)$
    numbered messages distributed so that each message is initially
    known by exactly one $v \in \vlist$ and each $v \in \vlist$
    initially knows $O(k^{1/3})$ messages, there is a deterministic
    \congest algorithm that finishes in $k^{1/3}\rcf$ and
    redistributes the messages so that each $v \in \vhd$ knows
    $O(\frac{1}{\mu}\comdeg v)$ messages.
\end{lemma}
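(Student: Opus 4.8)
The plan is to flesh out the outline given just before the statement, where the single point that needs care is how a vertex responsible for a message finds out which vertex currently holds it. First I would compute $\mu=\tfrac1k\sum_{v\in\vlist}\comdeg v$ exactly: by \Cref{thm:diameter-conductance} the cluster $\genericCluster$ has diameter $\polylog n$, so building a BFS tree and aggregating the values $\comdeg v$ along it takes $\rct$ rounds and makes $\mu$ known to all of $\vlist$. Then all of $\vlist$ agree on the following \specialalgo (parametrized by $\mu$): ordering $\vlist$ by identifier, its input stream has for the $r$-th vertex $v$ a single \maintok holding $(\mathrm{id}(v),\comdeg v)$ and no \auxtok{s}, which makes $\genericCluster$ a streaming input cluster with $T_{\max}=1$ and one algorithm ($\zeta=1$); the algorithm keeps a counter $c$ initialized to $1$, and on reading the token of $v$ it outputs $(\mathrm{id}(v),c,c+\lceil c'\comdeg v/\mu\rceil-1)$ and advances $c$ by $\lceil c'\comdeg v/\mu\rceil$ when $\comdeg v\ge\tfrac12\mu$, doing nothing otherwise. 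This is a \specialalgo with $L=O(\polylog n)$, $\nin=O(n)$, $\paramq=0$, $\paramy=O(1)$, $\nout=O(k)$, and its key property is that the blocks $I_v=[c,\dots]$ assigned to the vertices of $\vhd$ have total length $\sum_{v\in\vhd}\lceil c'\comdeg v/\mu\rceil$ which is $O(k)$ and, since the fewer than $k$ vertices outside $\vhd$ contribute less than $\tfrac12 k\mu$ to the degree sum so that $\sum_{v\in\vhd}\comdeg v\ge\tfrac12 k\mu$, is at least the number of messages for a large enough constant $c'$; hence the $I_v$, $v\in\vhd$, partition a number range covering every message. Invoking \Cref{thm:simulate-special} with $\lambda=k^{1/3}$ then simulates the algorithm in $\bigl(\tfrac1{K^{1/3}}(1+k^{2/3})+(k^{1/3}+\tfrac1{K^{1/3}})\bigr)\rcf=k^{1/3}\rcf$ rounds, the simplification using $k\le K$ to bound $k^{2/3}/K^{1/3}\le k^{1/3}$, after which each output triple $(\mathrm{id}(v),a_v,b_v)$ is held by some vertex of $\vlist$ and each vertex holds $O(k^{2/3})$ of them.

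Next I would route each triple to the vertex $v$ named in it --- a routing instance on $\genericCluster$ in which every vertex of $\vlist$ sends $O(k^{2/3})$ and receives $O(1)$ messages, so by \Cref{thm:cs-routing} with parameter $L=\Theta(\max\{1,k^{2/3}/K^{1/3}\})$, which is legitimate because $\comdeg v\ge\delta=K^{1/3}$ for every $v\in\vlist$, it finishes in $k^{1/3}\rcf$ rounds. Now each $v\in\vhd$ knows its block $[a_v,b_v]$, and to fetch the contents I would perform a join through aggregators: designate the $\ell$-th vertex of $\vlist$ as the aggregator owning the $\ell$-th block of $\Theta(k^{2/3})$ consecutive message numbers, of which there are $O(k^{1/3})$; then (i) every vertex forwards each message $(j,m_j)$ it holds to the aggregator of $j$'s block and discards its copy, (ii) every $v\in\vhd$ sends a request $(j,\mathrm{id}(v))$ to that aggregator for each $j\in[a_v,b_v]$, and (iii) each aggregator, now holding for every $j$ in its block both $m_j$ and the unique requester, forwards $m_j$ to it. In each of (i)--(iii) every vertex of $\vlist$ sends and receives $O(k^{2/3})$ messages --- for (ii) and (iii) because the $I_v$ partition the number range so an aggregator gets exactly its block's worth, and because $|I_v|=O(\comdeg v/\mu+1)=O(\comdeg v)$ --- so all three are handled by \Cref{thm:cs-routing} with the same $L$ in $k^{1/3}\rcf$ rounds. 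At the end every message $m_j$ has reached the unique $v\in\vhd$ whose block contains $j$, so every message is known to some vertex of $\vhd$, and each $v\in\vhd$ knows exactly $|I_v|=\lceil c'\comdeg v/\mu\rceil=O(\comdeg v/\mu)$ messages, using $\comdeg v\ge\tfrac12\mu$; summing the $O(1)$ phases gives the claimed $k^{1/3}\rcf$ bound. (The case $k=O(1)$ is trivial: there are $O(1)$ messages, which can all be broadcast within the $\polylog n$ diameter of $\genericCluster$.)

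The step I expect to be the main obstacle is the fetch phase: a vertex responsible for message number $j$ cannot, within the bandwidth budget, learn which vertex currently holds $m_j$, so a direct pull is impossible; funnelling both messages and requests through $\Theta(k^{1/3})$ block aggregators is what carries out the needed join while keeping every vertex's send/receive load at $O(k^{2/3})=O(K^{1/3}\cdot k^{2/3}/K^{1/3})$, exactly the amount the $\delta=K^{1/3}$ degree guarantee of the communication cluster lets \Cref{thm:cs-routing} push through in $k^{1/3}\rcf$ rounds. The only other point to get right is the constant $c'$, which must make the block sizes $\lceil c'\comdeg v/\mu\rceil$, summed over $\vhd$, both dominate the number of messages and stay $O(k)$ in total; the half-average-degree definition of $\vhd$ makes this immediate.
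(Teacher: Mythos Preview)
Your proof is correct and follows the same overall strategy as the paper: compute $\mu$ by aggregation over a BFS tree, run a degree-reading \specialalgo via \Cref{thm:simulate-special} with $\lambda=k^{1/3}$ to assign contiguous message intervals to the vertices of $\vhd$, route each output token to the vertex it names, and then fetch the actual messages. The streaming algorithm and its analysis (including the bound $\sum_{v\in\vhd}\comdeg v\ge\tfrac12 k\mu$) are essentially identical.

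The one point of divergence is the fetch step you flag as the main obstacle. The paper avoids your aggregator-based join by first normalizing message locations: \emph{before} running the streaming algorithm it sends message $m_j$ to the vertex of $\vlist$ numbered $\lceil j/c\rceil$, where $c=\lceil M/k\rceil=O(1)$, so each vertex now holds $O(1)$ messages at a globally computable position (each vertex sends its $O(k^{1/3})$ initial messages and receives $O(1)$, costing $\rct$ rounds). After learning its interval $[a,b]$, a vertex then requests each $m_j$ directly from vertex $\lceil j/c\rceil$; since each target serves $O(1)$ messages and each requester sends $O(\comdeg v/\mu)$ requests, this final exchange costs only $\rct$ rounds rather than the $k^{1/3}\rcf$ your three aggregator rounds take. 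Your scheme is equally valid and stays within the stated bound; the paper's canonical-position trick just removes the need for a rendezvous layer.
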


\begin{proof}
    By Theorem~\ref{thm:diameter-conductance}, the diameter of $C$ is
    $O(\polylog n)$, thus $\vlist$ can compute the total
    communication degree $m = |E(\vlist, \vcluster)|$, the average
    communication degree $\mu$, and the total number of messages $M$
    and distribute the quantities $m$, $\mu$, and $M$ so that they are
    known by all $\vlist$ in $\tilde O(1)$ rounds. 
    
    Let $c = \left\lceil M/k \right\rceil$. Since $M = O(k)$, it holds
    that $c = O(1)$. We begin by deterministically assigning at most
    $c$ messages to each $v \in \vlist$ in a way that can be locally
    computed by all vertices. In particular, message $m_j$ is assigned
    to the vertex with number $\lceil j/c \rceil$. Then each vertex
    sends each message $m_j$ it originally holds to vertex
    $\lceil j/c \rceil$. Each vertex sends $O(k^{1/3})$ messages and
    receives $O(1)$ messages, so this process requires $\rct$ rounds.

    We present a simple \specialalgo (see
    Algorithm~\ref{alg:k3-chain-distribute}) that, given an input
    stream $S$ containing the degrees of all $v \in \vlist$, produces
    an output stream $R$, where each token contains the endpoints of a
    range of messages to be delivered to $v$. All messages are
    assigned to some $v \in \vhd$ in this way and each $v \in \vhd$
    receives $O(\frac{1}{\mu}\comdeg v)$ messages. Each input token is
    a tuple $(v, \comdeg v)$ and each output token is a tuple
    $(v, I)$, where $I$ is $\emptyset$ or some $[a, b]$.

    \begin{algorithm}
        \caption{Balance messages by communication degree}%
        \label{alg:k3-chain-distribute}
        \begin{algorithmic}[1]
            \State $\textit{leaf} \gets 0$
            \For {$\tau_i \in S$}
                \State \readop $\tau_i = (v_i, \comdeg {v_i})$
                \If {$\comdeg {v_i} < \mu/2$}
                    \State \writeop $(v, \emptyset)$
                \Else
                    \State $l \gets 2\lceil M\comdeg {v_i} / m\rceil$
                    \State \writeop
                        $(v, [\textit{leaf} + 1, \textit{leaf} + l])$
                    \State $\textit{leaf} \gets \textit{leaf} + l$
                \EndIf
            \EndFor
        \end{algorithmic}
    \end{algorithm}

    The algorithm allocates at most $2\lceil M\comdeg v / m\rceil$
    messages to each $v \in \vhd$. Since $M = O(k)$ and
    $m = E(\vlist, \vcluster)$, it holds that $M/m = O(1/\mu)$, so
    that the number of messages delegated to each $v \in \vhd$ is
    $O(\frac{1}{\mu}\comdeg v)$. All messages are allocated in this
    way, as
    \[
        \sum_{v \in \vhd} 2\comdeg v \ge k\mu.
    \]
    Since $m = k\mu$, multiplying both sides by $M$ and rearranging
    gives
    \[
        \sum_{v \in \vhd} 2\frac{\comdeg v \cdot M}{m} \ge M.
    \]
    The total communication degree of $\vhd$ is at least half $k\mu$
    as $E(\vhd, \vcluster) \ge E(\vlist \setminus \vhd, \vcluster)$.
    We can show this by contradiction. Let
    $\vlistld = \vlist \setminus \vhd$ be the set of low degree
    vertices and suppose instead that
    $E(\vhd, \vcluster) < E(\vlistld, \vcluster)$. Then
    \[
        |E(\vlist, \vcluster)|
        = |E(\vhd, \vcluster)| + |E(\vlistld, \vcluster)|
        < 2|E(\vlistld, \vcluster)|
        < 2\sum_{v \in \vlistld}\frac{1}{2}\mu
        = |\vlistld|\mu
        \le |\vlist|\mu
        = |E(\vlist, \vcluster)|,
    \]
    which is a contradiction. So we must have
    $E(\vhd, \vcluster) \ge E(\vlistld, \vcluster)$, from which
    $E(\vhd, \vcluster) \ge \frac{1}{2}k\mu$ as desired.

    We simulate the \specialalgo presented by invoking
    \Cref{thm:simulate-special}. First, note that the token length is
    $L = O(\log n)$, as each token contains $O(1)$ numbers, each of
    which is $O(n)$. Additionally, the state of the algorithm consists
    of the numbers $l$ and \textit{leaf} and one input token, and so
    is polylogarithmic in $L$. Since the input and output streams both
    contain one token for each $v \in \vlist$, we get that
    $\nin, \nout = O(k)$. Each vertex initially holds its own input
    token, so $T_{\max} = 1$. The algorithm never performs an \auxop
    operation, so $\paramq = 0$. Finally, $\paramy = 1$, because the
    algorithm outputs exactly one token between \readop operations.
    Thus, using $\lambda = k^{1/3}$, the simulation finishes in
    $
            (\frac{1}{K^{1/3}}(1 + \frac{k}{k^{1/3}})
            + (k^{1/3} + \frac{1}{K^{1/3}}))
            \rcf
            = k^{1/3}\rcf
    $
    rounds. At the end of the simulation, each $v \in \vlist$ knows
    $O(\paramy T_{\max}\frac{k}{\lambda}) = O(k^{2/3})$ output tokens.

    We can now redistribute the messages. First,
    each $v \in \vlist$ sends each of the output tokens it holds to
    the vertex whose assignment is contained in the token. Each vertex
    sends $O(k^{2/3})$ messages and receives $O(1)$ messages, and as
    $\delta = K^{1/3} \ge k^{1/3}$, this can be done in
    $k^{1/3}\rcf$ rounds. Next, each $v \in \vlist$, upon
    receiving the endpoints of its interval $[a, b]$ of message
    numbers, for each $j \in [a, b]$ sends a request to vertex
    $\lceil j/c \rceil$ to obtain message $m_j$. 
    In this step, each
    $v \in \vlist$ sends $\tilde{O}(\frac{1}{\mu}\comdeg v)$ messages
    and receives $O(1)$ messages, so this can be done in
    $\rct$ rounds. The entire process takes
    $k^{1/3}\rcf$ rounds, as desired.
\end{proof}

We are now ready to prove Theorem~\ref{thm:congest-tree-algo-tri}.

\begin{proof}[Proof of Theorem~\ref{thm:congest-tree-algo-tri}]
    Apply Lemma~\ref{lem:k3-htree-one-layer} to build
    the root partition of a $K_3$-partition tree $T$ and
    Lemma~\ref{lem:k3-amplify-distribute} to make it known to all
    $\vlist$ in $k^{1/3}\rcf$ rounds. Then, again invoke
    Lemmas~\ref{lem:k3-htree-one-layer}
    and~\ref{lem:k3-amplify-distribute} to build and distribute the
    middle layer of $T$ in $k^{1/3}\rcf$ rounds. Finally,
    Lemma~\ref{lem:k3-htree-one-layer} builds the leaf
    layer of $T$ in $k^{1/3}\rcf$ rounds and
    Lemma~\ref{lem:k3-chain-distribute} distributes it according to
    the guarantees of Theorem~\ref{thm:congest-tree-algo-tri}.
\end{proof}

\subsection{Partition Trees for\texorpdfstring{
    $\boldsymbol{K_p}$
}{
    K\_p
}Listing,\texorpdfstring{
    $\boldsymbol{p > 3}$
}{
    p>3
}}

For $p > 3$, we are less constrained by the per-round bandwidth of the
cluster, as our target listing time is higher. Thus, we can adopt an
error term of $n$ on the edge-balancing upper bounds. However, because
in $K_p$-listing algorithms for $p > 3$, a cluster $C$ can be
responsible for listing cliques that have some vertices in
$V \setminus \vcluster$, we must adapt the partition trees to
simultaneously balance the distribution of three types of edges: edges
that lie entirely inside $C$; edges that lie entirely outside $C$; and
edges that cross the boundary of $C$ (with one vertex in $C$ and the
other outside). Definition~\ref{def:split-graph} describes these three
edge sets, and Definition~\ref{def:split-tree} describes a new class
of partition tree with these more complex balancing properties.

\begin{definition}[Split Graph]%
\label{def:split-graph}
    A \emph{split graph} is a graph $G = (V, E)$, together with
    disjoint sets $V_1$, $V_2$, $E_1$, $E_2$, $E_{12}$ such that
    $V = V_1 \cup V_2$, $E = E_1 \cup E_2 \cup E_{12}$,
    $E_1 \subseteq V_1 \times V_1$,
    $E_2 \subseteq V_2 \times V_2$, and
    $E_{12} \subseteq V_1 \times V_2$.
    Denote
    $n = |V|$,
    $k = |V_1|$,
    $m_1 = |E_1|$,
    $m_2 = |E_2|$, and
    $m_{12} = |E_{12}|$.
\end{definition}

We construct a partition tree over a split graph, where each vertex is
a partition of either $V_1$ or $V_2$. The constraints \consdegbb,
\consupdegbb, \consdegaa, and \consupdegaa in the definition below
directly correspond to the constraints \consdeg and \consupdeg from
Definition~\ref{def:htree-congest} with respect to vertices from the
same set as a given partition, while the constraints \consdegba and
\consupdegab provide additional balancing for partitions of different
vertex sets.

\begin{definition}[$(p',p)$-Split $K_p$-Tree]%
\label{def:split-tree}
    Let $G$ be a split graph, and let $1 < a \leq b < n$ and
    $1 \leq p' \leq p \leq \log n$ be some integer parameters. 
    Define
    $\tilde{m}_1 = \max\{m_1, ka\}$,
    $\tilde{m}_2 = \max\{m_2, nb\}$, and
    $\tilde{m}_{12} = \max\{m_{12}, na\}$.

    A $(p',p)$-split partition tree $T$ is a tree of $p$ layers (depth
    $p - 1$), where each non-leaf node in the first $p - p'$ layers
    has at most $b$ children and is associated with a partition of
    $V_2$ into at most $b$ parts, and each non-leaf node in the bottom
    $p'$ layers has at most $a$ children and is associated with a
    partition of $V_1$ into at most $a$ parts. The partitions
    $U_S$, the part ${\tt parent}(U_{S,j})$ and set of parts
    ${\tt anc}(U_{S,j})$ are defined inductively as in
    Definition~\ref{def:ptree}.

    A $(p',p)$-split $K_p$-partition tree is a $(p',p)$-split
    partition tree that satisfies the following additional constraints
    for $\pi = p - p'$ and some positive constants $c_1,c_2$.
    \begin{enumerate}
        \item \consdegbb: For every part $U_{S,j}$, $|S| < \pi$,
            it holds that $|E(U_{S,j},V_2)| \leq c_1m_2/b + n$.
        \item \consupdegbb: For every part $U_{S,j}$, $|S| < \pi$,
            and for all of its ancestor parts
            $U_{S',j'} \in {\tt anc}(U_{S,j})$, it holds that
            $
                \sum_{
                    U_{S',j'} \in
                    {\tt anc}(U_{S,j}) \setminus \{U_{S,j}\}
                }
                    |E(U_{S,j}, U_{S',j'})|
                \leq c_2|S|\tilde{m}_2/b^2 + n
            $.
        \item \consdegba: For every part $U_{S,j}$, $|S| < \pi$,
            it holds that $|E(U_{S_j},V_1)| \leq c_1m_{12}/b + n$.
        \item \consdegaa: For every part $U_{S,j}$, $|S| \ge \pi$,
            it holds that $|E(U_{S,j},V_1)| \leq c_1m_1/a + k$.
        \item \consupdegaa: For every part $U_{S,j}$, $|S| \ge \pi$,
            and for all of its ancestor parts
            $U_{S',j'} \in {\tt anc}(U_{S,j}), |S'| \ge \pi$, it
            holds that
            $
                \sum_{
                    U_{S',j'} \in
                        {\tt anc}(U_{S,j}) \setminus \{U_{S,j}\}
                    : |S'| \ge \pi
                }
                    |E(U_{S,j}, U_{S',j'})|
                \leq c_2(|S| - \pi)\tilde{m}_1/a^2 + k
            $.
        \item \consupdegab: For every part $U_{S,j}$, $|S| \ge \pi$,
            and for all of its ancestor parts
            $U_{S',j'} \in {\tt anc}(U_{S,j}), |S'| < \pi$, it
            holds that
            $
                \sum_{
                    U_{S',j'} \in {\tt anc}(U_{S,j})
                    : |S'| < \pi
                }
                    |E(U_{S,j}, U_{S',j'})|
                \leq c_2\pi\tilde{m}_{12}/(ab) + n
            $.
    \end{enumerate}
\end{definition}

The following is an analogue of Theorem~\ref{thm:cfg-tree-can-list}
for $(p,p')$-split partition trees, and shows that such trees can be
used to load balance the listing of all instances of a subgraph $H$ in
a split graph $G$ with $p' \ge 2$ vertices in $V_1$. Consider an
instance $H'$ of $H$ in $G$ with vertices $v_0,v_1,\dots,v_{p - 1}$,
where the first $p - p'$ vertices are in $V_2$ and the remaining $p'$
are in $V_1$, and trace a path from the root of a $(p,p')$-split
partition tree $T$ to the leaf layer, choosing first the part that
contains $v_0$, then the part that contains $v_1$, etc. Since each
vertex will be in a distinct part, each edge of $H'$ must be be
present between some pair of distinct parts. As each vertex is in a
distinct part, each edge of $H'$ must be be present between some pair
of distinct parts.

\begin{theorem}%
\label{thm:split-tree-can-list}
    For any instance $K_p'$ of a $K_p$ in a split graph $G$, where
    $K_p'$ has at least $p' \ge 2$ vertices in $V_1$, and any
    $(p,p')$-split partition tree $T$ of $G$, there exists a leaf part
    $U_{S,i}$ in $T$ for which all edges of $K_p'$ are contained in
    \[
        \bigcup_{U \in {\tt anc}(U_{S,i})}
        \bigcup_{W \in {\tt anc}(U_{S,i}) : W \ne U}
        E(U,W).
    \]
\end{theorem}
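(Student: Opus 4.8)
The plan is to adapt the proof of Theorem~\ref{thm:cfg-tree-can-list} (that is, of \cite[Theorem~11]{CFG+21}), using the observation that, once its balancing constraints are ignored, a $(p,p')$-split partition tree is just a $p$-layer partition tree whose node at layer $i$ carries a partition of $V_2$ when $i<\pi:=p-p'$ and a partition of $V_1$ when $i\ge\pi$, with the part, ${\tt parent}$, and ${\tt anc}$ structure inherited from Definition~\ref{def:ptree}. In particular, the ancestor set ${\tt anc}(U_{S,t})$ of a leaf part contains exactly one part per layer $0,1,\dots,p-1$, so it has $p$ elements.

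First I would relabel the $p$ vertices of $K_p'$ as $v_0,\dots,v_{p-1}$ so that $v_0,\dots,v_{\pi-1}$ are its vertices lying in $V_2$ and $v_\pi,\dots,v_{p-1}$ are its vertices lying in $V_1$. This is possible because $K_p'$ has at least $p'$ vertices in $V_1$ and hence at most $\pi$ in $V_2$ (and, in the applications of Section~\ref{sec:cliques}, a $(p,p')$-split tree is consulted precisely for cliques with $p'$ vertices in $V_1$, so the two blocks have the intended sizes). The hypothesis $p'\ge 2$ additionally ensures that the bottom block, including the leaf layer, consistently partitions $V_1$, so the designated $V_1$-vertices are genuinely located by these partitions.

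Next I would trace a root-to-leaf path $U^{(0)},U^{(1)},\dots,U^{(p-1)}$: starting from the root partition, at layer $i$ the current node carries a partition of $V_2$ if $i<\pi$ and of $V_1$ if $i\ge\pi$; since $v_i$ lies in the corresponding vertex set, let $U^{(i)}$ be the unique part of that partition containing $v_i$ and descend into the child associated with it. Let $U_{S,t}$ be the leaf part so reached, so that ${\tt anc}(U_{S,t})=\{U^{(0)},\dots,U^{(p-1)}\}$ with $v_i\in U^{(i)}$ for every $i$. For any edge $\{v_i,v_j\}$ of $K_p'$ we have $i\ne j$, $v_i\in U^{(i)}$ and $v_j\in U^{(j)}$, and $U^{(i)}$ and $U^{(j)}$ are distinct members of ${\tt anc}(U_{S,t})$ because they sit at different layers of $T$; hence $\{v_i,v_j\}\in E(U^{(i)},U^{(j)})$, which is one of the terms of $\bigcup_{U\in{\tt anc}(U_{S,t})}\bigcup_{W\in{\tt anc}(U_{S,t}):W\ne U}E(U,W)$. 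Since this holds for every edge of $K_p'$, all of $K_p'$ is contained in that union and $U_{S,t}$ is the required leaf part.

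The only step I expect to require genuine care is the relabeling: one must ensure that every layer of the traced path that partitions $V_2$ is visited on behalf of a clique vertex that actually lies in $V_2$, and symmetrically for $V_1$; otherwise ``the part containing $v_i$'' is ill-defined and $v_i$ fails to land in any ancestor part, breaking the coverage argument. This alignment is exactly what the assumptions on the number of clique vertices in $V_1$ (together with $p'\ge 2$) secure, and it is why the split tree is built and invoked with the value of $p'$ that matches the instance at hand. Everything after that --- the existence of the descending path and the explicit description of ${\tt anc}$ --- transfers verbatim from Definition~\ref{def:split-tree} and the proof of Theorem~\ref{thm:cfg-tree-can-list}.
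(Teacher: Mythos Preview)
Your proposal is correct and follows essentially the same approach as the paper. The paper's own argument is the informal paragraph preceding the theorem statement: relabel the clique vertices so that $v_0,\dots,v_{\pi-1}\in V_2$ and $v_\pi,\dots,v_{p-1}\in V_1$, then trace a root-to-leaf path choosing at layer $i$ the part containing $v_i$; your write-up expands exactly this sketch and, in addition, correctly flags that the relabeling requires the clique to have exactly $p'$ vertices in $V_1$ (matching the paper's implicit assumption and the actual usage in Section~\ref{sec:cliques}).
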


To list within a cluster, we define a concept that generalizes that of
a $K_3$-compatible cluster. However, a cluster $C$ can be responsible
for listing cliques that do not lie entirely inside $C$, and so must
receive some input about edges not incident to vertices in $C$.
Therefore, in addition to the constraint in the definition
of a $K_3$-compatible cluster that all elements of $\vlist$ have
communication degree at least $n^{1 - 2/p}$,
Definition~\ref{def:input-cluster} details the necessary input and
gives some constraints about its distribution. These constraints are
easy to satisfy, but do not easily permit us to construct the desired
$(p,p')$-split $K_p$-partition tree. Thus,
Definition~\ref{def:centralized-input-cluster} gives stricter
constraints on the distribution of input within the cluster. We
show that a cluster can rearrange the input given in
Definition~\ref{def:input-cluster} to satisfy the conditions of
Definition~\ref{def:centralized-input-cluster}, and then construct a
$(p,p')$-split $K_p$-partition tree using
Theorem~\ref{thm:simulate-special}.

\begin{definition}[$K_p$-Compatible Cluster]%
\label{def:input-cluster}
    A \emph{$K_p$-compatible cluster} for $p > 3$ is a
    $(\phi, \delta)$-communication cluster
    $\genericCluster = (\genericClusterNodes, \genericClusterEdges)$
    in $G = (V, E)$ for $\phi = O(\polylog n)$ and
    $\delta = n^{1 - 2/p}$. The cluster holds two sets
    of directed edges
    $\ebar \subseteq E(V \setminus \vlist, \vlist)$ and
    $\ep \subseteq E(V \setminus \vlist, V \setminus \vlist)$.
    Define the
    \emph{input degree} $\sentdeg v$ of $v \in V$ to be the
    number of edges in $\ebar \cup \ep$ with tail $v$,
    $\sentdeg v = |\{ u \in V : (v, u) \in \ebar \cup \ep\}|$.
    $\ebar$ and $\ep$ are distributed to $\vlist$ as follows:
    \begin{itemize}
        \item Any vertex $v \in \vlist$ is incident to
            $\tilde{O}(\comdeg v)$ edges from $\ebar$ and knows
            all such edges $(u, v)$.
        \item Each vertex $u \in \vlist$ receives
            $\tilde{O}(n^{1 - 2/p} \cdot \comdeg u)$ edges from $\ep$.
            Additionally,
            $\mu \in \Omega\left( \frac{|\ep|}{n} \right)$.
        \item For each $u \in V$ that is the tail of at least one
            $e \in \ebar \cup \ep$, exactly one $v \in \vlist$
            holds $\sentdeg u$.
    \end{itemize}
\end{definition}

\begin{definition}[$K_p$-Input Cluster]%
\label{def:centralized-input-cluster}
    A \emph{$K_p$-input cluster} $C$ is a $K_p$-compatible
    cluster in which $\ebar$ and $\ep$ are  distributed to $\vlist$ in
    the following manner. There is a $(\beta, V)$-vertex chain
    $\vchain{E} = \vlist$ with $\beta = n$, whose elements
    $\chainmem{E}{1}{},\dots,\chainmem{E}{k}{}$ are numbered in
    increasing order. Each $v = \chainmem{E}{i}{}$, for each
    $u \in \chainres{E}{v}{}$ knows all edges
    $e \in \ep \cup \ebar$ with their tail at $u$, and there are
    $\tilde{O}(n^{1 - 2/p} \cdot \comdeg v)$ such edges.
\end{definition}

We now state the main result of this section, that we can construct a
$(p,p')$-split $K_p$-partition tree.

\begin{theorem}%
\label{thm:congest-tree-algo-clique}
    Given a $K_p$-compatible cluster $C = (\vcluster, E_C)$ and
    constants $p, p'$, $p' \le p$, $p > 3$, there is a deterministic
    \congest algorithm on $C$ that, in $n^{1 - 2/p \rc}$ rounds,
    constructs a $(p',p)$-split $K_p$-partition tree $T$ for
    $V_1 = \vlist$, $V_2 = V \setminus \vlist$,
    $E_1 = E(\vlist, \vlist)$, $E_2 = \ep$, and $E_{12} = \ebar$ for
    $a,b = \tilde{O}(k^{1/p})$ so that all parts in $T$ are known to
    all $\vlist$.
\end{theorem}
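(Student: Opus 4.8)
The plan is to mirror the structure of the $K_3$ construction: first show that a single layer of a $(p',p)$-split $K_p$-partition tree can be built by a \specialalgo, then simulate it with Theorem~\ref{thm:simulate-special}, then load-balance the output between layers, and finally iterate $p-1$ times. The two genuinely new ingredients relative to the $K_3$ case are (i) the reduction from a $K_p$-compatible cluster to a $K_p$-input cluster (Definition~\ref{def:centralized-input-cluster}), which sets up the \conscont\ needed to feed a streaming algorithm with the externally-supplied edge sets $\ebar,\ep$; and (ii) designing the per-layer \specialalgo so that it simultaneously maintains counters for all six balancing constraints \consdegbb, \consupdegbb, \consdegba, \consdegaa, \consupdegaa, \consupdegab, choosing which family of counters applies according to whether the current layer index is below or above $\pi = p-p'$ (i.e.\ whether we are partitioning $V_2=V\setminus\vlist$ or $V_1=\vlist$).

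First I would prove the reduction: starting from the distribution guaranteed by Definition~\ref{def:input-cluster}, use the $(\beta,V)$-vertex chain $\vchain{E}=\vlist$ with $\beta=n$ to gather, for each $v=\chainmem{E}{i}{}$, all edges of $\ebar\cup\ep$ whose tail lies in $\chainres{E}{v}{}$. Since each vertex is the tail of $\sentdeg u = \tilde O(n^{1-2/p})$ such edges on average and $\mu\in\Omega(|\ep|/n)$, a vertex-chain where consecutive blocks of $\approx n/k$ tail-vertices are assigned to chain elements in degree-weighted fashion keeps the per-vertex load at $\tilde O(n^{1-2/p}\cdot\comdeg v)$; routing these via Theorem~\ref{thm:cs-routing} with $\delta=n^{1-2/p}$ costs $n^{1-2/p\rc}$ rounds. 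The endpoint-degree bookkeeping (the third bullet of Definition~\ref{def:input-cluster}, that $\sentdeg u$ is held by exactly one $v\in\vlist$) is what lets the chain elements know how to weight the blocks. The hard part here is arguing that this rearrangement actually yields the \conscont\ property for the streams we will define — i.e.\ that the tokens can be laid out contiguously, in vertex order, with each vertex holding an interval of size $\le T_{\max}=\tilde O(n^{2/p})$; I would choose the stream so that vertex $v$'s tokens encode, for $v$ and for each tail $u$ assigned to $v$ in the chain, the degrees $|E(\{v\},W)|$ (resp.\ $|E(\{u\},W)|$) into each ancestor part $W$ of the current node, plus the total degrees into $V_1,V_2$.

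Next, the single-layer \specialalgo: analogous to Lemma~\ref{lem:k3-htree-streaming}, greedily accumulate vertices into the current part, maintaining one counter per active constraint, and \emph{close} the part and start a new one whenever any counter would overflow. For layers $|S|<\pi$ the active constraints are \consdegbb, \consupdegbb, \consdegba (and an \conssz-type bound of $O(n/b)$); for $|S|\ge\pi$ they are \consdegaa, \consupdegaa, \consupdegab (with a size bound $O(k/a)$). Here I expect to use \maintok{s} summarizing each vertex's block (the sum of the fine-grained \auxtok{s}) and to actually invoke \auxop when adding a whole block would overflow a counter — so, unlike the $K_3$ case, $\paramq$ need not be $0$; it is bounded by the number of parts, $O(a+b)=\tilde O(k^{1/p})$, since each \auxop precedes closing a part. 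A counting argument identical in spirit to Lemma~\ref{lem:k3-htree-streaming} — each overflow consumes a $1/c_i$ fraction of the relevant edge/size budget — bounds the number of parts by $a$ (resp.\ $b$) for suitable constants $c_1,c_2$; the additive slack terms $n$ (resp.\ $k$) in the constraints are exactly what absorbs a single vertex's or block's contribution, which is why the error term is $n$ rather than $n^{2/3}$ as in $K_3$. Then I plug this into Theorem~\ref{thm:simulate-special} with $\zeta=O(b)$ parallel instances (one per part of the previous layer), $T_{\max}=\tilde O(n^{2/p})$, $\lambda=\tilde\Theta(k^{1/p})$, giving a per-layer cost of $\left(\frac{T_{\max}}{\delta}(\zeta+\frac k\lambda)+(\paramq+1)(\lambda+\frac\zeta\delta)\right)\rcf = n^{1-2/p\rc}$.

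Finally, load balancing and iteration: after each layer, the output intervals are only loosely distributed, so — exactly as in Lemmas~\ref{lem:k3-amplify-distribute} and~\ref{lem:k3-chain-distribute} — use amplifier vertex chains to broadcast the $\tilde O(k)$ interval-endpoint tokens of the new layer to all of $\vlist$ (so that every vertex knows every completed layer, which Theorem~\ref{thm:congest-tree-algo-clique} demands and which is needed to start the next layer's streams), at cost $n^{1-2/p\rc}$ per layer; this also re-establishes \conscont\ for the next round of streaming. Repeating for all $p-1$ layers multiplies the round count by the constant $p-1$, giving the claimed $n^{1-2/p\rc}$ total. The main obstacle, I expect, is the simultaneous-counter analysis for the $|S|=\pi$ boundary layer, where a node partitioning $V_1$ has ancestors partitioning both $V_1$ (constraints \consupdegaa) and $V_2$ (constraints \consupdegab), so the \specialalgo must track two structurally different ancestor-edge counters against two different budgets ($\tilde m_1/a^2+k$ and $\tilde m_{12}/(ab)+n$) at once, and the part-count bound must hold even though overflows of either type close the same part; checking that the constants $c_1,c_2$ can be chosen uniformly to make the total part count at most $a$ (resp.\ $b$) across all six constraint families is the delicate bookkeeping step.
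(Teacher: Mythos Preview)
Your overall architecture---reorganize into a $K_p$-input cluster, run a counter-based \specialalgo per part of the previous layer using \maintok{s} that aggregate a chain-interval and \auxtok{s} for individual vertices, simulate via Theorem~\ref{thm:simulate-special}, broadcast the new layer, repeat $p$ times---is exactly the paper's. The six-counter analysis and the observation that $\paramq>0$ here (bounded by the number of parts, since each \auxop precedes a part closure) are also right, and the $|S|=\pi$ boundary you worry about is handled in the paper by the same uniform choice of $c_1,c_2$ you anticipate.

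Where your proposal goes off is in the simulation parameters, and this is not cosmetic: with your numbers the round bound does not close. First, $\zeta$ is not $O(b)$; it is one instance per part of the \emph{current} bottom layer, which for the last layer is $O(k^{1-1/p})$. Second, your $T_{\max}=\tilde O(n^{2/p})$ and the simultaneous claim that \maintok{s} summarize blocks are inconsistent: if each $v\in\vlist$ emits one \maintok summarizing its whole chain interval $\chainres{E}{v}{}$ (with the per-vertex data as \auxtok{s}), then $T_{\max}=O(1)$, which is precisely what the paper does; if instead you have one \maintok per external vertex (which is where $n/k\le n^{2/p}$ comes from), then there is nothing left for \auxtok{s} to do and $\paramq$ should be $0$. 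The paper takes the former route and sets $T_{\max}=O(1)$, $\lambda=1$, $\zeta=O(n^{1-1/p})$, giving a per-layer cost of $n^{2/p+o(1)}\le n^{1/2+o(1)}$---the dominant $n^{1-2/p+o(1)}$ term comes entirely from the input-cluster reorganization, not the tree construction.

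Two further simplifications you are over-engineering: because $\delta=n^{1-2/p}\ge n^{1/2}$, every $v\in\vlist$ can absorb $O(n)$ messages in $n^{1/2+o(1)}$ rounds, so broadcasting a layer (or the $n$ values $\sentdeg u$ needed to set up $\vchain{E}$) is done by the trivial aggregate-then-double scheme of Lemma~\ref{lem:n-message-broadcast}, not amplifier chains; and once all $\sentdeg u$ are globally known, the chain $\vchain{E}$ is computed \emph{locally} by every vertex, so no degree-weighted allocation protocol is needed.
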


We split the proof into two parts. In
Theorem~\ref{thm:central-to-tree}, we show that a $K_p$-input cluster
can construct the desired partition using
Theorem~\ref{thm:simulate-special}. In
Theorem~\ref{thm:input-to-central}, we show that a $K_p$-compatible
cluster can reorganize itself into an $K_p$-input cluster.

First, we provide the following broadcasting routine, used
in~\cite{CCGL20}.

\begin{lemma}%
\label{lem:n-message-broadcast}
    Given a $K_p$-compatible cluster
    $\genericCluster = (\genericClusterNodes, \genericClusterEdges)$
    and a set $M$ of $O(n)$ messages such that each message $m_i$ is
    initially known by exactly one $v \in \vlist$, there exists a
    deterministic \congest algorithm that finishes in
    $n^{1/2 \rc}$ rounds and makes all messages $m_i \in M$
    known to all $v \in \vlist$.
\end{lemma}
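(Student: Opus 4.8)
The plan is to reduce the broadcast to two invocations of the deterministic routing scheme of Theorem~\ref{thm:cs-routing}, separated by a cheap index-based rebalancing step whose only purpose is to guarantee that no single vertex has to disseminate too many messages. Throughout I use the following facts about a $K_p$-compatible cluster: $p\ge 4$; every $v\in\vlist$ knows $\vlist$ and $k=|\vlist|$; $\comdeg v\ge\delta=n^{1-2/p}\ge n^{1/2}$; and $C$ has good conductance, so that by Theorem~\ref{thm:cs-routing} any routing instance in which every vertex of $\vlist$ is a source and a destination of $O(L)\comdeg v$ messages --- the remaining vertices of $\vcluster$ acting only as relays --- costs $L\rcf$ rounds. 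Identify $\vlist$ with $\{1,\dots,k\}$ by sorted identifier and write the messages as $m_1,\dots,m_N$ with $N=|M|=O(n)$, the holder of $m_i$ also knowing $i$.

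First I would \emph{rebalance}: each vertex sends every message $m_i$ it holds to the vertex of rank $((i-1)\bmod k)+1$ in $\vlist$. Afterwards each vertex holds exactly the messages whose index lies in a single residue class modulo $k$, hence $O(n/k)$ of them, while no vertex sends more than $|M|=O(n)$ messages. Taking $L=n^{2/p}$ we have $L\comdeg v\ge n^{2/p}\cdot n^{1-2/p}=n$, which dominates both the per-vertex send count and receive count, so this step finishes in $n^{2/p}\rcf$ rounds. Then I would \emph{broadcast}: each vertex sends one copy of each of its (now $O(n/k)$) messages to every vertex of $\vlist$, which it knows. Now each vertex sends $O(n/k)\cdot k=O(n)$ messages and receives exactly one copy of each of the $N=O(n)$ messages, so again $L=n^{2/p}$ fits the hypotheses of Theorem~\ref{thm:cs-routing} and this step also costs $n^{2/p}\rcf$ rounds. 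Every $v\in\vlist$ then holds all of $M$, and since $p\ge 4$ the total $n^{2/p}\rcf$ is $n^{1/2\rc}$.

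The only point needing care is the rebalancing, and it is also the reason it is there: without it, a vertex that initially held all $\Theta(n)$ messages would have to send $\Theta(nk)$ copies during the broadcast, forcing $L=\Theta(n^{2/p}k)$ and a round complexity of $n^{2/p}k\rcf$, far above the target. The residue-based round-robin is exactly what caps each vertex's outgoing count at $O(n)$ while being deterministic and requiring no information beyond $k$ and the message indices. (If the messages did not come pre-indexed $1,\dots,N$, a consistent global numbering can still be built in $n^{o(1)}$ rounds, since $C$ has diameter $n^{o(1)}$ by Theorem~\ref{thm:diameter-conductance}: a prefix sum of the per-vertex message counts along a virtual balanced binary tree on $\vlist$, each of whose $O(\log k)$ levels is routed by Theorem~\ref{thm:cs-routing} with $L=O(1)$, lets every vertex index its own messages locally.)
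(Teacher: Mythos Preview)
Your proof is correct, but it follows a different route from the paper's. The paper first gathers all of $M$ at a single leader $\hat v\in\vlist$ (one routing call, since $\hat v$ receives $O(n)$ messages and $\comdeg{\hat v}\ge n^{1/2}$), and then runs a doubling broadcast: in each of $O(\log k)$ steps, every vertex that already knows $M$ sends all of $M$ to one new vertex, so each step is again a routing instance with $O(n)$ messages per vertex. Your approach instead rebalances by index so that every vertex holds $O(n/k)$ messages, then performs a single all-to-all send. Both schemes respect the $O(n)$-per-vertex load needed for $L=n^{2/p}$; the difference is that the paper pays $O(\log k)$ routing calls and never relies on message indices, while yours uses exactly two routing calls but needs a global numbering of $M$ (which, as you note, can be manufactured via a prefix sum if absent). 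Either way the $O(\log k)$ factor and the prefix-sum cost are absorbed into $n^{o(1)}$, so the bounds coincide.
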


\begin{proof}
    Let $\hat v$ be the lowest-numbered vertex in $\vlist$. Each
    $v \in \vlist$ sends each message $m_i$ that it initially holds to
    $\hat v$. Since each $v \in \vlist$ sends and receives $O(n)$
    messages, this can be done in $n^{1/2 \rc}$ rounds.

    Now $\hat v$ knows the sequence $M$ of all $O(n)$ messages. We can
    make $M$ known to all $v \in \vlist$ in $n^{1/2 \rc}$
    rounds by performing $O(\log k)$ steps in which each
    $v \in \vlist$ that knows $M$ sends $M$ to a unique vertex
    $u \in \vlist$ that does not know $M$. In each step, a vertex
    sends and receives $O(n)$ messages, which can be done in
    $n^{1/2 \rc}$ rounds. Also, after each round, the
    number of $v \in \vlist$ that know $M$ doubles, so that all
    $v \in \vlist$ know $M$ after $O(\log n)$ rounds.
\end{proof}

\begin{theorem}%
\label{thm:central-to-tree}
    Given a $K_p$-input cluster $C = (\vcluster, E_C)$ and constants
    $p, p'$, $p' \le p$, $p > 3$, there is a deterministic \congest
    algorithm on $C$ that, in $n^{1/2 \rc}$ rounds, constructs a
    $(p',p)$-split $K_p$-partition tree $T$ for $V_1 = \vlist$,
    $V_2 = V \setminus \vlist$, $E_1 = E(\vlist, \vlist)$,
    $E_2 = \ep$, and $E_{12} = \ebar$ for $a,b = \tilde{O}(k^{1/p})$
    so that all parts in $T$ are known to all of $\vlist$.
\end{theorem}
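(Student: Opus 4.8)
The plan is to build $T$ one layer at a time, starting from the root, in the same spirit as the proof of Theorem~\ref{thm:congest-tree-algo-tri}: for each new layer, run one \specialalgo for every part of the previously completed layer, simulate all of them in parallel with Theorem~\ref{thm:simulate-special}, and then broadcast the resulting description of the new layer to all of $\vlist$ using Lemma~\ref{lem:n-message-broadcast}, so that the algorithms for the next layer can name ancestor parts. Write $\pi = p - p'$, so that the top $\pi$ layers partition $V_2 = V\setminus\vlist$ with branching $b$ and the bottom $p'$ layers partition $V_1 = \vlist$ with branching $a$, where $a,b = \tilde\Theta(k^{1/p})$. The \specialalgo for a part $U$ is the natural generalization of the one in Lemma~\ref{lem:k3-htree-streaming}: it sweeps the vertices of the set being partitioned in increasing order of vertex number, maintaining one counter for each balancing constraint of Definition~\ref{def:split-tree} that is in force at the depth of $U$ (\consdegbb, \consupdegbb, \consdegba if $U$ is in a top layer; \consdegaa, \consupdegaa, \consupdegab if $U$ is in a bottom layer), and whenever adding the next vertex would push some counter past the ``soft'' version of its threshold --- the stated threshold with its additive $n$ (resp.\ $k$) term removed --- it closes the current child part, opens the next, and outputs the corresponding interval endpoint. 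Its stream is the one from the second bullet of Lemma~\ref{lem:k3-htree-streaming}, enriched so that each vertex is represented by its degrees into $V_1$, into $V_2$, and into every ancestor part of $U$.

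These degrees are exactly what a $K_p$-input cluster provides, after the scalars $m_1,m_2,m_{12}$ used by the thresholds are aggregated over $\vlist$ in $\tilde O(1)$ rounds (the cluster has $O(\polylog n)$ diameter). For a top layer the partitioned vertices lie outside the cluster, and each chain vertex $v = \chainmem{E}{i}{}$ of $\vchain{E}$ knows all $\ep\cup\ebar$ edges with a tail in $\chainres{E}{v}{}$; so $v$ forms one \maintok summarizing the total degrees of the vertices of $\chainres{E}{v}{}\cap V_2$, together with the per-vertex \auxtok{s}. Hence $\nin = O(k)$, $T_{\max} = O(1)$, $\nout = \tilde O(k^{1/p})$, $\paramy = \nout$, and $\paramq$ is at most the number of child parts, i.e.\ $\tilde O(k^{1/p})$: \auxop is invoked on a block only when its \maintok shows that adding the whole block would overshoot a hard threshold, and then at least one part closes inside that block, so each invocation of \auxop can be charged injectively to a distinct closure. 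For a bottom layer the partitioned set is $V_1$, each $v\in\vlist$ holds its own incident edges of $E_1$ and of $\ebar$ and hence knows its own degrees, so each $v$ contributes one \maintok, $T_{\max} = O(1)$, and $\paramq = 0$, exactly as in the $K_3$ case; in all cases $L = O(\polylog n)$ and the state is $O(\polylog n)$. There are $p = O(1)$ layers; in each, the number $\zeta$ of parallel algorithms is at most $\tilde O(k^{1-2/p})$ (the number of parts of the previous layer), and we take $\lambda = \tilde O(k^{1/p})$, so $\lambda\zeta \le k$. Plugging $\delta = n^{1-2/p}$ and $k\le n$ into Theorem~\ref{thm:simulate-special} bounds each layer's simulation by $\big(\tfrac{1}{n^{1-2/p}}(\zeta+\tfrac k\lambda) + (\paramq+1)(\lambda+\tfrac{\zeta}{n^{1-2/p}})\big)n^{o(1)} = \big(n^{1/p}+n^{2/p}\big)n^{o(1)} = n^{2/p\rc} \le n^{1/2\rc}$ for $p\ge 4$, and broadcasting the $O(n)$ interval endpoints of a layer costs $n^{1/2\rc}$ by Lemma~\ref{lem:n-message-broadcast}; summing over the $p = O(1)$ layers gives the claimed $n^{1/2\rc}$ total, with all parts of $T$ known to all of $\vlist$ at the end.

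What remains is to show that the greedy \specialalgo always produces a legal child partition in a $(p',p)$-split $K_p$-partition tree, i.e.\ that every part satisfies every constraint of Definition~\ref{def:split-tree} and that the number of parts is at most $b$ (top) or $a$ (bottom); this is the generalization of the counting argument in the proof of Lemma~\ref{lem:k3-htree-streaming} to the six split-tree constraints, and I expect it to be the main obstacle. A part respects the soft threshold of every constraint except possibly the one that closed it, and for that one it overshoots by at most the degree of a single vertex into the relevant set, so one must verify that the additive slack ($n$ for the $V_2$-partitioning constraints \consdegbb, \consupdegbb, \consdegba and for the cross constraint \consupdegab; $k$ for the purely-$V_1$ constraints \consdegaa, \consupdegaa) dominates that single-vertex degree, using $|E(\{v\},V_2)| < n$, $|E(\{v\},V_1)| \le k$, and the fact that a vertex of $\vlist$ is incident to only $\tilde O(\comdeg v) \le \tilde O(n)$ edges of $\ebar$; closing a part the instant a soft threshold is crossed then keeps it within its hard bound even when the crossing occurs in the middle of a chain vertex's block. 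For the bound on the number of parts, summing the edges charged across all sibling parts --- $O(m_1)$, $O(m_2)$, or $O(m_{12})$ (equivalently a constant times the matching $\tilde m$) depending on the constraint, each edge being counted $O(1)$ times --- against the soft thresholds shows that each ``active'' constraint closes only $O(a)$ or $O(b)$ parts, while the remaining constraints are automatically satisfied because the set being partitioned is small; choosing the constants $c_1, c_2$ large enough then forces at most $a$ (resp.\ $b$) parts, exactly as in Lemma~\ref{lem:k3-htree-streaming}. Together with the round and distribution bounds of the previous paragraph, this establishes the theorem.
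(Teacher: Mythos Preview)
Your proposal is correct and follows essentially the same approach as the paper: build the tree layer by layer, running one counter-based \specialalgo per part of the previous layer via Theorem~\ref{thm:simulate-special}, then broadcast the resulting $O(n)$ endpoints with Lemma~\ref{lem:n-message-broadcast}. The paper packages the per-layer step as Lemmas~\ref{lem:kp-htree-streaming} and~\ref{lem:kp-htree-one-layer} and then proves Theorem~\ref{thm:central-to-tree} by invoking the latter $p$ times; you inline the same content.

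Two small discrepancies worth noting. First, the paper simply takes $\lambda = 1$ in Theorem~\ref{thm:simulate-special}, whereas you take $\lambda = \tilde O(k^{1/p})$; both choices yield the $n^{2/p\rc}\le n^{1/2\rc}$ bound, so this is immaterial. Second, your count of $\zeta$ is off by one level: when constructing the last layer there is one algorithm per part of layer $p-2$, which is $\tilde O(k^{(p-1)/p}) = \tilde O(k^{1-1/p})$ parts, not $\tilde O(k^{1-2/p})$. This does not break anything (with the corrected $\zeta$ and either choice of $\lambda$ the simulation still costs $n^{2/p\rc}$, and $\lambda\zeta \le k$ still holds), but you should fix the stated bound.
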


As for $p = 3$, we show a \specialalgo constructing a layer of a
$(p,p')$-split $K_p$ partition tree. The algorithm makes use of
\auxtok{s}, as for $p > 3$ we can have $v \in \vlist$ that are
responsible for edges in $\ebar \cup \ep$ that emanate from a large
number of $u \in V \setminus \vlist$. To obtain a good runtime with
Theorem~\ref{thm:simulate-special}, we need a low $T_{\max}$, i.e.,
vertices hold few \maintok{s}. If we create a \maintok for each
$u \in V$, we might have some $v \in \vlist$ that are responsible for
a large number of $u \in V$ holding too many \maintok{s}. Therefore,
we instead aggregate the information about all $u \in V$ for which
$v \in \vlist$ is responsible into a single \maintok and encode the
information about individual $u$ in \auxtok{s}.

\begin{lemma}%
\label{lem:kp-htree-streaming}
    There exist constants $c_1,c_2$ such that, given a split graph
    $G$, some $2 \le p' < p$, $0 \le i < p$ levels of a $(p,p')$-split
    $K_p$-partition tree $T_{G',K_p}$ for $c_1,c_2$, and a part
    $U_{S,j}$ in the $i$-th level of $T$, $i = |S|$, there exists a
    \specialalgo with parameters $L = O(\polylog n)$,
    $\nin = \tilde{O}(n)$, $\nout = O(n^{1/p})$,
    $\paramq = O(n^{1/p})$, $\paramy = \nout$ that reads a stream $S$
    and produces a stream $R$, where:
    \begin{itemize}
        \item The elements of $R$ are the endpoints of intervals
            of vertex numbers that define a valid child partition of
            $U_{S,j}$ in some $(p,p')$-split $K_p$-partition tree for
            $c_1,c_2$.
        \item Let $W = V_2$ if $i + 1 < \pi$ and $W = V_1$ otherwise.
            Let $\sigma$ be an arbitrary partition of $W$ into
            $k = |V_1|$ contiguously-numbered intervals
            $\sigma_1,\dots,\sigma_k$. The input stream $S$ contains
            $k$ tokens $\tau_l$, $1 \le l \le k$, where the token
            $\tau_l$ encodes
            $\sum_{v \in \sigma_l}|E(v, V_1)|$,
            $\sum_{v \in \sigma_l}|E(v, V_2)|$, and
            $\sum_{v \in \sigma_l}|E(v, U')|$ for each
            $U' \in {\tt anc}(U_{S,j})$. The tokens
            $\auxt_{l,1},\dots,\auxt_{l,|\sigma_l|}$ contain
            $|E(v, U')|$, $|E(v, V_1)|$, and $|E(v, V_2)|$ for each
            $v \in \sigma_l$ in order of increasing vertex number.
    \end{itemize}
\end{lemma}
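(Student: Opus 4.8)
The plan is to adapt the greedy construction behind Lemma~\ref{lem:k3-htree-streaming} to the six balancing inequalities of Definition~\ref{def:split-tree}, with the one new feature that the algorithm reads from \maintok{s} that aggregate a whole contiguous block $\sigma_l$ of $W$ and descends to the per-vertex \auxtok{s} through \auxop only when that block cannot be added wholesale. The algorithm maintains one additive counter per balancing inequality that governs the layer $W$ being built --- \consdegbb, \consupdegbb, \consdegba when $W=V_2$, and \consdegaa, \consupdegaa, \consupdegab when $W=V_1$ --- together with a size counter (as in the $K_3$ construction) and the first vertex number of the part currently open. Since ${\tt anc}(U_{S,j})$ is fixed, every ``ancestor-sum'' inequality is tracked by a single running counter, so the whole state is $O(p)$ numbers of $O(\log n)$ bits, i.e.\ $L=O(\polylog n)$; symmetrically, each \maintok and \auxtok carries $O(p)$ such numbers (the degrees of a single vertex, or the aggregated degrees of a block, into $V_1$, into $V_2$, and into each $U'\in{\tt anc}(U_{S,j})$, together with the relevant range of vertex numbers), so all tokens are $O(\polylog n)$ bits.

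When \readop returns the \maintok $\tau_l$ for block $\sigma_l$, the algorithm first uses the aggregates in $\tau_l$ to test whether adding all of $\sigma_l$ to the currently open part would overflow a counter. If not, it commits the block by updating the counters with the aggregates and moves on, touching no \auxtok. If yes, it performs \auxop, after which the next $|\sigma_l|$ \readop operations return $\auxt_{l,1},\dots,\auxt_{l,|\sigma_l|}$ and the algorithm inserts the vertices of $\sigma_l$ one at a time, and each time adding a vertex $v$ would overflow a counter it performs \writeop on the right endpoint of the current part, resets the counters, and opens a new part at $v$. The size counter keeps every part small enough that $|E(v,U')|\le|U'|$ is tiny for each ancestor part $U'$, so, using $p=O(1)$, the additive slack terms ``$+n$'' and ``$+k$'' absorb the contribution of any single vertex; hence a fresh part never overflows on its first vertex and the greedy process never stalls. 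Because the counters are additive, a block handled via \auxop forces at least one part boundary among its one-at-a-time insertions (otherwise the whole block would not have overflowed), so the number of \auxop invocations is at most the number of parts produced. The algorithm never reads an \auxtok except through \auxop; $\nin$ is the number $k\le n$ of \maintok{s}, so $\nin=\tilde O(n)$; and $\nout$ equals the number of parts, so $\paramy=\nout$ holds trivially.

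It remains to fix the constants and to verify that every emitted part obeys its six inequalities --- immediate from the counter thresholds and the single-vertex bound above --- and that the partition has at most the $b$ children allowed if $W=V_2$, or $a$ if $W=V_1$; this bound is $O(\max\{a,b\})=O(n^{1/p})$ and therefore simultaneously gives $\nout=O(n^{1/p})$ and, by the previous paragraph, $\paramq=O(n^{1/p})$. For the part count I would replay the $K_3$ bookkeeping constraint by constraint: a part closed by \consdegbb has $\ge c_1 m_2/b$ edges into $V_2$ out of $2m_2$, so there are $O(b/c_1)$ of them; a part closed by \consupdegbb has $\ge c_2|S|\tilde m_2/b^2$ edges into its $\le|S|+1$ ancestor parts, whose combined edge-count into $V_2$ is $O((|S|+1)(c_1 m_2/b+n))$ because each ancestor part itself obeys \consdegbb, giving $O((c_1+1)b/c_2)$ such parts (the two pieces $c_1 m_2 b/\tilde m_2$ and $nb^2/\tilde m_2$ are each $O(b)$ since $\tilde m_2\ge m_2$ and $\tilde m_2\ge nb$); parts closed by \consdegba or by the size counter are counted the same way, and the three ``bottom-layer'' constraints identically with $a,k,m_1,m_{12}$ in place of $b,n,m_2$. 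Summing these and taking the constants large enough (using also that $a$ and $b$ are within a constant factor in our application) keeps the total within budget. I expect this constant-chasing over six simultaneous constraints, and the check that the size counter makes the single-vertex corner uniformly harmless (the device that replaces the \conssz-based argument of the $K_3$ case), to be the main technical work; the streaming bookkeeping and the ``\auxop $\le$ parts'' bound are routine once the algorithm is set up.
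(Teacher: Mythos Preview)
Your proposal is correct and matches the paper's approach: a counter-based greedy (the paper's Algorithm~\ref{alg:kp-streaming}) that reads block aggregates, descends via \auxop only on overflow, bounds the part count constraint-by-constraint (the paper uses $c_1=8$, $c_2=36$), and observes that each \auxop forces at least one part closure so $\paramq\le\nout$. Two minor deviations are harmless but unnecessary: the paper uses no size counter here (Definition~\ref{def:split-tree} has no \conssz, and the $+n$/$+k$ slack alone absorbs a single vertex), and the part-count argument does not require $a$ and $b$ to be comparable, since the $W=V_1$ and $W=V_2$ cases are handled separately and each yields its own bound ($\le a$ or $\le b$).
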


\begin{proof}
    We employ a counter-based algorithm (see
    Algorithm~\ref{alg:kp-streaming}), using counters for the three
    constraints \consdegbb, \consupdegbb, and \consdegba if
    $i + 1 < \pi$ and \consdegaa, \consupdegaa, \consupdegab
    otherwise.

    \begin{algorithm}[htbp]
        \caption{
            Construct one layer of a $(p,p')$-split
            $K_p$-partition tree%
        }%
        \label{alg:kp-streaming}
        \begin{algorithmic}[1]
            \State $l \gets 0$
            \State Set each counter to 0
            \For {$\tau_l \in S$}
                \State \readop $\tau_l$
                \State Extract sums from $\tau_l$
                \State Add each sum to the appropriate counter
                \If {any counter exceeds its maximum value}
                    \State Restore previous states of counters by
                        subtracting the sums from $\tau_l$
                    \State \auxop $\auxt_l$
                    \For {$\auxt_{l,m} \in \auxt_l$}
                        \State \readop $\auxt_{l,m}$
                        \State $r \gets$ the vertex number of
                            $\auxt_{l,m}$
                        \State Extract sums from $\auxt_{l,m}$
                        \State Add each sum to the appropriate counter
                        \If {any counter exceeds its maximum value}
                            \State \writeop $[l, r - 1]$
                            \State Reset all counters to 0
                            \State Add each sum to the appropriate
                                counter
                            \State $l \gets r$
                        \EndIf
                    \EndFor
                \EndIf
            \EndFor
            \State \writeop $[r, \max W]$
        \end{algorithmic}
    \end{algorithm}

    We show that this \specialalgo satisfies
    Lemma~\ref{lem:kp-htree-streaming}. Each token contains
    $O(\log n)$ numbers, each polynomial in $n$, so
    $L = O(\polylog n)$. Because the partition $\sigma$ has $O(n)$
    parts, and there is one \maintok for each, $\nin = O(n)$. Each
    output token represents a part of a partition in a $(p,p')$-split
    $K_p$-partition tree, so by Definition~\ref{def:split-tree},
    $\nout = O(n^{1/p})$. The algorithm only performs \auxop when a
    new part is formed, which is also bounded by the number of parts
    $\paramq \le \nout = O(n^{1/p})$. Finally, because the total
    number of \writeop operations performed by the algorithm is
    $\nout$, the number of \writeop operations between any two \readop
    operations is also bounded by $\paramy = \nout$.

    This constructs a valid child partition of $U_{S,j}$. A vertex can
    always join a new part on its own without causing any counter to
    exceed its maximum value, due to the `$+k$' terms in the bounds of
    \consdegaa, \consupdegaa and the `$+n$' terms in those of
    \consdegbb, \consupdegbb, \consdegba, \consupdegab. Starting a new
    part each overflow guarantees properties 1--6 of
    Definition~\ref{def:split-tree}. It remains is to bound the number
    of times a counter can exceed its maximum value:
    \begin{enumerate}
        \item A part closed due to counter \consdegbb has
            at least $c_1m_2/b$ edges into $V_2$, of which there are
            at most $2m_2$ total (counting each endpoint twice). There
            can be at most $2b/c_1$ such parts.
        \item A part closed due to counter \consupdegbb
            has at least $c_2i\tilde{m}_2/b^2$ edges to its ancestor
            parts; due to constraint 1, the total count of such edges
            is at most $i(c_1m_2/b + n) \leq i(c_1 + 1)\tilde{m}_2/b$,
            implying at most $(c_1 + 1)b/c_2$ such parts.
        \item A part closed due to counter \consdegba has
            at least $c_1m_{12}/b$ edges into $V_1$, of which there
            are at most $2m_{12}$ total (counting each endpoint
            twice). There can be at most $2b/c_1$ such parts.
        \item A part closed due to counter \consdegaa has
            at least $c_1m_1/a$ edges into $V_1$, of which there are
            at most $2m_1$ total (counting each endpoint twice). There
            can be at most $2a/c_1$ such parts.
        \item A part closed due to counter \consupdegaa
            has at least $c_2(i - \pi)\tilde{m}_1/a^2$ edges into its
            ancestor parts at depth at least $\pi$, of which there are
            at most
            $
                (i - \pi)(c_1m_1/a + k) \leq
                (i - \pi)(c_1 + 1)\tilde{m}_1/a
            $.
            There can therefore be at most $(c_1 + 1)a/c_2$ such
            parts.
        \item A part closed due to counter \consupdegab
            has at least $c_2\pi\tilde{m}_{12}/(ab)$ edges to its
            ancestor parts at depth less than $\pi$, of which there
            are at most
            $p'(c_1m_{12}/b + n) \leq p'(c_1 + 1)\tilde{m}_{12}/b$.
            Thus, there are at most $(c_1 + 1)a/c_2$ such
            parts.
    \end{enumerate}
    Therefore, if there exist constants $c_1,c_2$ such that
    $2a/c_1 + (c_1 + 1)a/c_2 + (c_1 + 1)a/c_2 < a$ and
    $2b/c_1 + (c_1 + 1)b/c_2 + 2b/c_1 < b$,
    the counter-based process is guaranteed to produce a valid
    partition. Indeed, $c_1 = 8, c_2 = 36$ satisfy this property.
\end{proof}

Theorem~\ref{thm:simulate-special} can
simulate the algorithm described in Lemma~\ref{lem:kp-htree-streaming}
in $n^{1/2 \rc}$ rounds, as such.

\begin{lemma}%
\label{lem:kp-htree-one-layer}
    There exist constants $c_1,c_2$ such that, given a $K_p$-input
    cluster
    $\genericCluster = (\genericClusterNodes, \genericClusterEdges)$
    in graph $G = (V, E)$ for $\phi \in O(\polylog n)$ and
    $\delta = n^{1/p}$, where each $v \in \vlist$ knows the first
    $0 \le i < p$ layers of a $(p,p')$-split $K_p$-partition tree
    $T_{G,K_p}$ for $V_1 = \vlist$, $V_2 = V \setminus \vlist$,
    $E_1 = E(\vlist, \vlist)$, $E_2 = \ep$, $E_{12} = \ebar$,
    $a,b = \tilde{O}(k^{1/p})$, and some $2 \le p' \le p$ with
    constants $c_1,c_2$, there exists a deterministic \congest
    algorithm that finishes in $n^{1 - 2/p \rc}$ rounds and
    constructs a next layer for the tree $T_{G,K_p}$ so that all parts
    in the next layer are known to all $v \in \vlist$.
\end{lemma}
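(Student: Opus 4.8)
The plan is to construct the new layer by running, in parallel, one instance of the \specialalgo of Lemma~\ref{lem:kp-htree-streaming} for every part $U$ of the last completed layer of $T_{G,K_p}$, simulating all of these instances at once via Theorem~\ref{thm:simulate-special}, and then broadcasting their outputs so that the entire new layer becomes known to all of $\vlist$. Write $\pi = p - p'$, let $\zeta$ be the number of parts in the last completed layer, and let $W$ be $V \setminus \vlist$ or $\vlist$ according to whether the new layer lies above or below depth $\pi$, exactly as in Lemma~\ref{lem:kp-htree-streaming}; recall also that, being a $K_p$-compatible cluster, $C$ has $\delta = n^{1 - 2/p}$ and that every non-leaf node of the tree has at most $a,b = \tilde{O}(k^{1/p})$ children, so $\zeta \le \tilde{O}(k^{(p-1)/p})$.

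First I would assemble the input for all $\zeta$ instances into a single streaming input cluster (Definition~\ref{def:streaming-input-cluster}). Lemma~\ref{lem:kp-htree-streaming} permits choosing the partition $\sigma$ of $W$ into $k$ contiguous intervals freely, so I would take: when $W = \vlist$, the singleton intervals $\sigma_\ell = \{\ell\text{-th vertex of }\vlist\}$; and when $W = V \setminus \vlist$, the restriction to $V \setminus \vlist$ of the contiguous block of $V$ owned by the $\ell$-th vertex in the $(\beta,V)$-vertex chain $\vchain{E}$ that the $K_p$-input cluster supplies. In either case each $v \in \vlist$ holds exactly one \maintok per stream, so $T_{\max} = 1$ and \conscont\ holds. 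The point to verify is that each vertex can produce its tokens locally: the $\ell$-th vertex knows every edge of $E_1 \cup E_2 \cup E_{12}$ incident to the vertices of $\sigma_\ell$ --- its own communication edges and incident $\ebar$-edges when $W = \vlist$, and all $\ep \cup \ebar$-edges with tail in its chain block when $W = V \setminus \vlist$ --- and it knows the already-completed layers of $T_{G,K_p}$, hence the ancestor parts of every $U$; therefore it can compute the \maintok $\tau_\ell$ (the prescribed sums over $\sigma_\ell$) and the associated \auxtok{s} (the per-vertex counts) for all $\zeta$ streams. Since token lengths are $O(\polylog n)$, the hypotheses of Theorem~\ref{thm:simulate-special} are met.

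Next I would invoke Theorem~\ref{thm:simulate-special} with $T_{\max} = 1$, $\zeta \le \tilde{O}(k^{(p-1)/p})$, $\delta = n^{1-2/p}$, the values $\paramq,\paramy = O(n^{1/p})$ given by Lemma~\ref{lem:kp-htree-streaming}, and $\lambda = 1$, which is admissible as long as $\zeta \le k$; we may assume this, since otherwise $\vlist$ is so small that the target bound $n^{1 - 2/p \rc}$ already suffices to route the $\tilde{O}(n)$ relevant edge counts to one vertex by Theorem~\ref{thm:cs-routing} (using $\delta = n^{1-2/p}$), build the layer there, and broadcast it. Using $k \le n$ and $p \ge 4$, every term of the resulting round bound is $n^{1 - 2/p \rc}$: the term $(\zeta + k)/\delta$ is at most $n^{2/p \rc}$, the term $(\paramq + 1)\lambda$ is $O(n^{1/p})$, and since $\zeta/\delta \le n^{1/p \rc}$ the term $(\paramq + 1)(\zeta/\delta)$ is at most $n^{2/p \rc}$; all of these are $\le n^{1 - 2/p \rc}$ precisely because $p > 3$ makes $n^{2/p} \le n^{1/2} \le n^{1 - 2/p}$. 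Because each output token is stored by a single vertex during the simulation, afterwards each of the $\tilde{O}(n)$ output tokens --- the interval endpoints describing the parts of the new layer --- is held by exactly one $v \in \vlist$, and by the first bullet of Lemma~\ref{lem:kp-htree-streaming} together they describe a valid new layer of $T_{G,K_p}$.

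Finally I would broadcast these $\tilde{O}(n)$ tokens to all of $\vlist$ with the $O(n)$-message broadcast of Lemma~\ref{lem:n-message-broadcast}, costing $n^{1/2 \rc} \le n^{1 - 2/p \rc}$ rounds for $p \ge 4$, which dominates the $O(\polylog n)$-round cost of setting up $\sigma$ and $\vchain{E}$; the whole algorithm then runs in $n^{1 - 2/p \rc}$ rounds. I expect the main obstacle to be the parameter bookkeeping of the third paragraph: one must check that the overhead terms $\zeta/\delta$ and $(\paramq + 1)(\zeta/\delta)$, which arise from running $\tilde{O}(k^{(p-1)/p})$ simulators over a cluster whose vertices have degree only $n^{1-2/p}$, stay below $n^{1-2/p}$, and this is exactly the step in which the hypothesis $p > 3$ is used. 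A secondary point, handled in the second paragraph, is confirming that the comparatively unstructured input promised by a $K_p$-input cluster really does assemble into a streaming input cluster with $T_{\max} = 1$ and \conscont\ --- which is precisely what the vertex chain $\vchain{E}$ and the global knowledge of the completed layers provide.
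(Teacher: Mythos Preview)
Your proposal is correct and follows essentially the same approach as the paper: both invoke Theorem~\ref{thm:simulate-special} on the \specialalgo of Lemma~\ref{lem:kp-htree-streaming} with one instance per part of the last completed layer, using the chain $\vchain{E}$ to set up $\sigma$ so that $T_{\max}=O(1)$ and \conscont\ holds, taking $\lambda=1$, and then broadcasting the $\tilde O(n)$ output tokens via Lemma~\ref{lem:n-message-broadcast}. Your parameter bookkeeping matches the paper's (the paper even arrives at the same $n^{2/p\rc}$ bound for the simulation step), and you additionally handle the edge case $\zeta>k$ that the paper leaves implicit.
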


\begin{proof}
    We invoke \Cref{thm:simulate-special} on the streaming algorithm
    implied by Lemma~\ref{lem:kp-htree-streaming}. We execute several
    instances of this \specialalgo in parallel, one for each part of
    the lowest completed level of the $(p,p')$-split $K_p$-partition
    tree $T_{G,K_p}$. If $i + 1 \ge \pi$, $W = V_1 = \vlist$ and we
    can let each part $\sigma_i$ of the partition $\sigma$ correspond
    to $\chainmem{E}{i}{}$, so that each token $\tau_i$ is associated
    with a single auxiliary token $\alpha_{i,1}$ that holds the given
    information about one vertex. Since each vertex $v \in \vlist$
    knows all prior layers of the tree $T_{G,K_p}$ and its own edges,
    it holds the $O(1)$ tokens that describe it for each of the
    $O(n^{1 - 1/p})$ parts in layer $0 \le i < p$. If instead
    $i + 1 < \pi$, $W = V_2 = V \setminus V_1$ and we let $\sigma_i$
    be the interval of vertices corresponding to
    $\chainres{E}{\chainmem{E}{i}{}}{}$. Since each $v \in \vlist$
    knows all the past layers of the $(p,p')$-split $K_p$-partition
    tree and is only responsible for producing tokens that describe
    the vertices for which it knows
    $E(u, V) \cap (\ebar \cup \ep)$, we again find that it has
    all the necessary information to produce these tokens locally.

    Applying \Cref{thm:simulate-special} with
    parameters $T_{\max} \in O(1), \lambda = 1$ gives that all
    $O(n^{1 - 1/p})$ algorithms finish in parallel in
    $
        (
            \frac{1}{n^{1 - 2/p}}(k^{1 - 1/p} + k)
            + (n^{1/p} + 1)(1 + \frac{n^{1 - 1/p}}{n^{1 - 2/p}})
        ) \rcf
        =
        n^{2/p \rc}
        =
        n^{1/2 \rc}
    $
    rounds as desired. Together, the output streams of all of the
    executed algorithms contain all $O(n)$ parts in the next layer of
    $T$. We can distribute these parts so that they are known to all
    $v \in \vlist$ by applying Lemma~\ref{lem:n-message-broadcast}.
\end{proof}

We are now ready to prove Theorem~\ref{thm:central-to-tree}.

\begin{proof}[Proof of Theorem~\ref{thm:central-to-tree}]
    We apply Lemma~\ref{lem:kp-htree-one-layer} $p$ times, for a total
    round complexity of
    $n^{1/2 \rc}p = n^{1/2 \rc}$. Note that,
    because Lemma~\ref{lem:kp-htree-one-layer} guarantees that all
    parts of the new layer of $T$ are known to all $v \in \vlist$, no
    additional work is required to satisfy the preconditions of
    Lemma~\ref{lem:kp-htree-one-layer} between layers.
\end{proof}

Showing the following theorem concludes the proof of
Theorem~\ref{thm:congest-tree-algo-clique}.

\begin{theorem}%
\label{thm:input-to-central}
    Given a $K_p$-compatible cluster $C = (\vcluster, E_C)$, there
    exists a deterministic \congest algorithm such that, after
    $n^{1 - 2/p \rc}$ rounds, the input $\ep$ is
    held in a manner satisfying
    Definition~\ref{def:centralized-input-cluster}.
\end{theorem}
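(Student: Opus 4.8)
The plan is to elect a leader in the cluster, gather the out-degrees $\sentdeg u$ there, have the leader compute a degree-balanced contiguous partition of $V$ centrally, broadcast it, and finally route every edge of $\ebar\cup\ep$ to the vertex made responsible for its tail. Let $\hat v$ be the lowest-numbered vertex of $\vlist$. First, each $v\in\vlist$ forwards to $\hat v$ every pair $(u,\sentdeg u)$ it holds together with its own value $\comdeg v$. Because every vertex of $\vlist$ has $\comdeg v\ge\delta=n^{1-2/p}$ and $p\ge 4$ gives $n^{1-2/p}\ge n^{2/p}$, we have $n\le n^{1-2/p}\comdeg v$ for all $v$; hence no vertex sends, and $\hat v$ does not receive, more than $O(n)=O(n^{1-2/p}\comdeg{\hat v})$ messages, so \Cref{thm:cs-routing} (with $\phi=O(\polylog n)$) delivers them in $n^{1-2/p+o(1)}$ rounds. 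Now $\hat v$ knows $\sentdeg u$ for every tail $u$ and $\comdeg v$ for every $v\in\vlist$.

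The leader then computes the assignment $f_{\vchain{E}}$ by a single left-to-right sweep over $V$ and over $\chainmem{E}{1}{}<\dots<\chainmem{E}{k}{}$: a low-degree vertex ($\comdeg v<\mu/2$) is handed an empty interval and skipped, while a high-degree vertex $\chainmem{E}{i}{}$ receives the next block of consecutively numbered tails whose cumulative out-degree stays below $\theta_i:=c\,n^{1-2/p}\comdeg{\chainmem{E}{i}{}}$, the block being closed as soon as the next tail would push the cumulative value past $\theta_i$ (that tail starting the next block); here $c$ is a sufficiently large $\polylog n$ factor. Each resulting part then has at most $n$ tails and out-degree $O(n^{1-2/p}\comdeg v)$, so it constitutes a valid $(\beta,V)$-vertex chain $\vchain{E}=\vlist$ with $\beta=n$. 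The point that must be checked is that the sweep assigns \emph{every} tail before exhausting $\vlist$. Since for $p\ge 4$ each single tail has out-degree at most $n\le\theta_i/c$, a block closed on overflow carries out-degree at least $\theta_i-n\ge\theta_i/2$; thus if the sweep ran out of vertices with tails still unassigned, every high-degree vertex would have been closed on overflow, covering total out-degree at least $\tfrac12\sum_{v\in\vhd}\theta_v=\Omega\!\big(c\,n^{1-2/p}\sum_{v\in\vhd}\comdeg v\big)=\Omega\!\big(c\,n^{1-2/p}k\mu\big)$, where $\sum_{v\in\vhd}\comdeg v=\Omega(k\mu)$ is obtained exactly as in the proof of \Cref{lem:k3-chain-distribute}. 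But the total out-degree equals $|\ebar|+|\ep|=\tilde O(n^{1-2/p}k\mu)$ by summing the per-vertex bounds of \Cref{def:input-cluster}, which is a contradiction once $c$ exceeds the hidden $\polylog n$.

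It remains to distribute the partition and move the edges. The leader broadcasts the $k=O(n)$ interval endpoints to all of $\vlist$ using \Cref{lem:n-message-broadcast} in $n^{1/2+o(1)}\le n^{1-2/p+o(1)}$ rounds, after which every vertex can evaluate $f_{\vchain{E}}$ on any tail. Then each $v\in\vlist$ sends every edge $e\in\ebar\cup\ep$ it holds to $f_{\vchain{E}}(\text{tail of }e)$; as $v$ holds $\tilde O(\comdeg v)$ edges of $\ebar$ and $\tilde O(n^{1-2/p}\comdeg v)$ of $\ep$, it sends $\tilde O(n^{1-2/p}\comdeg v)$ messages, and by the partition it receives $O(n^{1-2/p}\comdeg v)$, so \Cref{thm:cs-routing} finishes this in $n^{1-2/p+o(1)}$ rounds. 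Afterwards each $v=\chainmem{E}{i}{}$ holds precisely the edges of $\ebar\cup\ep$ whose tails lie in $\chainres{E}{v}{}$, which number $\tilde O(n^{1-2/p}\comdeg v)$, as \Cref{def:centralized-input-cluster} demands, and the three stages sum to $n^{1-2/p+o(1)}$ rounds. I expect the main obstacle to be the partition step: arguing that the greedy contiguous assignment never exhausts the cluster vertices while keeping each load within $\tilde O(n^{1-2/p}\comdeg v)$. This is precisely where $p\ge 4$ enters — it forces $n\le n^{1-2/p}\comdeg v$, so any single vertex's out-edges fit within one part's budget and overflow wastes only a constant fraction of it — and it is also where one needs that $\vhd$ carries a constant fraction of the total communication degree of $\vlist$.
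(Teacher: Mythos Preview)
Your proof is correct and follows essentially the same approach as the paper: make the out-degrees $\sentdeg u$ globally available (the paper broadcasts them to all of $\vlist$ via Lemma~\ref{lem:n-message-broadcast}, while you gather them at a leader, compute, and then broadcast the partition endpoints via the same lemma), compute a degree-balanced contiguous assignment, and route every edge to the vertex responsible for its tail. The paper's own proof is a three-line sketch that does not spell out why a valid chain $\vchain{E}$ exists; your greedy argument using the overflow bound and the fact that $\vhd$ carries a constant fraction of $\sum_v\comdeg v$ supplies exactly the missing detail.
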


\begin{proof}
    Using Lemma~\ref{lem:n-message-broadcast}, make $\sentdeg u$ for
    all $u \in V$ known to all $\vlist$ within $n^{1/2 \rc}$ rounds.
    Now, all $\vlist$ locally find a vertex chain $\vchain{E}$
    satisfying Definition~\ref{def:centralized-input-cluster}. For
    each $(u, w) \in \ebar \cup \ep$ held by a $v \in \vlist$, $v$
    sends $(u, w)$ to $\chainass{E}{u}{}$. As each vertex sends and
    receives $n^{1 - 2/p}$ messages, this takes $n^{1 - 2/p \rc}$
    rounds.
\end{proof}


\section{Deterministic Triangle Listing in \texorpdfstring{$
    \boldsymbol{n^{1/3 \rc}}
$} {n\textasciicircum1/3 + o(1)} Rounds}%
\label{sec:triangles}

We are now ready to use $K_3$-partition trees to optimally list
triangles in \congest. 

\begin{theorem}[Triangle Listing in \congest]%
\label{thm:congest-tri-find}
    Given a graph $G = (V, E)$ with $n = |V|$ vertices, there exists a
    deterministic \congest algorithm that completes in
    $n^{1/3 \rc}$ rounds and lists all triangles in $G$.
\end{theorem}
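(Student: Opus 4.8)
The plan is to follow the recursive expander-decomposition framework of \cite{CPSZ,CS20}, plugging in the deterministic partition-tree construction of Theorem~\ref{thm:congest-tree-algo-tri} as the load-balancing engine inside each cluster. First I would apply Theorem~\ref{thm:cs-decomposition} with a suitable constant $\epsilon$ to obtain an $(\epsilon,\phi)$-expander decomposition $E = E_1 \cup \cdots \cup E_x \cup E\rem$ with $\phi = 2^{-O(\sqrt{\log n \log\log n})}$ and $|E\rem| \le \epsilon|E|$. For each $E_i$, I would form the induced subgraph $G[E_i] = (V_i, E_i)$, designate $\vinnerof{C_i}$ as the set of vertices with at least half their edges in $E_i$, and set $\vlistof{C_i} = \vinnerof{C_i}$, keeping only the vertices whose communication degree is at least $\delta = K^{1/3}$ (where $K = |V_i|$) in $\vlist$; the low-communication-degree vertices are handled by pushing their edges into the recursive residual. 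This turns each $C_i$ into a $K_3$-compatible cluster in the sense of Definition~\ref{def:k3-cluster}, since $\phi = O(\polylog n)$ is satisfied in the relevant sense up to the $\rct$ slack. The crucial bookkeeping is that, by Lemma~\ref{lem:remaining-edges-small} together with the observation that edges incident to low-degree vertices also form a small fraction, the edges \emph{not} covered as some $E_i^-$ restricted to the surviving $\vlist$ sets number at most $O(\epsilon)|E|$, so recursing on this residual graph gives $O(\log n / \epsilon)$ depth.

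Next, inside each cluster I would invoke Theorem~\ref{thm:congest-tree-algo-tri} to build a $K_3$-partition tree $T$ of $C[\vlist]$ in $k^{1/3}\rcf$ rounds, where $k = |\vlist|$; since $k \le K \le n$, this is $n^{1/3}\rcf$ rounds. The theorem guarantees that the root and middle layers of $T$ are known to every vertex of $\vlist$, that each leaf part is known to some $v \in \vhd$, and that each $v \in \vhd$ knows $O(\frac{1}{\mu}\comdeg v)$ leaf parts. By Theorem~\ref{thm:cfg-tree-can-list}, every triangle with an edge in $E(\vlist,\vlist)$ has all three of its edges lying among pairs of ancestor parts of some leaf part $U_{S,i}$. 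The listing step is then: the vertex responsible for $U_{S,i}$ must learn all edges $E(U,W)$ for $U, W \in {\tt anc}(U_{S,i})$; the constraints \consdeg, \consupdeg, \conssz of Definition~\ref{def:htree-congest} (with the sharp $n^{2/3}$-type error terms) guarantee that the number of such edges is $O(\tilde m / x + pk/x) = O(k^{2/3})$ per leaf part, and because the number of leaf parts held by $v \in \vhd$ is proportional to $\comdeg v / \mu$, each vertex receives $\tilde O(\comdeg v \cdot k^{1/3})$ edges over $\tilde O(k^{1/3})$ rounds of routing via Theorem~\ref{thm:cs-routing}. Gathering these edges and locally enumerating triangles in each leaf part completes the listing for the current cluster.

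The main obstacle I expect is verifying that the bandwidth bookkeeping actually closes — specifically, that a vertex of communication degree $\comdeg v$ can both send out its own $\comdeg v$ edges (broadcast-style, to whichever leaf parts need them) and receive the $\tilde O(\comdeg v \cdot k^{1/3})$ edges destined for the leaf parts it owns, all within the $L \cdot \polylog(\phi^{-1}) \cdot \rct$-round budget of Theorem~\ref{thm:cs-routing} with $L = O(k^{1/3})$. This requires checking that when an edge $(u,w)$ with $u,w \in \vlist$ must be learned by the owner of some leaf part, the routing demand is $O(\deg_C(\cdot))$ per vertex after multiplying by $L$; the \consdeg bound caps how many leaf parts a single edge can be relevant to, and the $\vhd$-restriction ensures no vertex owns disproportionately many parts. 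A secondary point to handle carefully is that edges incident to $V_i \setminus \vlist$ (low communication degree) and edges in $E\rem$ and cross-cluster edges must all be collected into the residual graph for the recursion without double-counting, and that after $O(\log n)$ levels the residual is empty; since each level runs in $n^{1/3}\rcf$ rounds and $\rct \cdot \polylog(n) = \rct$, the total is $n^{1/3}\rcf = n^{1/3 \rc}$ rounds, as claimed.
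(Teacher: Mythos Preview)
Your high-level strategy matches the paper's, but there is a genuine correctness gap in the wrapper step. You work directly with $G[E_i]$ and never invoke the enlarged edge set $E_i^+$ from \cite{CS20}. Consider a triangle $\{v_1,v_2,w\}$ with $(v_1,v_2)\in E(\vinnerof{C_i},\vinnerof{C_i})$ but $w\notin V_i$: the edge $(v_1,v_2)$ lies in $E_i^-$ and is removed from the recursion, yet the cluster $C_i=G[E_i]$ contains neither $w$ nor the edges $(v_1,w),(v_2,w)$, so this triangle is never listed anywhere. The paper's Lemma~\ref{lem:k3-wrapper} resolves this by taking $C_i=G[E_i^+]$, where $E_i^+$ is constructed (in \cite{CS20}) so that every triangle in $G[\bigcup_j E_j]$ containing an edge of $E(\vinnerof{C_i},\vinnerof{C_i})$ lies entirely in $G[E_i^+]$, while $G[E_i^+]$ still has conductance $\Omega(\phi)$ and each edge lies in at most two of the $E_i^+$. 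Without this ingredient your algorithm misses triangles.

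A second difference: you handle vertices of communication degree below $K^{1/3}$ by pushing their edges to the residual, but you do not argue that these edges total at most a constant fraction of $|E|$ (a cluster with average degree $\Theta(K^{1/3})$ could have a constant fraction of its vertices near the threshold). The paper avoids this issue entirely: inside Lemma~\ref{lem:k3-in-cluster} it lists all triangles through such low-degree vertices by exhaustive search (Lemma~\ref{thm:exhaustive-search}), having each learn its induced $2$-hop neighborhood in $O(K^{1/3})$ rounds. Finally, your sketch of the sending-side bound (``\consdeg{} caps how many leaf parts an edge is relevant to'') is not the right mechanism; the paper's proof of Lemma~\ref{lem:k3-in-cluster} does a level-by-level casework on the tree structure to show each edge is sent $O(k^{1/3})$ times, using only that each vertex lies in a unique part of each partition---the \consdeg{}/\consupdeg{} constraints govern the receiving side, not the sending side.
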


We split the proof of Theorem~\ref{thm:congest-tri-find}.
Lemma~\ref{lem:k3-wrapper} reduces the problem from general graphs to
listing triangles in a suitable communication cluster. This reduction
is described in~\cite{CS20}, and we provide a very brief summary here.
Lemma~\ref{lem:k3-in-cluster} shows that triangle listing can be
performed optimally in suitable communication clusters, which we prove
by applying the $K_3$-partition-tree construction algorithm of
Theorem~\ref{thm:congest-tree-algo-tri} to achieve optimal load
balancing.

\begin{lemma}%
\label{lem:k3-wrapper}
    Given a graph $G = (V, E)$, suppose for some function $f$ there
    exists a deterministic \congest algorithm that allows any
    $K_3$-compatible cluster on $G$ to list in $\tilde{O}(f(n))$
    rounds all instances $H'$ of $K_3$ such that all three vertices in
    $H'$ lie in $C$. Then there is a deterministic \congest algorithm
    on $G$ that runs in $\tilde{O}(f(n))$
    rounds and
    outputs all triangles in $G$.
\end{lemma}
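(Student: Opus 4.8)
The plan is to invoke the expander-decomposition and routing machinery of \cite{CS20} (Theorems~\ref{thm:cs-decomposition} and~\ref{thm:cs-routing}) recursively, peeling off a constant fraction of the edges at each level. First I would run the $(\epsilon,\phi)$-expander decomposition of Theorem~\ref{thm:cs-decomposition} with a suitably small constant $\epsilon$, obtaining edge sets $E_1,\dots,E_x,E\rem$ with $|E\rem|\le\epsilon|E|$ and each $G[E_i]$ a $\phi$-cluster for $\phi = 2^{-O(\sqrt{\log n\log\log n})} = \rct$. For each $E_i$ I form a $K_3$-compatible cluster: set $\vinnerof{C_i}$ to be the vertices of $G[E_i]$ holding a majority of their incident edges in $E_i$, and take $\vlist[C_i]$ to be those $v\in\vinnerof{C_i}$ with $\deg_{E_i}(v)\ge\delta = K^{1/3}$. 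The low-degree vertices of $\vinnerof{C_i}$ (those with $\comdeg{v}<K^{1/3}$) together with $E\rem$ contribute only a bounded number of ``leftover'' edges: by Lemma~\ref{lem:remaining-edges-small} the edges in $\bigcup_i E_i\setminus E_i^-$ number at most $2\epsilon|E|$, and the vertices dropped for low degree contribute at most $K^{1/3}$ edges each, summing to $o(|E|)$ across a cluster of $K$ vertices; adding $|E\rem|\le\epsilon|E|$, a single level removes at most $3\epsilon|E|$ edges and leaves a graph on which we recurse. Choosing $\epsilon$ a small enough constant makes the number of surviving edges shrink geometrically, so after $O(\log n)$ levels all edges have been assigned to some cluster as an ``inner'' edge.

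Next I would argue correctness: every triangle $H'$ of $G$ has all three of its edges within a single $E_i$ at the level where its last surviving edge is removed — more precisely, at each level a triangle is either fully contained in some $E_i^-$ (so all three vertices lie in $\vinnerof{C_i}$ and, after restricting to $\vlist[C_i]$, the hypothesized cluster algorithm lists it), or at least one of its edges is eliminated and the triangle survives to the next level of the recursion on the residual graph. Since the recursion bottoms out, every triangle is listed at exactly one level, in exactly one cluster, by the assumed algorithm. The one subtlety is the gap between $\vinnerof{C_i}$ and $\vlist[C_i]$: a triangle inside $E_i^-$ might use a vertex of low communication degree that was excluded from $\vlist[C_i]$; but each such vertex has fewer than $K^{1/3}$ incident edges in $E_i$, so its incident triangles can be detected by having that low-degree vertex simply broadcast all of its (few) incident edges inside the cluster — this is cheap and I would fold it into the cluster-level procedure, or alternatively route those edges to the residual graph and handle them recursively.

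The round-complexity bookkeeping is the routine part: each level costs the decomposition time $\rct$, plus the time to set up the clusters and route the residual/boundary edges (this is an $O(L)\cdot\deg(v)$-type routing task, so Theorem~\ref{thm:cs-routing} handles it in $f(n)\cdot\rct$ rounds since within each $\phi$-cluster the relevant load per vertex is $O(f(n))\cdot\comdeg{v}$), plus $\tilde O(f(n))$ for the in-cluster listing by hypothesis. Summing the geometric series over $O(\log n)$ levels and absorbing the $\rct$ and $\polylog n$ factors into the $\tilde O(\cdot)$ and into $f(n)\cdot\rct$, the total is $\tilde O(f(n))$ as claimed (with the understanding, standard in this line of work, that the $\rct$ from \cite{CS20}'s routing is absorbed, which is why the ultimate Theorem~\ref{thm:congest-tri-find} carries an $n^{o(1)}$ and not a polylog factor).

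**Main obstacle.** The technical heart is not the recursion bookkeeping but verifying that the clusters produced by Theorem~\ref{thm:cs-decomposition} actually meet the degree precondition $\delta = K^{1/3}$ of a $K_3$-compatible cluster \emph{while still retaining enough edges}: one must check that discarding vertices of communication degree below $K^{1/3}$ does not eliminate more than a constant fraction of $|E|$, and simultaneously that the resulting $\vlist[C_i]$ induces a subgraph with conductance still $\Omega(\phi)$ (up to $\polylog$ losses) so that Theorem~\ref{thm:cs-routing} applies inside it. Since this exact reduction is carried out in \cite{CS20}, I would cite their argument for the conductance-preservation and edge-accounting steps rather than reprove them, and spend the written proof on (i) the definition of $C_i$ and $\vinnerof{C_i},\vlist[C_i]$, (ii) the $3\epsilon|E|$ edge-loss bound via Lemma~\ref{lem:remaining-edges-small}, (iii) the triangle-coverage invariant across recursion levels including the low-degree-vertex patch, and (iv) the geometric summation of round complexities.
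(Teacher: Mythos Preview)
Your recursion-with-expander-decomposition plan matches the paper's, but the correctness argument has a real gap. The dichotomy you assert---``a triangle is either fully contained in some $E_i^-$, or at least one of its edges is eliminated and the triangle survives to the next level''---does not preserve triangles. Take $\{u,v\}\in E_i^-$ (so $u,v\in\vinnerof{C_i}$) but with the third vertex $w$ joined to $u$ and $v$ only by edges outside $E_i$ (say in $E\rem$). The cluster $G[E_i]$ does not contain this triangle, so the hypothesized cluster algorithm on $C_i$ cannot list it; yet the edge $\{u,v\}$ is peeled off before you recurse, so the triangle is no longer present in the residual graph either. It is simply lost.

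The paper closes this gap by invoking the $E_i^+$ construction of~\cite{CS20}: each cluster is \emph{enlarged} from $G[E_i]$ to a supergraph $G[E_i^+]$ with the three black-box guarantees that (i) every triangle containing an edge of $E(\vinnerof{C_i},\vinnerof{C_i})$ lies entirely inside $G[E_i^+]$, (ii) $G[E_i^+]$ still has conductance $\Omega(\phi)$, and (iii) each edge of $G$ lies in at most two of the sets $E_i^+$, so parallel routing across clusters incurs only constant overhead. Only with property~(i) does removing the inner edges and recursing become sound. You do flag ``conductance preservation'' as the main obstacle and propose to cite~\cite{CS20} for it, but the step you actually need from~\cite{CS20} is this cluster-expansion step for triangle \emph{coverage}; your sketch works with the unenlarged $G[E_i]$ throughout and so cannot be correct as written. (A smaller point: your low-degree-vertex patch is unnecessary at the wrapper level, since the lemma's hypothesis already grants a cluster algorithm that lists all triangles with three vertices in $\vcluster$, not merely in $\vlist$; that patch belongs inside Lemma~\ref{lem:k3-in-cluster}, not here.)
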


The high-level idea of the reduction as presented in~\cite{CS20} is to
run the deterministic expander decomposition of
Theorem~\ref{thm:cs-decomposition} on $G$ to obtain a number of
clusters $C_i$, list all triangles in each cluster in
$\tilde{O}(f(n))$ rounds, and then recurse on $E\rem$. Since $E\rem$
is bounded by a constant fraction of $n$, the recursion depth is
logarithmic in $n$; therefore, the overall runtime is
$\tilde{O}((f(n)\log n) = \tilde{O}(f(n))$.

More specifically, after obtaining sets $E_i^-$ and $\vinnerof{C_i}$
from the partition
$E = E_1 \cup E_2 \cup \cdots \cup E_x \cup E\rem$ produced by
Theorem~\ref{thm:cs-decomposition} as described in
Section~\ref{sec:preliminaries},~\cite{CS20} construct a set of
edges $E_i^+$ for each $E_i$. They then prove that
\begin{itemize}
    \item All $G[E_i^+]$ have high conductance and, in our notation,
        are therefore $K_3$-compatible clusters.
    \item For any edge $e \in E(\vinnerof{C_i}, \vinnerof{C_i})$, any
        triangle in
        $G[\bigcup_{1 \le i \le x} E_i]$ that contains $e$ is
        contained in $G[E_i^+]$.
    \item Each edge $e \in E$ is in at most two subgraphs $G[E_i^+]$,
        so that all subgraphs can use Theorem~\ref{thm:cs-routing} to
        route messages internally with only a constant overhead
        factor.
\end{itemize}

We can list all triangles in $G[E_i^+]$ by taking the $K_3$-compatible
cluster $C_i = G[E_i^+]$, letting $\vlistof{C_i}$ be the set of all
$v \in V_{C_i}$ with $\deg_{C_i}(v) \ge |V_{C_i}|^{1/3}$. Applying the
presupposed algorithm for listing on clusters outputs all triangles in
$G[\bigcup_{1 \le i \le x} E_i]$ in $\tilde{O}(f(n))$ rounds. By
Lemma~\ref{lem:remaining-edges-small},
recursing on $G[E \setminus \bigcup_{1 \le i \le x} E_i]$ produces a
logarithmic recursion depth, as desired.

\begin{lemma}%
\label{lem:k3-in-cluster}
    Given a graph $G = (V, E)$ and a $K_3$-compatible cluster
    $\genericCluster = (\genericClusterNodes, \genericClusterEdges)$,
    there is a deterministic \congest algorithm on $C$ that finishes
    in $k^{1/3} \rcf$ rounds and lists all triangles in
    $C$.
\end{lemma}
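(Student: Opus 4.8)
The plan is to build a $K_3$-partition tree of $C[\vlist]$ with \Cref{thm:congest-tree-algo-tri} and then let its leaf parts carve up the work of listing triangles among the vertices of $\vlist$, as sanctioned by \Cref{thm:cfg-tree-can-list}. First I would run the algorithm of \Cref{thm:congest-tree-algo-tri} to obtain, in $k^{1/3}\rcf$ rounds, a $K_3$-partition tree $T$ of $C[\vlist]$ whose root and first layer are known to all of $\vlist$, each of whose leaf parts is known to some $v \in \vhd$, and where each $v \in \vhd$ owns $O(\frac{1}{\mu}\comdeg v)$ leaf parts. By \Cref{thm:cfg-tree-can-list}, every triangle with all three vertices in $\vlist$ has all its edges running between distinct parts of ${\tt anc}(U)$, where $U$ is the leaf part reached by tracing a root-to-leaf path through the parts containing the triangle's vertices; since $|{\tt anc}(U)| \le 3$ and its two non-leaf members lie in the globally known root and first-layer partitions, it suffices for the owner of each leaf part $U$ to learn every $E_C$-edge running between two distinct parts of ${\tt anc}(U)$ and then locally enumerate triangles among the $O(k^2)$ candidate triples.

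I would realize the gathering step by routing with \Cref{thm:cs-routing} on $C = (\vcluster, E_C)$, noting that every part of $T$ consists only of $\vlist$-vertices, so that every vertex touched has $C$-degree equal to its communication degree. Writing ${\tt anc}(U) = \{A_0, A_1, U\}$ with $A_0, A_1$ globally known, the owner $v$ of $U$ requests, from each vertex of $A_0$ and each vertex of $U$, its $E_C$-edges into the (globally known) other parts of ${\tt anc}(U)$; every recipient can answer, each part has size $O(k/x) = O(k^{2/3})$ by constraint \conssz of \Cref{def:htree-congest}, and the responses are then routed back. Each owner finally enumerates the triangles it can see; correctness is immediate from \Cref{thm:cfg-tree-can-list}. (Triangles having a vertex outside $\vlist$ are handled by a separate elementary subroutine, using that such a vertex — in the decomposition-based application — has low $C$-degree and can therefore collect the adjacencies among its few neighbors.)

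The step I expect to be the main obstacle is checking that this routing finishes in $k^{1/3}\rcf$ rounds, i.e., that every vertex is a source and destination of $O(k^{1/3}\comdeg v)$ messages so that \Cref{thm:cs-routing} applies with $L = O(k^{1/3})$. The number of $E_C$-edges between distinct parts of ${\tt anc}(U)$ is $O(\tilde{m}/x^2 + k/x)$ by constraint \consupdeg of \Cref{def:htree-congest} instantiated with $p = 3$, $x = k^{1/3}$, and $\tilde{m} = \max\{|E(\vlist,\vlist)|, k^{4/3}\}$; since $|E(\vlist,\vlist)| \le |E(\vlist,\vcluster)| = k\mu$ and $\mu \ge \delta = K^{1/3} \ge k^{1/3}$, this is $O(k^{1/3}\mu)$, so an owner of $O(\frac{1}{\mu}\comdeg v)$ leaf parts receives $O(k^{1/3}\comdeg v)$ edges. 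In the other direction one argues that each edge of $C[\vlist]$ is needed by only $O(x) = O(k^{1/3})$ leaf parts — by counting the root-to-leaf paths whose ancestor set places the edge's two endpoints in distinct parts — and that each vertex of $\vlist$ is queried on behalf of only $O(x^2) = O(k^{2/3}) \le O(k^{1/3}\comdeg v)$ leaf parts; together these bound the messages sent and received per vertex by $O(k^{1/3}\comdeg v)$. Folding in the $O(\log n)$ factors already absorbed into $\rcf$ by \Cref{thm:cs-routing} then yields the claimed $k^{1/3}\rcf$ bound.
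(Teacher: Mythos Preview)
Your proposal is correct and follows essentially the same approach as the paper: split off triangles touching $\vcluster \setminus \vlist$ via the low-degree exhaustive search, build the $K_3$-partition tree with \Cref{thm:congest-tree-algo-tri}, and have each leaf owner gather the cross-ancestor edges via \Cref{thm:cs-routing}, bounding sends and receives by the same $O(k^{1/3}\comdeg v)$ counting arguments. The one cosmetic difference is that the paper has the owner query \emph{all three} ancestor parts and send them the interval endpoints of the remaining two, whereas you query only $A_0$ and $U$ and exploit that vertices of $U$ can report their edges into the globally known $A_0,A_1$; both schemes give the same asymptotic bounds and the same casework on how many leaves place a fixed edge between distinct ancestors.
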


\begin{proof}

    We divide the proof into two cases: triangles containing a vertex
    in $\vld = V_C \setminus \vlist$, and triangles with all three
    vertices in $\vlist$.

    \paragraph{\textbf{Low-degree vertices.}} We can list all $K_3$
    that contain low-degree vertices using an exhaustive search
    procedure. Lemma~\ref{thm:exhaustive-search} gives the guarantees
    of this process.

    \begin{lemma}[{\hspace{-1pt}\cite[Claim~19]{CFLLO}}]%
    \label{thm:exhaustive-search}
        Given a graph $G = (V, E)$, and some value $\alpha$, every
        vertex
        $v$ with $\deg(v) \le \alpha$ can deterministically learn its
        induced 2-hop neighborhood in $O(\alpha)$ rounds of
        \congest.
    \end{lemma}

    We apply Lemma~\ref{thm:exhaustive-search} to $v \in \vld$
    with $\alpha = K^{1/3}$ to list all triangles that contain a
    vertex in $\vld$ in $O(K^{1/3})$ rounds.
    
    \paragraph{\textbf{High-degree vertices.}} We were tasked with
    listing all triangles in $C$. We have now listed all triangles in
    $C$ involving at least one vertex in $\vld$, and are thus left
    with listing all triangles in $C$ for which all their vertices
    have degree at least $K^{1/3}$. To do so, we construct a
    $K_3$-partition tree over $C[\vlist]$.

    As $C$ is a $K_3$-compatible cluster, we can use
    \Cref{thm:congest-tree-algo-tri} in order to construct a
    $K_3$-partition tree $T$ over $C[\vlist]$. By
    \Cref{thm:congest-tree-algo-tri}, $T$ is constructed such that its
    root and middle layers are known to all of $\vlist$, and its leaf
    layer is dispersed across
    $\vhd = \{v \in \vlist : \comdeg v \ge \frac{1}{2}\mu\}$ such that
    every $v \in \vhd$ knows $O(\frac{1}{\mu}\comdeg v)$ parts of this
    layer.

    By \Cref{thm:cfg-tree-can-list}, if each $v \in \vhd$ learns all
    the edges that cross some parts in ${\tt anc}(U_{S,i})$ for
    each part $U_{S,i}$ known by $v$, then the vertices $\vhd$ will be
    able to list all triangles in $C[\vlist]$. By
    Definition~\ref{def:htree-congest}, the number of such edges is
    $O(k^{1/3}\mu + k^{2/3})$, which, since
    $\mu \ge K^{1/3} \ge k^{1/3}$, the number of such edges is
    $O(k^{1/3}\mu)$. Each vertex $v \in \vhd$ is assigned
    $O(\frac{1}{\mu}\comdeg v)$ parts, so it must learn
    $O(k^{1/3}\comdeg v)$ edges in total.\footnote{%
        Since we only use vertices $v$ with
        $\comdeg v \ge \frac{1}{2}\mu$, the rounding error from the
        divisions incurs at most a factor of two and does not affect
        the asymptotics of the number of edges each vertex must learn.
    }
    Each vertex $v \in \vhd$ uses the following process to learn the
    required edges for each leaf part $U_{S,i}$ it is assigned in
    parallel.
    \begin{enumerate}
        \item For each $U \in {\tt anc}(U_{S,i})$, for each
            $u \in U$, $v$ sends the vertex numbers that define the
            endpoints of the intervals
            ${\tt anc}(U_{S,i}) \setminus U$ to $u$.
        \item Each $u \in \vlist$ that receives an interval $U$ from a
            $v \in \vhd$ replies by sending all of the edges $E(u, U)$
            to $v$.
    \end{enumerate}

    During Step~1, each vertex $v \in \vhd$ sends $O(1)$ messages to
    each of the vertices in the $O(1)$ partitions
    $U \in {\tt anc}(U_{S,i})$ for each of the
    $O(\frac{1}{\mu}\comdeg v)$ leaf parts $U_{S,i}$ it is assigned.
    By Definition~\ref{def:htree-congest}, each part $U$ has
    $O(k^{2/3})$ vertices, so that the total number of messages sent
    by each $v \in \vhd$ is
    $O(\frac{1}{\mu}k^{2/3}\comdeg v) = O(k^{1/3}\comdeg v)$. Also,
    during Step~1, each vertex $u \in \vlist$ receives $O(1)$ messages
    for each leaf part of which it is a member, $O(k^{1/3})$ messages
    for each middle layer part of which it is a member, and
    $O(k^{2/3})$ messages for each root partition part of which it is
    a member. Since it is a member of $O(k^{2/3})$, $O(k^{1/3})$, and
    $O(1)$ of each type of part, respectively, the total number of
    messages received by each $v \in \vlist$ is $O(k^{2/3})$.
    Therefore, Step~1 completes in $k^{1/3} \rcf$ rounds.

    During Step~2, we have already established that each $v \in \vhd$
    receives $O(k^{1/3}\comdeg v)$ responses. We now show that each
    vertex $u \in \vlist$ sends each of its edges to some $v \in \vhd$
    at most $O(k^{1/3})$ times. Consider an edge $(u, w)$. We proceed
    by casework on the level of the part $U$ for which $u \in U$ and
    count the number of parts $W \ne U$ for which $w \in W$ and
    $U \in {\tt anc}(W)$ or $W \in {\tt anc}(U)$.
    \begin{enumerate}
        \item There is only one part $U$ in the root partition
            $U_\emptyset$ that contains $u$; call this part $U_0$.
            Vertex $w$ is in exactly one child part $W$ of $U_0$, and
            $W$ has $O(k^{1/3})$ children. For each of the
            $O(k^{1/3})$ child parts of $U_0$, $w$ is in exactly one
            of their children. $U_0$ contributes $O(k^{1/3})$
            occasions to send the edge $(u, w)$.
        \item There are $O(k^{1/3})$ first-layer parts $U$ that
            contain $u$. Each such part has exactly one child that
            contains $w$, and at most one is descended from the unique
            root-partition part $W_0$ that contains $w$. The first
            layer contributes $O(k^{1/3})$ occasions to send the edge
            $(u, w)$.
        \item There are $O(k^{2/3})$ leaf-layer parts that contain
            $u$. Each of the $O(k^{1/3})$ first-layer parts that
            contain $w$ are an ancestor of one such part, and
            $W_0$ is an ancestor of $O(k^{1/3})$ such parts. The leaf
            layer contributes $O(k^{1/3})$ occasions to send the edge
            $(u, w)$.
    \end{enumerate}
    Therefore, Step~2 also completes in $k^{1/3} \rcf$
    rounds.
\end{proof}

We are now ready to prove Theorem~\ref{thm:congest-tri-find}.

\begin{proof}[Proof of Theorem~\ref{thm:congest-tri-find}]
    By Lemma~\ref{lem:k3-in-cluster}, there exists a \congest
    algorithm that lists all triangles in a $K_3$-compatible cluster
    $C$ in $k^{1/3}\rcf = n^{1/3 \rc}$
    rounds. Therefore, by Lemma~\ref{lem:k3-wrapper}, there is a
    deterministic \congest algorithm that lists all triangles in $G$
    in $n^{1/3 \rc}$
    rounds, as desired.
\end{proof}

\section{Deterministic \texorpdfstring{$
    \boldsymbol{K_p}
$}{K\_p} Listing in \texorpdfstring{$
    \boldsymbol{n^{1 - 2/p \rc}}
$}{n\textasciicircum(1--2/p+o(1))} Rounds}%
\label{sec:cliques}

We now set out to use $(p,p')$-split $K_p$-partition trees to
optimally list cliques of size greater than 3. 

\begin{theorem}[$K_p$ Listing in \congest, $p > 3$]%
\label{thm:congest-clique-find}
    Given a graph $G = (V, E)$ with $n = |V|$ vertices and a constant
    $p > 3$, there is a deterministic \congest algorithm that, in
    $n^{1 - 2/p \rc}$ rounds, lists all $K_p$ in $G$.
\end{theorem}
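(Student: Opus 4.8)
The plan is to follow the same two-step template used for triangles in \Cref{thm:congest-tri-find}: a wrapper lemma (the analogue of \Cref{lem:k3-wrapper}) reducing the listing of $K_p$ in $G$ to listing $K_p$ inside suitable clusters, together with an in-cluster lemma (the analogue of \Cref{lem:k3-in-cluster}) that performs the listing inside a $K_p$-compatible cluster. The only structural changes from the triangle case are that the recursive reduction is the one of~\cite{CCGL20} --- which, unlike the triangle reduction of~\cite{CS20}, must route into each cluster also edges incident to vertices \emph{outside} it --- and that the $K_3$-partition tree of \Cref{thm:congest-tree-algo-tri} is replaced by the $(p',p)$-split $K_p$-partition tree built by \Cref{thm:congest-tree-algo-clique}.

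First I would state and prove the wrapper lemma: if, for every constant $2 \le p' \le p$, there is a deterministic \congest algorithm that, given any $K_p$-compatible cluster, lists in $\tilde O(f(n))$ rounds all copies of $K_p$ with at least $p'$ vertices in $\vlist$ (the remaining $p - p'$ vertices, and the edges among and to them, supplied via $\ebar$ and $\ep$), then all copies of $K_p$ in $G$ can be listed deterministically in $\tilde O(f(n))$ rounds. The proof runs the deterministic expander decomposition of \Cref{thm:cs-decomposition} with a constant $\epsilon$, and, following~\cite{CCGL20}, for each resulting cluster $C_i$ selects $\vlist$ to be the vertices of communication degree at least $n^{1 - 2/p}$ and assembles the external-edge sets $\ebar_i, \ep_i$ so that $C_i$ becomes a $K_p$-compatible cluster responsible for every copy of $K_p$ a constant fraction of whose edges lies in $E_i$; copies of $K_p$ containing a vertex of $G$-degree below $n^{1 - 2/p}$ are handled up front by exhaustive search (\Cref{thm:exhaustive-search}) with $\alpha = n^{1 - 2/p}$ in $O(n^{1 - 2/p})$ rounds, while copies covered by no cluster use only the $\le 3\epsilon|E|$ uncovered edges (\Cref{lem:remaining-edges-small} together with $|E\rem| \le \epsilon|E|$), so recursing on those edges gives logarithmic depth and total round complexity $\tilde O(f(n))$. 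The delicate part here is verifying that the sets $\ebar_i, \ep_i$ can be routed into place within $n^{1 - 2/p \rc}$ rounds via \Cref{thm:cs-routing} and distributed so as to meet the exact conditions of \Cref{def:input-cluster}, and that every copy of $K_p$ really is the responsibility of some cluster; this is essentially the bookkeeping of~\cite{CCGL20}.

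Second I would prove the in-cluster lemma: given a $K_p$-compatible cluster $C$ and a constant $2 \le p' \le p$, all copies of $K_p$ with at least $p'$ vertices in $\vlist$ can be listed in $n^{1 - 2/p \rc}$ rounds. Invoke \Cref{thm:congest-tree-algo-clique} to build, in $n^{1 - 2/p \rc}$ rounds, a $(p',p)$-split $K_p$-partition tree $T$ with $V_1 = \vlist$, $V_2 = V \setminus \vlist$, $E_1 = E(\vlist, \vlist)$, $E_2 = \ep$, $E_{12} = \ebar$, and $a, b = \tilde O(k^{1/p})$, every part of which is known to all of $\vlist$. By \Cref{thm:split-tree-can-list}, each such copy of $K_p$ has all of its edges inside $\bigcup_{U \in {\tt anc}(U_{S,i})} \bigcup_{W \in {\tt anc}(U_{S,i}) : W \ne U} E(U, W)$ for some leaf part $U_{S,i}$. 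I would then assign leaf parts to vertices of $\vlist$ by a locally computable rule weighting by communication degree (analogously to the distribution step of \Cref{lem:k3-chain-distribute}), so that a vertex $v$ is responsible for $\tilde O(\tfrac{1}{\mu}\comdeg{v})$ of the $\tilde O(k)$ leaf parts, and have each responsible vertex collect all edges between the ancestor parts of its leaf part by a routing step (using \Cref{thm:cs-routing}) analogous to Steps~1--2 of \Cref{lem:k3-in-cluster} --- the extra complication being that $V_2$-ancestor parts consist of vertices outside $C$, so the relevant $\ebar$- and $\ep$-edges are gathered from the $\vlist$ vertices that hold them as per \Cref{def:input-cluster} and \Cref{def:centralized-input-cluster}. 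Using the balancing constraints \consdegbb, \consupdegbb, \consdegba, \consdegaa, \consupdegaa, \consupdegab of \Cref{def:split-tree} with $a, b = \tilde O(k^{1/p})$, together with $m_1, m_2, m_{12} = \tilde O(n\mu)$ and $\mu \ge \delta = n^{1 - 2/p} \ge n^{2/p}$ (the last inequality using $p > 3$), one checks that each leaf part carries $\tilde O(n^{1 - 2/p}\mu)$ relevant edges, so each $v$ must learn, and the cluster must route, $\tilde O(n^{1 - 2/p}\comdeg{v})$ edges --- which are $E(\vlist, \vlist)$-edges (known inside $C$), $\ebar$-edges, and $\ep$-edges (held by $\vlist$), matching the load hypothesis of \Cref{thm:cs-routing} and taking $n^{1 - 2/p \rc}$ rounds. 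Copies of $K_p$ with fewer than $p'$ vertices in $\vlist$ for every $p' \ge 2$ --- equivalently, with at least $p - 1$ vertices of low communication degree in $\vld = \vcluster \setminus \vlist$ --- are covered by exhaustive search inside $C$ with $\alpha = n^{1 - 2/p}$.

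Combining, the in-cluster lemma gives $f(n) = n^{1 - 2/p}$ up to an $\rct$ factor, and plugging this into the wrapper lemma proves \Cref{thm:congest-clique-find}. The step I expect to be the main obstacle is the edge-count accounting in the in-cluster lemma: carrying the six split-tree balancing constraints through to a clean $\tilde O(n^{1 - 2/p}\comdeg{v})$ bound on the number of edges any single vertex must send or receive, which is precisely what is needed to satisfy the load hypothesis of the routing procedure of \Cref{thm:cs-routing} and hence to obtain the optimal $n^{1 - 2/p \rc}$ round complexity; a secondary difficulty is making the wrapper's construction of $\ebar_i, \ep_i$ conform exactly to the distribution requirements of \Cref{def:input-cluster}.
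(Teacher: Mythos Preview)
Your high-level plan and your in-cluster argument match the paper: the paper also proves an in-cluster lemma (\Cref{lem:congest-expander-find-clique}) that builds the $(p',p)$-split tree via \Cref{thm:congest-tree-algo-clique}, distributes leaves by communication degree, and routes edges using the split-tree balancing constraints, exactly as you outline.

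The gap is in your wrapper. You treat the reduction uniformly for all $p>3$ and defer the details to ``the bookkeeping of~\cite{CCGL20}'', but the paper (and~\cite{CCGL20}) must split into two genuinely different reductions, \Cref{lem:kp-wrapper} for $p>4$ and \Cref{lem:k4-wrapper} for $p=4$. The issue is this: to route the external edges $\ep$ into a cluster $C$ within budget, one must identify a set $S_C^*\subseteq V\setminus\vlist$ of vertices with too few edges into $C$ relative to their outside degree, and then a set $S_C\subseteq\vlist$ of ``bad'' vertices with too many neighbors in $S_C^*$. Edges inside $E(S_C,S_C)$ cannot be handled in this round and are deferred. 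The bound $\sum_C |S_C|^2 \le (4/\beta)|E|$ (\Cref{lem:sc2-is-small}) that makes this deferral affordable holds only for $p\ge 5$; for $p=4$ the exponent arithmetic gives $n^{2-10/p}>1$ and the bound blows up. Consequently, for $p=4$ the paper cannot simply defer $E(S_C,S_C)$ but instead runs a \emph{full recursive} expander decomposition $\mathbb{C}^*$ in advance and executes a three-part protocol in which pairs of clusters $(C,C^*)$ exchange edges and the roles of ``inside'' and ``outside'' are swapped for the bad vertices (\Cref{sec:wrapper-clique-4}). Your proposal does not anticipate this split, nor the $S_C^*,S_C$ mechanism or the notion of overloaded clusters; without these, the claim that $\ebar_i,\ep_i$ can be assembled to meet \Cref{def:input-cluster} within $n^{1-2/p\rc}$ rounds does not go through for $p=4$.
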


As in the case of triangles, we decompose the proof into two parts
following the high-level organization of~\cite{CCGL20}: a reduction
from general graphs to $K_p$-compatible clusters, and listing in
clusters. However, the guarantees of the deterministic expander
decomposition of \hspace{-1pt}\cite{CS20} differ than those of the
randomized one, which is why we cannot take the previous outline as
is. Our most significant contribution in this part is using our
deterministic partition trees construction to enable efficient listing
within clusters. We begin by proving the following lemma, which shows
this.
\begin{lemma}%
\label{lem:congest-expander-find-clique}
    Given a graph $G = (V, E)$ and a $K_p$-compatible cluster
    $\genericCluster = (\genericClusterNodes, \genericClusterEdges)$,
    there is a deterministic \congest algorithm on $C$ that finishes
    in $n^{1 - 2/p \rc}$ rounds and lists all $p$-cliques
    $H' = \{v_1, \dots, v_p\}$ in $G$ such that there exists some
    $2 \le p' \le p$ for which the vertices of $H'$ can be partitioned
    into $V_{p'} = \{v_1, \dots, v_{p'}\} \subseteq \vlist$ and
    $
        V_{p \setminus p'}
        = \{v_{p' + 1}, \dots, v_p\} \subseteq V \setminus \vlist
    $
    so that $V_{p'} \times V_{p'} \subseteq E(\vlist, \vlist)$,
    $V_{p'} \times V_{p \setminus p'} \subseteq \ebar$, and
    $V_{p \setminus p'} \times V_{p \setminus p'}\subseteq \ep$.
\end{lemma}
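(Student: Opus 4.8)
The plan is to combine the partition-tree construction of Theorem~\ref{thm:congest-tree-algo-clique} with the load-balancing structure of the tree (Theorem~\ref{thm:split-tree-can-list}), much as Lemma~\ref{lem:k3-in-cluster} combined Theorem~\ref{thm:congest-tree-algo-tri} with Theorem~\ref{thm:cfg-tree-can-list} in the triangle case. First I would invoke Theorem~\ref{thm:congest-tree-algo-clique} on the given $K_p$-compatible cluster $C$ with $V_1 = \vlist$, $V_2 = V\setminus\vlist$, $E_1 = E(\vlist,\vlist)$, $E_2 = \ep$, $E_{12} = \ebar$, and $a,b = \tilde O(k^{1/p})$, producing in $n^{1-2/p\rc}$ rounds a $(p',p)$-split $K_p$-partition tree $T$ all of whose parts are known to all of $\vlist$. (One subtlety: Theorem~\ref{thm:split-tree-can-list} requires $p' \ge 2$ and the split between $V_1$ and $V_2$ vertices in the clique is fixed; since we want to list all cliques $H'$ satisfying the hypothesis for \emph{some} $2 \le p' \le p$, I would either build one tree per value of $p'$ in $\{2,\dots,p\}$ in sequence — only an $O(p) = O(1)$ overhead — or note that a single $(p'_{\max},p)$-tree with $p'_{\max}=p$ already suffices because any clique with $\ge p'$ vertices in $V_1$ also has $\ge 2$, and trace the clique accordingly. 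I will go with building $O(1)$ trees to keep the correspondence with Theorem~\ref{thm:split-tree-can-list} exact.)

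Next, fix a value of $p'$ and the corresponding tree $T$. By Theorem~\ref{thm:split-tree-can-list}, every instance $K_p'$ of the desired form has all of its edges contained in $\bigcup_{U\in{\tt anc}(U_{S,i})}\bigcup_{W\in{\tt anc}(U_{S,i}):W\ne U}E(U,W)$ for some leaf part $U_{S,i}$. Since all parts of $T$ are known to all of $\vlist$, I would assign each leaf part $U_{S,i}$ to a vertex $v\in\vlist$ (say, round-robin across $\vlist$, so each $v$ handles $O(1)$ leaf parts, because there are $O(k)$ leaves and $|\vlist|=k$), and have $v$ collect all edges between the $O(p)=O(1)$ ancestor parts of $U_{S,i}$. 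The number of edges $v$ must learn for one leaf part is bounded using the balancing constraints \consdegbb, \consupdegbb, \consdegba, \consdegaa, \consupdegaa, \consupdegab of Definition~\ref{def:split-tree}: summing the at-most-$O(1)$ pairwise edge counts $|E(U_{S',j'},U_{S'',j''})|$ over ancestor pairs gives a bound of the form $O\!\big(\tilde m_1/a^2 \cdot p + \tilde m_2/b^2 \cdot p + \tilde m_{12}/(ab)\cdot p + k + n\big)$; with $a,b=\tilde O(k^{1/p})$, $k\le n$, $m_1 = O(k^2)$, and the $K_p$-compatible-cluster bounds $|\ebar| = \tilde O(k\mu)$, $|\ep| = O(n\mu)$ (so $\tilde m_2 = \tilde O(n\mu)$, $\tilde m_{12} = \tilde O(n\mu)$ using $\mu\ge|\ep|/n$), this works out to $n^{1-2/p\rc}$ edges per leaf part, hence per vertex. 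Then $v$ locally enumerates all $p$-cliques inside the learned edge set; correctness follows from Theorem~\ref{thm:split-tree-can-list}.

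It remains to route these edges. Each $v\in\vlist$, for each of its $O(1)$ assigned leaf parts, sends to every vertex $u$ of each ancestor part $U$ a description of the $O(1)$ intervals defining the other ancestor parts (total $n^{o(1)}$ messages sent, since each ancestor part has $O(\max\{k/a, n/b\}) = \tilde O(n^{1-1/p})$ vertices and there are $O(1)$ of them per leaf part — I need to check this is $\le n^{1-2/p\rc}$; it is, since $n^{1-1/p}\le n^{1-2/p}$ is false, so I must be more careful here — the right bound is that each $u$ receives $O(1)$ such requests per part it belongs to and $u$ belongs to $O(a)$ leaf parts but only $O(1)$ parts at each level, so the real accounting, as in Lemma~\ref{lem:k3-in-cluster} Step~1, is $u$ receives $\tilde O(n^{1-1/p})$ messages, which is fine as $n^{1-1/p} = o(n^{1-2/p}\cdot\rct)$ fails too). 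The honest accounting mirrors Lemma~\ref{lem:k3-in-cluster}: a vertex $u$ is in $O(a^{p'-1}) = \tilde O(k^{(p'-1)/p})$ leaf parts, in $O(1)$ part at the top, and the communication degree $\delta = n^{1-2/p}$ together with Theorem~\ref{thm:cs-routing} absorbs the load; each edge $(u,w)$ needs to be sent $\tilde O(\text{poly})$ times, bounded by counting, for each level, the number of ancestor/descendant part pairs straddling $u$ and $w$ — and this count is $\tilde O(n^{1-2/p})$ by the same kind of casework as in Lemma~\ref{lem:k3-in-cluster}. Then $u$ replies with the requested edges $E(u,U)$, which $v$ receives; by the per-leaf-part edge bound above this is $n^{1-2/p\rc}$ received per $v$. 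The main obstacle, and the part I expect to require the most care, is exactly this routing/congestion accounting — verifying that both the request phase and the reply phase stay within $n^{1-2/p\rc}$ rounds under Theorem~\ref{thm:cs-routing}, which requires carefully bounding, for each vertex, how many part-membership combinations it participates in at each of the $p$ levels of $T$ (an inherently $p$-dimensional version of the three-case analysis done for $K_3$), and handling the edges from $\ep$ (which live outside $\vlist$) via the vertex chain $\vchain{E}$ from Definition~\ref{def:centralized-input-cluster} so that the vertex responsible for $u\in V\setminus\vlist$ can answer on its behalf. Finally, the cliques with fewer than $2$ vertices in $\vlist$ — equivalently, with $\le 1$ vertex in $\vlist$ — are not covered by this lemma and are handled separately in the wrapper (analogously to the low-degree case of Lemma~\ref{lem:k3-in-cluster}); this lemma only claims the $p'\ge 2$ cliques.
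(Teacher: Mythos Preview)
Your overall plan is right and matches the paper: iterate over $p'\in\{2,\dots,p\}$, build a $(p',p)$-split $K_p$-partition tree via Theorem~\ref{thm:congest-tree-algo-clique}, and have each leaf part be learned by some vertex which then lists locally. But there is a genuine gap in your load-balancing step.

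You assign leaf parts round-robin, $O(1)$ per vertex of $\vlist$, and claim the per-leaf-part edge count is $n^{1-2/p\rc}$. That bound is wrong. Take for instance the $\tilde m_2/b^2$ term: with $|\ep|=O(n\mu)$ and $b=\tilde O(k^{1/p})$ you get $\tilde O(n\mu/k^{2/p})$, which is $\tilde O(n^{1-2/p}\mu)$ (not $n^{1-2/p}$) when $k=\Theta(n)$. The paper computes the same thing and gets $\tilde O(n^{1-2/p}\mu + n)=\tilde O(n^{1-2/p}\mu)$ edges per leaf. Now consider a vertex $v\in\vlist$ with the minimum communication degree $\delta=n^{1-2/p}$. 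Under round-robin it is handed $O(1)$ leaves and must receive $\tilde O(n^{1-2/p}\mu)$ edges, but Theorem~\ref{thm:cs-routing} only lets it absorb $O(L)\cdot\comdeg v$ messages per $L$-round batch, so routing to $v$ alone costs $\tilde O(\mu)$ rounds --- which can be $\tilde\Theta(n)$.

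The paper fixes exactly this: after building $T_{p'}$ it does \emph{not} assign leaves uniformly. It first has every vertex keep only $O(1)$ leaves, then invokes Lemma~\ref{lem:k3-chain-distribute} to redistribute leaves so that each $v\in\vhd$ holds $O(\frac{1}{\mu}\comdeg v)$ of them (and low-degree vertices in $\vlist\setminus\vhd$ hold none). Then the edges $v$ must receive total $\tilde O(n^{1-2/p}\mu)\cdot O(\frac{1}{\mu}\comdeg v)=\tilde O(n^{1-2/p}\comdeg v)$, and routing finishes in $n^{1-2/p\rc}$ rounds. The paper also separately rebalances $\ep$ so that each $v\in\vgeqavg$ holds $O(\frac{|\ep|}{k\mu}\comdeg v)$ of those edges before the final send, controlling the sending side analogously; your vertex-chain idea gestures at this but does not supply the degree-proportional redistribution. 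Once you make the leaf assignment (and the $\ep$ holding) degree-proportional, the per-edge ``how many times is it sent'' accounting you sketch becomes unnecessary: the paper simply observes that the total traffic is $\tilde O(k\mu\,n^{1-2/p})$ and that every edge of a given type is sent the same number of times, which immediately bounds the sending load.
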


\begin{proof}
    Fix some $2 \le p' \le p$ and list all $K_p$ in
    $G[E(\vlist, \vlist) \cup \ebar \cup \ep]$ with exactly $p'$
    vertices in $\vlist$ in $n^{1 - 2/p \rc}$ rounds. As
    $p = O(\log n)$, repeating this process for all $2 \le p' \le p$
    gives the desired algorithm.

    We begin by invoking Theorem~\ref{thm:congest-tree-algo-clique} to
    construct a $(p,p')$-split $K_p$-partition tree $T_{p'}$ for
    values $a = b = \lceil k^{1/p} \rceil$. Since $k \ge 1$, we known
    that $\lceil k^{1/p} \rceil \le 2 k^{1/p}$, so that
    $a,b = O(k^{1/p})$ and the $(p,p')$-split $K_p$ partition tree can
    be constructed in $\tilde{O}(n^{1 - 2/p \rc})$ rounds. After this,
    all vertices $v \in \vlist$ know all parts of partition tree
    $T_{p'}$. Since $a, b \le 2 k^{1/p}$, the number of leaf parts is
    at most $a^{p'}b^{p - p'} \le 2^pk$. For constant $p$, we
    therefore have $O(k)$ leaf parts. Let each $v \in \vlist$ forget
    all but $O(1)$ parts of the leaf layer so that each part is known
    by exactly on $v \in \vlist$; the vertices can locally compute
    which parts to remember in a predetermined fashion. Now, we apply
    Lemma~\ref{lem:k3-chain-distribute} to redistribute the parts of
    the leaf layer so that each $v \in \vhd$ knows
    $O(\frac{1}{\mu}\comdeg v)$ parts in $k^{1/3 \rc}$ rounds.

    The rest of the algorithm proceeds as in~\cite{CCGL20}, using the
    assigned $(p,p')$-split $K_p$-partition tree leaves in the place
    of a randomized partition. Before performing the final listing
    step, we again redistribute the edges $\ep$ so that each vertex
    $v \in \vgeqavg$ holds
    $O(\frac{|\ep|}{k\mu}\comdeg v) = O(n^{1 - 2/p}\comdeg c)$
    edges. Each vertex $v \in \vlist$ can send in
    $n^{1/2 \rc}$ rounds to all other vertices in $\vlist$
    how many edges from $\ep$ it currently holds. This information is
    sufficient for all vertices to deterministically compute which
    vertices they should send edges to, and, since each $v \in \vlist$
    initially holds $\tilde{O}(n^{1 - 2/p} \cdot \deg_C(v))$ edges
    from $\ep$, the redistribution can finish in
    $n^{1 - 2/p \rc}$ rounds.

    Finally, each vertex $v \in \vlist$ sends each of the edges $e$ it
    knows to each vertex $u \in \vhd$ that is assigned a leaf part
    $U_{S,i}$ for which $e$ crosses two ancestors of $U_{S,i}$.
    Constraints \consupdegbb, \consupdegaa, and \consupdegab of
    Definition~\ref{def:split-tree} guarantee that each $u$ receives
    $\tilde{O}(E(\vlist, \vlist)/a^2 + k)$ edges from
    $E(\vlist, \vlist)$, $\tilde{O}(|\ebar|/(ab) + n)$ edges
    from $\ebar$, and $\tilde{O}(|\ep|/b^2 + n)$ edges from $\ep$.
    Since $|\ebar| = O(|E(\vlist, \vlist)|)$ and
    $|\ep| = O(n\mu)$, we have that the total number of edges learned
    by $u$ is $\tilde{O}(n^{1 - 2/p}\mu + n)$. Since
    $\mu \ge n^{1/2}$, this number is $\tilde{O}(n^{1 - 2/p}\mu)$, and
    therefore each $u \in \vhd$ can receive the edges for all of the
    $O(\frac{1}{\mu}\comdeg u)$ parts it holds in
    $n^{1 - 2/p \rc}$ rounds. Since the total number of
    edges sent over the graph is $\tilde{O}(k\mu n^{1 - 2/p})$ and
    each edge is sent an equal number of times, it holds that each
    edge in $E(\vlist, \vlist)$, $\ebar$, $\ep$ is sent
    $\tilde{O}(n^{1 - 2/p})$, $\tilde{O}(n^{1 - 2/p})$, and
    $\tilde{O}(\frac{k\mu}{|\ep|}n^{1 - 2/p})$ times, respectively.
    Since each $v \in \vlist$ holds $O(\comdeg v)$ edges in
    $E(\vlist, \vlist)$ and $\ebar$ and
    $\tilde{O}(\frac{|\ep|}{k\mu}n^{1 - 2/p})$ edges in $\ep$, each
    $v \in \vlist$ can send all of the necessary edges in
    $n^{1 - 2/p \rc}$ rounds.
\end{proof}

We now state the two main lemmas that permit us to use
Lemma~\ref{lem:congest-expander-find-clique} to prove
Theorem~\ref{thm:congest-clique-find}. They separate the task of
reducing general clique listing to clique listing on clusters for
$p > 4$ and $p = 4$, respectively.

\begin{lemma}%
\label{lem:kp-wrapper}
    Given a graph $G = (V, E)$, if there exists a deterministic
    \congest algorithm that allows any $K_p$-compatible cluster on $G$
    for $p > 4$ to list in $n^{1 - 2/p \rc}$ rounds all
    instances $H'$ of $K_p$ in
    $G[E(\vlist, \vlist) \cup \ebar \cup \ep]$ such that at least
    one edge of $H'$ lies in $C$, then there is a deterministic
    \congest algorithm on $G$ that runs in
    $n^{1 - 2/p \rc}$ rounds and outputs all instances of
    $K_p$ in $G$.
\end{lemma}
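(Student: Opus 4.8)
The plan is to reduce from $G$ to clique listing inside clusters by recursing on the edge set, along the lines of the triangle wrapper \Cref{lem:k3-wrapper} and the scheme of \cite{CCGL20}, using the deterministic decomposition of \Cref{thm:cs-decomposition} in place of a randomized one; the hypothesized in-cluster algorithm is exactly what \Cref{lem:congest-expander-find-clique} supplies, listing the copies of $K_p$ with the split structure (some $p' \ge 2$ vertices in $\vlist$, the rest outside) for which a $K_p$-compatible cluster is responsible. One level of the recursion, on a current edge set $E$, would go as follows. First I would apply \Cref{thm:exhaustive-search} with $\alpha = 2n^{1 - 2/p}$, so that in $n^{1 - 2/p \rc}$ rounds every vertex $v$ with $\deg_G(v) \le 2n^{1 - 2/p}$ learns the subgraph induced on $N[v]$ and locally lists all $K_p$ containing it; this disposes of every copy of $K_p$ using a low-degree vertex. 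Then I would run \Cref{thm:cs-decomposition} with a fixed constant $\epsilon < 1/6$, obtaining in $n^{o(1)}$ rounds a partition $E = E_1 \cup \dots \cup E_x \cup E\rem$ with $|E\rem| \le \epsilon|E|$.

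Next I would turn each $E_i$ into a $K_p$-compatible cluster. Take the augmented edge set $E_i^+ \supseteq E_i$ of \cite{CS20} (with the refinement used in \cite{CCGL20} for $p > 3$), so that $G[E_i^+]$ keeps conductance $n^{-o(1)}$, each edge of $G[\bigcup_j E_j]$ lies in $O(1)$ of the $E_i^+$ (hence \Cref{thm:cs-routing} can be run on all clusters in parallel with constant overhead), and every copy of $K_p$ in $G[\bigcup_j E_j]$ that contains an edge of $E_i^- := E_i \cap E(\vinnerof{C_i}, \vinnerof{C_i})$ has each of its edges either in $E_i^+$ or between two vertices outside $\vinnerof{C_i}$. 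Put $\vlistof{C_i} = \{v \in \vinnerof{C_i} : \deg_{C_i}(v) \ge n^{1 - 2/p}\}$, which matches $\delta = n^{1 - 2/p}$; an inner vertex omitted here has $\deg_G(v) \le 2\deg_{C_i}(v) < 2n^{1 - 2/p}$ and was already handled by the exhaustive search. Finally, as in \cite{CCGL20}, let $\ebar_i$ be the edges of $E_i^+$ from $V \setminus \vlistof{C_i}$ to $\vlistof{C_i}$ (already known to their $\vlistof{C_i}$-endpoints), and route into $C_i$, using \Cref{thm:cs-routing}, the edges $\ep_i$ between two vertices of $V \setminus \vlistof{C_i}$ that $C_i$ needs, arranging the distribution so that \Cref{def:input-cluster} holds --- each $v \in \vlistof{C_i}$ holding $\tilde O(n^{1 - 2/p}\deg_{C_i}(v))$ external edges and $|\ep_i| = O(n\mu_i)$ --- all in $n^{1 - 2/p \rc}$ rounds.

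With the clusters prepared, I would run the hypothesized algorithm on all $C_i$ in parallel. Consider a copy of $K_p$ in $G[\bigcup_j E_j]$ with an edge in some $E_i^-$ and no low-degree vertex: that edge's endpoints lie in $\vlistof{C_i}$, and if its vertex set is split into the (at least two) vertices in $\vlistof{C_i}$ and the rest, then its $\vlistof{C_i}$-internal edges lie in $E_i^+$ (they are not between two vertices outside $\vinnerof{C_i}$), its crossing edges in $\ebar_i$, and its remaining edges in $\ep_i$; hence it is a copy of $K_p$ in $G[E(\vlistof{C_i}, \vlistof{C_i}) \cup \ebar_i \cup \ep_i]$ with an edge in $C_i$ and is output. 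The copies of $K_p$ in $G$ not output at this level either use a low-degree vertex (handled above) or lie in $G[\bigcup_j E_j]$ with no edge in any $E_i^-$, hence have all their edges in $E\rem \cup \bigcup_j (E_j \setminus E_j^-)$, of size at most $3\epsilon|E| < |E|/2$ by \Cref{lem:remaining-edges-small}; recursing on the graph spanned by those edges terminates after $O(\log n)$ levels, for $n^{1 - 2/p \rc}$ rounds in all.

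I expect the main obstacle to be the selection and routing of the $\ep_i$: proving that the edges a cluster needs in order to see every copy of $K_p$ through one of its $E_i^-$-edges can be chosen and delivered within the per-vertex bandwidth $n^{1 - 2/p}$ and in $n^{1 - 2/p \rc}$ rounds while meeting \Cref{def:input-cluster}. This is essentially the charging argument of \cite{CCGL20}, and it is where the assumption $p > 4$ enters: for $p > 4$ the threshold $n^{1 - 2/p} \ge n^{3/5}$ leaves enough slack for the argument to go through uniformly, whereas the borderline case $p = 4$, where the bandwidth $n^{1/2}$ is tightest, must be handled separately by the companion lemma.
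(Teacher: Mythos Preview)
Your high-level plan matches the paper's recursion on an expander decomposition, and you correctly locate the difficulty in routing $\ep_i$. However, your correctness claim in the third paragraph is too strong and constitutes a real gap: you assert that every $K_p$ in $G[\bigcup_j E_j]$ with an edge in $E_i^-$ and no low-degree vertex has all its outside edges in $\ep_i$ and is therefore output. This cannot be arranged. It is not possible in general to route into $C_i$ \emph{all} external edges needed for \emph{all} such cliques within per-vertex bandwidth $\tilde O(n^{1-2/p}\cdot\deg_{C_i}(v))$ while also meeting the constraint $\mu_i \in \Omega(|\ep_i|/n)$ of \Cref{def:input-cluster}.

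The paper's resolution is not to make every cluster $K_p$-compatible but to defer two further kinds of edges to the recursion. First, it defines $S_C^* = \{u \notin \vlist : 1 \le \deg_C(u) < \deg_{V\setminus\vlist}(u)/n^{1-2/p}\}$ and the ``bad'' vertices $S_C = \{v \in \vlist : \deg_{S_C^*}(v) > n^{1-2/p}\}$; edges in $E(S_C,S_C)$ are deferred, and \Cref{lem:sc2-is-small} bounds their total by $(4/\beta)|E|$. This lemma is precisely where the hypothesis $p \ge 5$ is used (the exponent arithmetic fails for $p=4$). Second, clusters with $\mu < |\ep|/(\gamma n)$ are declared \emph{overloaded} and deferred entirely; \Cref{lem:few-low-average} bounds their edges by $(2/\gamma)|E|$. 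Only after removing these does the routing of $\ep$ (\Cref{lem:send-e-prime}) go through in $O(n^{1-2/p})$ rounds and the cluster become $K_p$-compatible. The in-cluster algorithm is then invoked to list cliques with an edge in $E(\vlist \setminus S_C,\vlist)$ only. Choosing $\epsilon,\beta,\gamma$ so that $3\epsilon + 4/\beta + 2/\gamma < 1/2$ keeps the recursion logarithmic. Your proposal needs this deferral mechanism; citing a ``charging argument of \cite{CCGL20}'' is not a substitute, since the point is that some edges are \emph{not} charged but pushed to later rounds.
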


\begin{lemma}%
\label{lem:k4-wrapper}
    Given a graph $G = (V, E)$, if there exists a deterministic
    \congest algorithm that allows any $K_4$-compatible cluster on $G$
    to list in $n^{1/2 \rc}$ rounds all instances $H'$ of
    $K_4$ in $G[E(\vlist, \vlist) \cup \ebar \cup \ep]$ such that
    at least one edge in $H'$ lies in $C$, then there is a
    deterministic \congest algorithm on $G$ that runs in
    $n^{1/2 \rc}$ rounds and outputs all instances of $K_4$
    in $G$.
\end{lemma}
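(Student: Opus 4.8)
The plan is to follow the reduction of~\cite{CCGL20} from $K_4$ listing in general graphs to $K_4$ listing inside clusters, but driven by the deterministic expander decomposition and routing of~\cite{CS20} (Theorems~\ref{thm:cs-decomposition} and~\ref{thm:cs-routing}) in place of randomized ones, in the same spirit as the proof of Lemma~\ref{lem:k3-wrapper}. The new ingredient relative to the triangle case is that a $K_4$ may have its six edges spread across several clusters and across $E\rem$, so each cluster must be supplied with external edges so that it becomes a $K_4$-compatible cluster in the sense of Definition~\ref{def:input-cluster}.

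I would proceed in four steps, each taking $n^{1/2\rc}$ rounds, and then recurse. First, list every $K_4$ that contains a vertex of $G$-degree $O(n^{1/2})$: by Lemma~\ref{thm:exhaustive-search} each such vertex learns its induced $2$-hop neighborhood in $O(n^{1/2})$ rounds and thereby lists all $K_4$'s through it, so it remains to handle $K_4$'s all of whose vertices have $G$-degree $\omega(n^{1/2})$. Second, run Theorem~\ref{thm:cs-decomposition} on $G$ with a small constant $\epsilon$, obtaining $E = E_1 \cup \cdots \cup E_x \cup E\rem$ with $|E\rem| \le \epsilon|E|$, and, as in~\cite{CS20}, augment each $E_i$ to $E_i^+$ so that $G[E_i^+]$ has conductance $1/\rct$, each edge of $E$ lies in $O(1)$ of the augmented clusters, and the degrees inside $G[E_i^+]$ support the routing guarantees. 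Set $C_i = G[E_i^+]$ and let $\vlistof{C_i}$ consist of the vertices of $\vinnerof{C_i}$ with $C_i$-degree at least $|V_{C_i}|^{1/2}$, so $\delta = n^{1-2/4} = n^{1/2}$ and every $v \in \vlistof{C_i}$ has $\deg_G(v) \le 2\deg_{C_i}(v)$. A $K_4$ all of whose vertices have $G$-degree $\omega(n^{1/2})$ and with a designated edge $(v_1,v_2) \in E_i^-$ has $v_1,v_2 \in \vinnerof{C_i}$, and since $\deg_{C_i}(v_j) \ge \deg_{E_i}(v_j) \ge \deg_G(v_j)/2 = \omega(n^{1/2}) \ge |V_{C_i}|^{1/2}$ for $j=1,2$, in fact $v_1,v_2 \in \vlistof{C_i}$.

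Third, for each $C_i$ I would designate $\ebar_i$ and $\ep_i$ (edges incident to $\vlistof{C_i}$ from outside it, and edges between two outside vertices that may co-occur in a relevant $K_4$) and route $\ep_i$ into the cluster. The $\ebar_i$ edges are already held by their $\vlist$-endpoints, and by the bound $\deg_G(v) \le 2\deg_{C_i}(v)$ there are only $\tilde{O}(\deg_{C_i}(v))$ of them at each $v \in \vlistof{C_i}$, as Definition~\ref{def:input-cluster} requires; for $\ep_i$ I would follow the accounting of~\cite{CCGL20} to show $|\ep_i| = O(n\mu)$ and route its edges with Theorem~\ref{thm:cs-routing} so that each $v \in \vlistof{C_i}$ receives $\tilde{O}(n^{1/2}\deg_{C_i}(v))$ of them and $\mu = \Omega(|\ep_i|/n)$, all routings running in parallel in $n^{1/2\rc}$ rounds since each edge lies in $O(1)$ augmented clusters. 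Each $C_i$ is then a $K_4$-compatible cluster, and every $K_4$ with all vertices of $G$-degree $\omega(n^{1/2})$ and a designated edge in some $E_i^-$ is, from the viewpoint of that $C_i$, a $K_4$ in $G[E(\vlist,\vlist) \cup \ebar \cup \ep]$ with at least two vertices in $\vlist$ and at least one edge in $C$. Fourth, invoke the presupposed cluster algorithm on all $C_i$ in parallel; in $n^{1/2\rc}$ rounds this lists every such $K_4$. The only $K_4$'s not yet listed are those with no edge in any $E_i^-$, whose six edges all lie in $E \setminus \bigcup_i E_i^-$; by Lemma~\ref{lem:remaining-edges-small} and $|E\rem| \le \epsilon|E|$ this set has size at most $3\epsilon|E|$, so recursing on $G[E \setminus \bigcup_i E_i^-]$ yields recursion depth $O(\log n)$ and total complexity $n^{1/2\rc} \cdot O(\log n) = n^{1/2\rc}$.

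The main obstacle is the third step: showing that $\ebar_i$ and $\ep_i$ can be delivered to the clusters within the per-vertex budgets of Definition~\ref{def:input-cluster} --- $\tilde{O}(\deg_{C_i}(v))$ edges of $\ebar_i$, $\tilde{O}(n^{1/2}\deg_{C_i}(v))$ edges of $\ep_i$, with $\mu = \Omega(|\ep_i|/n)$ --- while staying inside $n^{1/2\rc}$ rounds. This is exactly where the argument is special to $p = 4$: at most two clique vertices fall outside $\vlist$, so the external edges are governed by a single $\ep$ edge per clique and the per-vertex $\ep$-budget matches the available cluster degree $\delta = n^{1/2}$, whereas for $p > 4$ one instead needs the more elaborate recursive clustering behind Lemma~\ref{lem:kp-wrapper}. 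A secondary point requiring care is verifying that the degree threshold chosen for $\vlistof{C_i}$ indeed forces both endpoints of every relevant $E_i^-$ edge into $\vlist$, so that a cluster algorithm as in the hypothesis (which lists only $K_4$'s with at least two vertices in $\vlist$) covers all $K_4$'s with an edge in some $E_i^-$.
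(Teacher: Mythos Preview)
Your proposal has a genuine gap in the third step, and in fact you have the relative difficulty of $p=4$ versus $p>4$ inverted.

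The obstacle is this. To make a cluster $C$ into a $K_4$-compatible cluster, one must route into $\vlist$ all edges of $\ep \subseteq E(V\setminus\vlist, V\setminus\vlist)$ that may participate in a relevant $K_4$. A vertex $u \notin \vlist$ with $\deg_{V\setminus\vlist}(u)$ large compared to $\deg_C(u)$ cannot push its edges into $C$ in $n^{1/2}$ rounds; such $u$ form the set $S_C^*$. The standard fix is to have a neighbor $v \in \vlist$ of $u$ pull the relevant edges instead, but this fails when $v$ itself has more than $n^{1/2}$ neighbors in $S_C^*$; such $v$ form $S_C$. For $p\ge 5$, Lemma~\ref{lem:sc2-is-small} shows $\sum_C |S_C|^2 \le O(|E|)$, so edges inside each $E(S_C,S_C)$ can be deferred to the next recursion level. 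For $p=4$ the same counting gives only $\sum_C |S_C|^2 = O(|E|^2/n^{3/2})$, which is useless once $|E|=\omega(n^{3/2})$: the deferred edge set is no longer a constant fraction of $|E|$, and your recursion does not terminate in $O(\log n)$ depth. Your appeal to ``the accounting of~\cite{CCGL20}'' does not circumvent this; the randomized argument there also needs special treatment at $p=4$.

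The paper's proof therefore takes a genuinely different route for $p=4$ (Section~\ref{sec:wrapper-clique-4}). It precomputes the \emph{entire} recursive sequence of decompositions up front, obtaining a family $\mathbb{C}^*$ of clusters with at most $O(\log n)$ clusters per edge (Lemma~\ref{lem:log-recursive-overhead}), and then handles cross-cluster $K_4$'s between a pair $(C,C^*)$ in three separate parts according to which of $v_1,\dots,v_4$ are ``bad''. The key new idea, absent from your sketch, is Part~3: when both $v_1,v_2 \in S_{C\to C^*}$ and both $v_3,v_4 \in S^*_{C^*\to C}$, the cluster $C$ cannot absorb the outside edge, so instead the edges of $C$ are pushed \emph{back} into $C^*$, which then performs the listing (Lemmas~\ref{lem:low-c*-neighborhood}--\ref{lem:c*-not-overloaded}). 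Your outline, which mirrors the single-direction $p>4$ reduction of Section~\ref{sec:wrapper-clique-5}, does not cover this case. Incidentally, your closing remark has it backwards: Lemma~\ref{lem:kp-wrapper} for $p>4$ is the \emph{simpler} reduction; it is $p=4$ that requires the extra recursive-clustering machinery.
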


We prove the lemmas in
\Cref{sec:wrapper-clique-5,sec:wrapper-clique-4}, respectively. Using
these lemmas, we can prove Theorem~\ref{thm:congest-clique-find}.

\begin{proof}[Proof of Theorem~\ref{thm:congest-clique-find}]
    By Lemma~\ref{lem:congest-expander-find-clique}, there exists a
    deterministic algorithm on $K_p$-compatible clusters that
    satisfies the preconditions of Lemmas~\ref{lem:kp-wrapper}
    and~\ref{lem:k4-wrapper}. Therefore, for $p > 4$ and for $p = 4$,
    there exists a deterministic \congest algorithm on $G$ that
    finishes in $n^{1 - 2/p \rc}$ rounds and lists all
    $K_p$ in $G$.
\end{proof}

\subsection{\texorpdfstring{$
    \boldsymbol{K_p}
$}{K\_p} Listing for \texorpdfstring{$
    \boldsymbol{p > 4}
$}{p \textgreater\ 4}}%
\label{sec:wrapper-clique-5}

We prove Lemma~\ref{lem:kp-wrapper} by employing the listing algorithm
of Lemma~\ref{lem:congest-expander-find-clique}. The overall structure
is to split $G$ into high-conductance clusters and prove that the
conditions of a $K_p$-compatible cluster are met for a majority of
clusters. Remaining clusters are deferred for later, and we recurse on
the graph they induce, which has at most half the edges of $G$.

We begin by invoking the deterministic expander decomposition stated
in Theorem~\ref{thm:cs-decomposition} on $G$, for some constant
$\epsilon$ that we will fix later and
$
    \phi = \operatorname{poly}(\epsilon)
        \cdot 2^{-O(\sqrt{\log n \log\log n})}
$.
The decomposition takes $n^{o(1)}$ rounds and produces a partition of
the edges $E = E_1 \cup E_2 \cup \cdots \cup E_x \cup E\rem$, where
$G\left[ E_1 \right],\dots, G\left[ E_x \right]$ are vertex-disjoint,
have conductance at least $\phi$, and
$\left|E\rem\right| \leq \epsilon|E|$. We construct
$\vinnerof{C_i}$ and $E_i^-$ as presented in
Section~\ref{sec:preliminaries}.
We also denote $E_i^+ = E_i \cup E(\vinnerof{C_i}, \vinnerof{C_i})$.
The following shows that
$G[E_i^+]$ have sufficiently high conductance to be communication
clusters.

\begin{lemma}%
    \label{lem:e+4-high-conductance}
    For all $1 \leq i \leq x$, $\Phi(G[E_i^+]) \geq \frac12\phi$.
\end{lemma}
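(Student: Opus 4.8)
The plan is to compare the candidate cluster $G[E_i^+]$ directly with the $\phi$-cluster $G[E_i]$ coming out of the decomposition, exploiting the fact that $E_i^+$ only adds edges incident to vertices of $\vinnerof{C_i}$, i.e., vertices whose total degree is at most twice their degree inside $E_i$. First I would record that $G[E_i]$ and $G[E_i^+]$ have the same vertex set $V_i$: since $\vinnerof{C_i} \subseteq V_i$, every edge of $E(\vinnerof{C_i},\vinnerof{C_i})$ has both endpoints in $V_i$, and $E_i \subseteq E_i^+$ keeps every vertex of $V_i$ nonisolated. Hence a cut $S$ is nontrivial in $G[E_i^+]$ iff it is nontrivial in $G[E_i]$, and it suffices to prove $\Phi_{G[E_i^+]}(S) \ge \tfrac12\phi$ for each such $S$.

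The key estimate is that $\deg_{E_i^+}(v) \le 2\deg_{E_i}(v)$ for every $v \in V_i$. If $v \notin \vinnerof{C_i}$ this is immediate, since $v$ is not incident to any edge of $E(\vinnerof{C_i},\vinnerof{C_i})$ and so $\deg_{E_i^+}(v) = \deg_{E_i}(v)$. If $v \in \vinnerof{C_i}$, the edges incident to $v$ that lie in $E_i^+ \setminus E_i$ all belong to $E(\vinnerof{C_i},\vinnerof{C_i}) \setminus E_i \subseteq E \setminus E_i$, so there are at most $\deg_{E\setminus E_i}(v)$ of them, and by the definition of $\vinnerof{C_i}$ this is at most $\deg_{E_i}(v)$; therefore $\deg_{E_i^+}(v) \le \deg_{E_i}(v) + \deg_{E\setminus E_i}(v) \le 2\deg_{E_i}(v)$.

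Summing this bound over one side of a cut gives $\Vol_{G[E_i^+]}(T) \le 2\,\Vol_{G[E_i]}(T)$ for every $T \subseteq V_i$, hence $\min\{\Vol_{G[E_i^+]}(S),\Vol_{G[E_i^+]}(V_i\setminus S)\} \le 2\min\{\Vol_{G[E_i]}(S),\Vol_{G[E_i]}(V_i\setminus S)\}$. On the other hand $E_i \subseteq E_i^+$ gives $\partial_{G[E_i]}(S) \subseteq \partial_{G[E_i^+]}(S)$, so the boundary of $S$ only grows. Combining the two, for any nontrivial cut $S$,
\[
    \Phi_{G[E_i^+]}(S)
    = \frac{|\partial_{G[E_i^+]}(S)|}{\min\{\Vol_{G[E_i^+]}(S),\Vol_{G[E_i^+]}(V_i\setminus S)\}}
    \ge \frac{1}{2}\cdot\frac{|\partial_{G[E_i]}(S)|}{\min\{\Vol_{G[E_i]}(S),\Vol_{G[E_i]}(V_i\setminus S)\}}
    = \frac{1}{2}\,\Phi_{G[E_i]}(S) \ge \frac{1}{2}\phi,
\]
where the last inequality uses that $G[E_i]$ is a $\phi$-cluster. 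Taking the minimum over all nontrivial cuts $S \subseteq V_i$ yields $\Phi(G[E_i^+]) \ge \tfrac12\phi$. I do not expect a genuine obstacle here; the only subtlety worth flagging is that it is the \emph{volume} comparison, not the boundary comparison, that uses the defining property of $\vinnerof{C_i}$ — adding edges can inflate volumes, and the majority-degree condition on $\vinnerof{C_i}$ is exactly what caps that inflation at a factor of two, which in turn costs us the factor $\tfrac12$ in the conductance.
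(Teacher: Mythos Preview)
Your proof is correct and follows essentially the same argument as the paper: both observe that $G[E_i]$ and $G[E_i^+]$ share the vertex set $V_i$, that boundaries can only grow since $E_i \subseteq E_i^+$, and that the defining majority-degree property of $\vinnerof{C_i}$ caps the per-vertex degree (hence the volume) increase at a factor of two. Your write-up is in fact slightly more explicit than the paper's, spelling out the case split on $v \in \vinnerof{C_i}$ versus $v \notin \vinnerof{C_i}$ and the final inequality chain.
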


\begin{proof}
    We recall that conductance of a graph $G = (E, V)$ is defined as
    the minimum value of
    \[
        \frac{|\partial S|}{\min\{\Vol(S), \Vol(V \setminus S)\}}
    \]
    over all $S \not\in \{\emptyset, V\}$, where
    $\Vol(S) = \sum_{v \in S}\deg(v)$ and
    $\partial S = E(S, V \setminus S)$. First, we observe that
    $G[E_i]$ and $G[E_i^+]$ have the same set of vertices $V_i$ and
    recall that $\Phi(G[E_i]) \geq \phi$. Since no edges are removed
    from $G[E_i]$ to form $G[E_i^+]$, we have that $|\partial S|$ in
    $G[E_i^+]$ is at least $|\partial S|$ in $G[E_i]$ for all cuts $S$
    of $V_i$. Also, $\deg(v)$ is greater in $G[E_i^+]$ than in
    $G[E_i]$ only if $v \in \vinnerof{C_i}$, from which we have that $\deg(v)$ in
    $G[E_i^+]$ is at most double $\deg(v)$ in $G[E_i]$. Therefore, the
    value of $\Vol(S)$ in $G[E_i^+]$ is at most double the value of
    $\Vol(S)$ in $G[E_i]$ for all cuts $S$ of $V_i$, from which it
    follows that $\Phi(G[E_i^+]) \geq \frac12\phi$.
\end{proof}

We now define for each $E_i^+$ a cluster
$C_i = (\vclusterof{C_i}, E_{C_i})$ for
which $E_{C_i} = E_i^+$
and $C_i$ is the subgraph induced by $E_{C_i}$, $C_i = G[E_{C_i}]$.
For some constant $\beta > 1$ that we fix later, we let
$\vlistof{C_i} \subseteq \vinnerof{C_i}$ be the set of vertices
$v \in \vinnerof{C_i}$ satisfying $\deg_C(v) > \beta n^{1 - 2/p}$.
Note that the sets of vertices $\vclusterof{C_i}$ and edges $E_{C_i}$
are pairwise disjoint over $C_i$.

We employ Lemma~\ref{lem:remaining-edges-small}, which bounds the
number of edges outside the union of all
$E(\vinnerof{C_i},\vinnerof{C_i})$ to
$3\epsilon|E|$, when we later choose the value of $\epsilon$. This
allows us to focus on listing cliques that have at least one edge in
$E(\vinnerof{C_i}, \vinnerof{C_i})$ for some $C$, since the total
number of
remaining edges is small enough for the recursion to have logarithmic
depth. In what follows, we focus on individual clusters and
for convenience drop the subscript $i$ from properties of the cluster
$C = C_i$. We demonstrate how to list most instances of
$K_p$ that have at least two vertices in some $\vinner$.

We list all cliques that have a vertex
$v \in \vinner \setminus \vlist$ in $O(n^{1 - 2/p})$ rounds using
exhaustive search.

\begin{lemma}%
    \label{lem:list-low-degree}
    Listing all instances of $K_p$ containing
    $v \in \vinner \setminus \vlist$ can be done in $O(n^{1 - 2/p})$
    rounds by an exhaustive search procedure. This can be done in
    parallel for all such vertices $v$.
\end{lemma}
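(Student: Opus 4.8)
The plan is to reduce to the exhaustive-search primitive of Lemma~\ref{thm:exhaustive-search} (Claim~19 of \cite{CFLLO}), exactly as in the triangle case (cf.\ the application with $\alpha = K^{1/3}$ inside Lemma~\ref{lem:k3-in-cluster}), once we verify that every $v \in \vinner \setminus \vlist$ has low degree \emph{in all of $G$}, not merely inside the cluster $C$.

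I would first establish this degree bound. By the definition of $\vlist$ (with the threshold $\beta n^{1-2/p}$ fixed earlier in the section), a vertex $v \in \vinner \setminus \vlist$ satisfies $\deg_{E_i^+}(v) = \deg_C(v) \le \beta n^{1-2/p}$. Since $v \in \vinner$, the majority of its edges of $G$ lie in $E_i$, i.e.\ $\deg_{E_i}(v) \ge \deg_{E \setminus E_i}(v)$; as $E_i \subseteq E_i^+$ this yields $\deg_{E_i^+}(v) \ge \deg_{E_i}(v) \ge \deg_{E \setminus E_i}(v) \ge \deg_{E \setminus E_i^+}(v)$, hence $\deg_G(v) = \deg_{E_i^+}(v) + \deg_{E \setminus E_i^+}(v) \le 2\deg_C(v) \le 2\beta n^{1-2/p} = O(n^{1-2/p})$. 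Because the cluster vertex sets $\vclusterof{C_i}$ are pairwise disjoint, this bound applies unambiguously, and simultaneously, to every vertex that is ``low-degree inside its own cluster'' over all $C_i$.

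Next, I would invoke Lemma~\ref{thm:exhaustive-search} on $G$ with $\alpha = 2\beta n^{1-2/p} = O(n^{1-2/p})$. This lets every vertex of $G$ of degree at most $\alpha$---in particular every $v \in \vinner \setminus \vlist$, over all clusters and in parallel---learn its induced $2$-hop neighborhood in $O(\alpha) = O(n^{1-2/p})$ rounds, and in particular learn the subgraph of $G$ induced on $\{v\} \cup N_G(v)$, i.e.\ all $G$-edges between two neighbours of $v$. Each such $v$ then performs a cost-free local computation: it enumerates all $(p-1)$-subsets $\{u_1,\dots,u_{p-1}\} \subseteq N_G(v)$ and outputs $\{v,u_1,\dots,u_{p-1}\}$ whenever all $\binom{p}{2}$ of its edges appear in the information $v$ has gathered. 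Any instance of $K_p$ in $G$ that contains $v$ has its other $p-1$ vertices in $N_G(v)$ and all of its edges within distance $2$ of $v$, so every such instance is reported; duplicate reports (when a clique contains several qualifying $v$) are harmless, and can be suppressed by reporting only at the lowest-identifier qualifying vertex if a unique report is wanted. The whole procedure thus runs in $O(n^{1-2/p})$ rounds, in parallel over all $v \in \vinner \setminus \vlist$.

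The single delicate point is the degree argument of the second paragraph, i.e.\ the chain of inequalities turning ``$v \notin \vlist$'' (a bound on $\deg_C(v)$) together with ``$v \in \vinner$'' into a bound on $\deg_G(v)$; granted that, the lemma is an immediate application of Lemma~\ref{thm:exhaustive-search} plus free local work, so I do not expect any real obstacle.
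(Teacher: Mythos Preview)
Your proposal is correct and follows essentially the same approach as the paper: establish that every $v \in \vinner \setminus \vlist$ has $\deg_G(v) \le 2\beta n^{1-2/p}$ via the membership conditions for $\vinner$ and $\vlist$, then invoke Lemma~\ref{thm:exhaustive-search} with $\alpha = O(n^{1-2/p})$. Your chain of inequalities for the degree bound is in fact more carefully spelled out than the paper's version (which contains a minor typo, writing ``$v \in \vlist$'' where ``$v \in \vinner$'' is meant).
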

\begin{proof}
    Since $\deg(v) \leq 2\deg_C(v)$ for all $v \in \vlist$, we have
    that the minimum degree of $v \in \vinner \setminus \vlist$ over
    all $C$ is at most $2\beta n^{1 - 2/p}$. We can therefore apply
    Lemma~\ref{thm:exhaustive-search} with
    $\alpha \in O(n^{1 - 2/p})$ to all instances of $K_p$ intersecting
    $\vinner \setminus \vlist$ in $O(n^{1 - 2/p})$ rounds.
\end{proof}

Lemma~\ref{lem:list-low-degree} leaves us only with the task of
listing cliques that have edges in $E(\vlist, \vlist)$. Note that in
any cluster $C$ with $|\vcluster| < \beta n^{1 - 2/p}$, all
$v \in \vinner$ satisfy $\comdeg v < \beta n^{1 - 2/p}$, from which
$\deg v < 2\beta n^{1 - 2/p}$. Therefore, $\vlist$ is empty for such
small clusters and they are eliminated by
Lemma~\ref{lem:list-low-degree}. We therefore are only left with
clusters $C$ satisfying $|\vcluster| \ge \beta n^{1 - 2/p}$.

Consider vertices $V \setminus \vlist$ (including, importantly,
$\vcluster \setminus \vlist$), and denote
\[
    S_C^* = \{
        u \not\in \vlist :
        1 \leq \comdeg u < \deg_{V \setminus \vlist}(u) / n^{1 - 2/p}
    \}
\]

Note that $u \in S_C^*$ have $\deg(u) > n^{1 - 2/p}$ and
$\deg_{\vlist}(u) < n^{2/p}$, so that
$|E(\vlist, S_C^*) < n^{1 + 2/p}|$. We call a vertex in $\vlist$
\emph{bad} if it has more than $n^{1 - 2/p}$ neighbors in $S_C^*$
and denote these vertices
$S_C = \{v \in \vlist : \deg_{S_C^*}(v) > n^{1 - 2/p}\}$. These
pose difficulties for learning information about edges outside the
cluster, so we show that the number of edges between them
is small and can be deferred to later recursive iterations.

\begin{lemma}%
    \label{lem:sc2-is-small}
    For constant $\beta > 1$ and $p \geq 5$,
    $
        \sum_{C : |\vcluster| > \beta n^{1 + 2/p}}
            |S_C|^2 \leq (4/\beta)|E|
    $
\end{lemma}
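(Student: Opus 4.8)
The plan is to use the vertex-disjointness of the clusters together with two opposing per-cluster estimates on $|S_C|$ — one sharp when the cluster carries few edges relative to its size, the other sharp when the cluster is large — and to combine them by a case split on $|E|$. Since a cluster with $\vlist=\emptyset$ contributes nothing, I will sum over all clusters $C = G[E_i^+]$; since $G[E_1],\dots,G[E_x]$ are vertex-disjoint the sets $\vcluster$ are pairwise disjoint, and any $u$ with $\comdeg u \ge 1$ is incident to an $E_C$-edge, hence lies in $\vcluster$, so $S_C^* \subseteq \vcluster$. Consequently the $S_C$, the $S_C^*$, and the edge sets $E(S_C,S_C^*)$ are pairwise disjoint over $C$, so $\sum_C |S_C^*| \le n$, $\sum_C |E(S_C,S_C^*)| \le |E|$, and (the $E_{C_i}=E_i^+$ being pairwise disjoint as well) $\sum_C |E_C| \le |E|$.

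Next I would establish, for each $C$ (writing $e_C := |E(S_C,S_C^*)|$): (a) from the definition of $S_C$, $|S_C|\,n^{1-2/p} < e_C$; (b) from the definition of $\vlist$, every $v\in\vlist$ has $\comdeg v > \beta n^{1-2/p}$, and summing against $\sum_{v\in\vcluster}\comdeg v = 2|E_C|$ gives $|S_C| \le |\vlist| < 2|E_C|/(\beta n^{1-2/p})$; (c) using the observation in the text that every $u\in S_C^*$ has $\deg_{\vlist}(u) < n^{2/p}$, we get $e_C = \sum_{u\in S_C^*}\deg_{S_C}(u) < |S_C^*|\,n^{2/p} \le |\vcluster|\,n^{2/p}$, which combined with (a) yields $|S_C| < |\vcluster|\,n^{4/p-1}$, hence $\max_C|S_C| < n^{4/p}$ and $\sum_C|S_C| \le n^{4/p-1}\sum_C|S_C^*| \le n^{4/p}$. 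I would additionally upgrade the trivial $\sum_C e_C \le |E|$ to the sharper $\sum_C e_C = O(|E|/n^{1-2/p})$, by refining the argument behind (c): $e_C \le \sum_{u\in S_C^*}\comdeg u < \tfrac{1}{n^{1-2/p}}\sum_{u\in S_C^*}\deg_{V\setminus\vlist}(u)$ (the first step because, up to edges landing in the remainder set $E\rem$, $\deg_{\vlist}(u)\le\comdeg u$), and then noting that, summed over all clusters, each edge of $G$ is charged $O(1)$ times — the within-$\vcluster$ contributions totalling $O(\sum_C|E_C|)=O(|E|)$ and the rest coming entirely from $E\rem$.

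Finally, the case split on $|E|$. If $|E| \ge \tfrac{\beta}{4}n^{8/p}$, bound (c) alone suffices: $\sum_C|S_C|^2 \le (\max_C|S_C|)(\sum_C|S_C|) < n^{8/p} \le \tfrac{4}{\beta}|E|$. Otherwise, rewrite (a) as $|S_C| < e_C\, n^{2/p-1}$, multiply by (b) to get $|S_C|^2 < \tfrac{2}{\beta}e_C|E_C|\,n^{4/p-2}$, sum, and use $\sum_C e_C|E_C| \le (\max_C|E_C|)(\sum_C e_C) = O(|E|^2/n^{1-2/p})$ to obtain $\sum_C|S_C|^2 = O\!\big(\tfrac{1}{\beta}|E|^2 n^{6/p-3}\big) \le \tfrac{4}{\beta}|E|$, valid once $|E| = O(n^{3-6/p})$. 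These two regimes cover every graph exactly when $n^{3-6/p} = \Omega(\beta n^{8/p})$, i.e. $n^{3-14/p} = \Omega(\beta)$; since $3-14/p>0$ precisely for $p\ge 5$, this holds for all large enough $n$, and this is where the hypothesis $p\ge 5$ enters. The main obstacle is closing the gap between the two density regimes: with only the trivial $\sum_C e_C \le |E|$ they overlap only for $p \ge 6$, so handling $p = 5$ forces the sharper estimate on $\sum_C e_C$, whose delicate point is bounding the contribution of edges falling into the remainder set $E\rem$ tightly enough that it does not dominate the $O(|E|/n^{1-2/p})$ bound — and, if that still leaves a window in the medium-density range, supplying a third intermediate estimate on $|S_C|$ there.
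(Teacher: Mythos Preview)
Your plan is substantially more involved than needed and, as you yourself acknowledge, does not close: the sharper estimate $\sum_C e_C = O(|E|/n^{1-2/p})$ on which your low-density case rests is in trouble because edges in $E(S_C,S_C^*)\cap E\rem$ can contribute up to $\epsilon|E|$ to $\sum_C e_C$, which swamps $|E|/n^{1-2/p}$. You leave the required ``third intermediate estimate'' unspecified, so what you have is a proof sketch for $p\ge 6$ together with an open problem for $p=5$, not a proof of the lemma.

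The paper avoids all of this with a single per-cluster bound and no case split on $|E|$. The step you are missing is that for a \emph{fixed} cluster $C$ the trivial inequality $\sum_{u\in S_C^*}\deg_{V\setminus\vlist}(u)\le 2|E|$ already suffices. Chaining it through the definitions of $S_C^*$ and $S_C$ gives
\[
2|E|\ \ge\ \sum_{u\in S_C^*}\deg_{V\setminus\vlist}(u)\ >\ n^{1-2/p}\sum_{u\in S_C^*}\comdeg u\ \ge\ n^{1-2/p}\sum_{v\in S_C}\deg_{S_C^*}(v)\ >\ n^{2-4/p}\,|S_C|,
\]
so $|S_C|<2|E|/n^{2-4/p}$; since $|E|\le n^2$ this also gives $|S_C|<2n^{4/p}$. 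Multiplying the two bounds yields $|S_C|^2<4|E|/n^{2-8/p}$ for every cluster, and since there are at most $n^{2/p}/\beta$ clusters with $|\vcluster|\ge\beta n^{1-2/p}$, summing gives $\sum_C|S_C|^2\le (4/\beta)|E|\cdot n^{10/p-2}\le(4/\beta)|E|$ precisely when $p\ge5$. Note that this uses neither the disjointness of the $S_C^*$ nor any control on $E\rem$; your estimates (a) and (c) are fragments of this chain, but by routing through $e_C$ and $|E_C|$ separately you lost the direct coupling to $|E|$ that makes the argument one line.
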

\begin{proof}
    The total number of edges $|E|$ is at least the number of edges
    that touch vertices in $S_C^*$. We find a lower bound for this by
    counting edges that are incident to vertices in $S_C^*$ but not
    $\vlist$. We sum $\deg_{V \setminus \vlist}(u)$ over all
    $u \in S_C^*$ and divide by 2, due to possible overcounting:
    \begin{align*}
        2|E|
        & \geq \sum_{u \in S_C^*} \deg_{V \setminus \vlist}(u)
          > n^{1 - 2/p}\sum_{u \in S_C^*} \deg_{C}(u) \\
        & \geq n^{1 - 2/p} \sum_{v \in S_C} \deg_{S_C^*}(v)
          > n^{1 - 2/p} \cdot n^{1 - 2/p} \cdot |S_C|
          = n^{2 - 4p}|S_C|
    \end{align*}
    from which $|S_C| < 2|E|/n^{2 - 4/p}$. Since $|E| \leq n^2$, this
    gives $|S_C| < 2n^{4/p}$, from which
    $|S_C|^2 < 2n^{4/p} \cdot 2|E|/n^{2 - 4/p} = 4|E|/n^{2 - 8/p}$.
    As there are at most $n^{2/p}/\beta$ clusters of size
    $|\vcluster| \geq \beta n^{1 - 2/p}$, we have that the total
    number of edges in $S_C$ over all such $C$ is at most
    $
        (n^{2/p}/\beta) \cdot (4|E|/n^{2 - 8/p})
        = (4/\beta)|E|n^{2 - 10/p}
    $,
    which, for $p \geq 5$, is at most $(4/\beta)|E|$.
\end{proof}

We now focus on listing cliques with an edge in
$E(\vlist, \vlist) \setminus E(S_C, S_C)$ for some $C$ in
$n^{1 - 2/p \rc}$ rounds, using
Lemma~\ref{lem:congest-expander-find-clique}. We show how to deliver
the necessary input edges
$\ep \subseteq E(V \setminus \vlist, V \setminus \vlist)$. Recall that
$K_p$-compatible clusters operate on directed edges; each time a
vertex from outside the cluster sends an edge into $C$, it sends both
directed copies of that edge. This at most doubles the number of edges
received by $C$.

\begin{lemma}%
    \label{lem:send-e-prime}
    For all $v \in S_C^*$ adjacent to a $u \in \vlist \setminus S_C$,
    and for all $v \in (V \setminus \vlist) \setminus S_C^*$, all of
    the edges of $v$ into $\vlist$ can be made known to some
    $u \in \vlist$ so that all $u \in \vlist$ hold
    $O(n^{1 - 2/p} \cdot \deg_C(u))$ such edges. This can be done in
    $O(n^{1 - 2/p})$ rounds and in parallel for all $v$ and $C$.
\end{lemma}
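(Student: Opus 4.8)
The plan is to read the statement as a routing-and-load-balancing claim: every outside vertex $v$ of the two kinds named must push all the edges it contributes to $\ep$ --- all edges $(v,w)$ with $w \in V \setminus \vlist$ --- into the hands of some vertex of $\vlist$, ending with each $u \in \vlist$ holding $O(n^{1-2/p}\comdeg u)$ such edges, within $O(n^{1-2/p})$ rounds, and simultaneously over every $v$ and every cluster $C$; parallelism across the $C$ is free since the clusters are edge-disjoint. I would treat the two kinds of vertices separately and add up the resulting loads at the end.

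Take first $v \in (V \setminus \vlist) \setminus S_C^*$ with $\comdeg v \ge 1$ (a vertex with $\comdeg v = 0$ has no edge into $\vcluster \supseteq \vlist$, hence lies in no clique this cluster is charged with, and may be discarded). Negating the membership condition of $S_C^*$ gives $\deg_{V \setminus \vlist}(v) \le n^{1-2/p}\comdeg v$, so $v$ has at most $n^{1-2/p}\comdeg v$ edges to deliver. I would ship these along $v$'s $\comdeg v$ incident cluster edges, at most $n^{1-2/p}$ across each, so that every cluster-neighbour $x$ of $v$ gets at most $n^{1-2/p}$ of them; a neighbour $x \in \vlist$ keeps what it gets and thereby accumulates at most $n^{1-2/p}\comdeg x$ such edges over its $\comdeg x$ incident cluster edges, while a neighbour $x \in \vcluster \setminus \vlist$ is low-degree, $\comdeg x < \beta n^{1-2/p}$, so it holds only $O(n^{2-4/p})$ of these edges and can relay them on toward $\vlist$ along its $O(n^{1-2/p})$ incident cluster edges in $O(n^{1-2/p})$ rounds. (If one wishes to avoid analysing the relay, the cluster $C = G[E_i^+]$ has conductance at least $\frac12\phi$ by Lemma~\ref{lem:e+4-high-conductance}, so one may instead route all of these edges in one shot with Theorem~\ref{thm:cs-routing} at dilation $L = O(n^{1-2/p})$, at the price of an extra $n^{o(1)}$ factor.)

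Now take $v \in S_C^*$ adjacent to some good vertex $u \in \vlist \setminus S_C$. Here I would use the two structural facts recorded just before the lemma: every $w \in S_C^*$ has $\deg_{\vlist}(w) < n^{2/p}$, and, by definition of the bad set, every $u \in \vlist \setminus S_C$ has $\deg_{S_C^*}(u) \le n^{1-2/p}$. Thus $v$ has fewer than $n^{2/p}$ neighbours in $\vlist$; it forwards its at most $n^{2/p}$ edges whose other endpoint lies in $\vlist$ across its direct edges to $\vlist \setminus S_C$, and disperses its remaining edges (whose other endpoint lies in $V \setminus \vlist$) one per incident edge of $G$, each recipient then relaying its load to a nearby good $\vlist$-vertex. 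A good vertex $u \in \vlist \setminus S_C$ is a sink for at most $\deg_{S_C^*}(u) \le n^{1-2/p}$ vertices of $S_C^*$, each handing it $O(n^{2/p})$ edges, so $u$ accumulates $O(n)$ edges from this source; since $\comdeg u > \beta n^{1-2/p}$ and $p \ge 4$ this is $O(n^{1-2/p}\comdeg u)$, and $u$ can absorb it in $O(n^{2/p}) = O(n^{1-2/p})$ rounds over its $\Theta(\comdeg u)$ incident edges. Running both cases in parallel, the loads sum to $O(n^{1-2/p}\comdeg u)$ at every $u \in \vlist$, which is the claimed balance.

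The main obstacle is the second case. The vertices of $S_C^*$ can have degree $\Theta(n)$, so naively shipping all of such a vertex's edges to one fixed $\vlist$-neighbour would cost $\Theta(n)$ rounds and pile far too much onto one vertex; the delicate step is routing the edges whose other endpoint lies in $V \setminus \vlist$ without overloading the intermediate vertices. This is exactly why the statement declines to handle $S_C^*$ vertices all of whose $\vlist$-neighbours are bad --- those touch $\vlist$ only inside $S_C$, and the cliques forced through such edges are deferred to the recursion once Lemma~\ref{lem:sc2-is-small} bounds $\sum_C |S_C|^2$ --- and why restricting to $v$ adjacent to $\vlist \setminus S_C$ is essential: it is precisely the bound $\deg_{S_C^*}(u) \le n^{1-2/p}$ on good vertices, together with $\deg_{\vlist}(v) < n^{2/p}$ and $\comdeg u > \beta n^{1-2/p}$, that holds the round count to $O(n^{1-2/p})$ and the per-vertex load to $O(n^{1-2/p}\comdeg u)$. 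One last bookkeeping item is to record that the resulting $\ep$ also satisfies $\mu \in \Omega(|\ep|/n)$ from Definition~\ref{def:input-cluster}, which follows by comparing the per-vertex edge counts above against $\sum_{u \in \vlist}\comdeg u = k\mu$.
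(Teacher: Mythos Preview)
Your handling of the second case --- vertices $v \in S_C^*$ adjacent to a good $u \in \vlist \setminus S_C$ --- has a genuine gap. You attempt to deliver \emph{all} edges of such a $v$ whose other endpoint lies in $V \setminus \vlist$ into the cluster, but $v$ can have $\deg_{V \setminus \vlist}(v) = \Theta(n)$ such edges while having as few as $O(1)$ edges touching the cluster (indeed, $\comdeg v < \deg_{V \setminus \vlist}(v)/n^{1-2/p}$ is precisely the defining inequality of $S_C^*$). Your ``disperse one per incident edge of $G$, then relay to a nearby good $\vlist$-vertex'' step does not work: the recipients are arbitrary neighbours of $v$ in $G$, and nothing guarantees they have any edge into $\vlist$ for this cluster $C$, so there may be no such nearby vertex to relay to. Your subsequent accounting (``each handing it $O(n^{2/p})$ edges, so $u$ accumulates $O(n)$'') silently drops all of these dispersed edges and counts only the edges of $v$ into $\vlist$ --- but those are $\ebar$ edges, already known to their $\vlist$ endpoint, and contribute nothing to $\ep$.

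The paper avoids this by reversing the direction: rather than having $v \in S_C^*$ push its edges, each good vertex $u \in \vlist \setminus S_C$ \emph{pulls} only what it needs. Since $\deg_{S_C^*}(u) \le n^{1-2/p}$, the set $N(u) \cap S_C^*$ is small, and $u$ learns all of $E(N(u) \cap S_C^*, N(u) \cap S_C^*)$ in $O(n^{1-2/p})$ rounds via the exhaustive-search procedure of Lemma~\ref{thm:exhaustive-search}. Crucially this is \emph{not} all edges incident to each $S_C^*$-neighbour of $u$, only those between pairs of such neighbours --- but that suffices: any $K_p$ with an edge in $E(\vlist, \vlist \setminus S_C)$ contains some good vertex $u$, every outside-cluster vertex of that clique is then a neighbour of $u$, and so every $\ep$-edge of the clique with both endpoints in $S_C^*$ lies inside $E(N(u) \cap S_C^*, N(u) \cap S_C^*)$. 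Edges of $\ep$ with at least one endpoint in $(V \setminus \vlist) \setminus S_C^*$ are covered by your first case, which matches the paper.
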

\begin{proof}
    We observe that, for $v \in \vlist \setminus S_C$, we have
    $\deg_{S_C^*}(v) \leq n^{1 - 2/p}$, so that $v$ can learn all
    edges $E(N(v) \cap S_C^*, N(v) \cap S_C^*)$ in $O(n^{1 - 2/p})$
    rounds again similarly to the exhaustive-search approach of
    Lemma~\ref{thm:exhaustive-search}.

    Also, for each $u \in (V \setminus \vlist) \setminus S_C^*$, we
    have $\deg_C(u) \geq \deg_{V \setminus \vlist}(u)/n^{1 - 2/p}$, so
    $u$ can make each of its edges known to some vertex $v \in \vlist$
    by partitioning its neighborhood into disjoint sets of at most
    $n^{1 - 2/p}$ edges and sending each such set to a different
    $v \in \vlist$. This takes $O(n^{1 - 2/p})$ rounds and can
    be done in parallel for all such $u$ and $C$.

    During this process, each $u \in \vlist$ learns
    $\tilde{O}(n^{1 - 2/p})$ edges in $\ep$ along each of its edges,
    so that the total number of such edges known to any vertex $u$ is
    $
        \tilde{O}(n^{1 - 2/p} \cdot \deg(u))
        = \tilde{O}(n^{1 - 2/p} \cdot \comdeg u)
    $.
\end{proof}

After the process of Lemma~\ref{lem:send-e-prime}, for all
$v \in \vlist \setminus S_C$, each edge between its neighbors is known
to some vertex in $\vlist$. We now wish to use the listing algorithm
of Lemma~\ref{lem:congest-expander-find-clique} to list all instances
of $K_p$ with an edge in $E(\vlist \setminus S_C, \vlist)$. We prove
that most edges are in a cluster that satisfies the conditions of a
$K_p$-compatible cluster, so the remaining clusters can be deferred to
later rounds.

Let $\gamma > 1$ be a constant that we fix later. If
$|E(\vlist, \vcluster)|/|\vlist| \leq |\ep|/(\gamma n)$, then we say
that $C$ is an \emph{overloaded cluster}. Let $E\suhi$ be the set of
all edges in overloaded clusters (i.e., it is the union of all $E_C$
over all overloaded $C$). The following lemma shows that $E\suhi$ can
be deferred to subsequent rounds.

\begin{lemma}%
    \label{lem:few-low-average}
    $|E\suhi| \leq 2|E|/\gamma$
\end{lemma}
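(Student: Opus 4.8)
The statement bounds the total number of edges in overloaded clusters. A cluster $C$ is overloaded when $|E(\vlist,\vcluster)|/|\vlist| \le |\ep|/(\gamma n)$, i.e.\ when its average communication degree $\mu_C = |E(\vlist_C, \vcluster_C)|/|\vlist_C|$ is small relative to $|\ep_C|/n$. The strategy is to charge the edges of each overloaded cluster to the ``outside'' edges $\ep_C$ that it is burdened with, and observe that these charges sum to at most $|E|$ because each edge of $G$ is counted as an outside edge in only $O(1)$ clusters.

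\textbf{Key steps.} First I would make explicit what $\ep_C$ is for each cluster $C$: by Lemma~\ref{lem:send-e-prime} and the construction preceding it, $\ep_C \subseteq E(V \setminus \vlist_C, V \setminus \vlist_C)$ consists (up to a factor of two from taking both directed copies) of edges incident to vertices outside $\vlist_C$ whose two endpoints are both outside $\vlist_C$. Crucially, each \emph{undirected} edge $e = (u,w)$ of $G$ can appear in $\ep_C$ only for clusters $C$ with both $u, w \notin \vlist_C$; since the sets $\vcluster_{C_i}$ are pairwise disjoint and each $E_i^+$ overlaps $G$ with $O(1)$ multiplicity (each edge is in at most two subgraphs $G[E_i^+]$, as noted in the reduction), a given edge lies in only $O(1)$ of the $\ep_C$. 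Hence $\sum_C |\ep_C| = O(|E|)$, with a small absolute constant factor (say at most some $c_0 |E|$, absorbing the directed-copy doubling and the $O(1)$ overlap).

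Second, for each overloaded cluster $C$, the defining inequality gives $|E(\vlist_C, \vcluster_C)| \le |\vlist_C| \cdot |\ep_C| / (\gamma n) \le |\ep_C| / \gamma$, using $|\vlist_C| \le |\vcluster_C| \le n$. Now I need to relate $|E_C| = |E_i^+|$ to $|E(\vlist_C, \vcluster_C)|$. By the construction, the only vertices of $\vcluster_C$ retained with positive communication degree that matter are those in $\vlist_C$ (all other vertices of $\vinner_C$ have $\comdeg{\cdot} < \beta n^{1-2/p}$ and have been handled by Lemma~\ref{lem:list-low-degree}); more simply, every edge of $E_C$ is incident to $V_i$, and after removing the low-degree-handled vertices, the edges of $E_C$ that remain relevant for listing within $C$ are exactly those touching $\vlist_C$, so $|E_C| \le 2|E(\vlist_C, \vcluster_C)|$ (the factor two accounting for edges with both endpoints in $\vlist_C$ being counted twice when summing communication degrees). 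I would state this bound carefully, possibly folding the constant into $c_0$.

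Third, combine: summing over all overloaded clusters,
\[
    |E\suhi| = \sum_{C \text{ overloaded}} |E_C|
    \le 2 \sum_{C \text{ overloaded}} |E(\vlist_C, \vcluster_C)|
    \le \frac{2}{\gamma} \sum_{C \text{ overloaded}} |\ep_C|
    \le \frac{2 c_0}{\gamma} |E|.
\]
Choosing $\gamma$ to be a sufficiently large constant (at least $c_0$, so that the bound is $\le 2|E|/\gamma$ after re-absorbing $c_0$ into the choice of $\gamma$ — or more honestly, re-deriving with the stated constant in mind) gives $|E\suhi| \le 2|E|/\gamma$ as claimed. The cleanest presentation fixes the bookkeeping constants up front so that the final inequality lands exactly on $2|E|/\gamma$.

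\textbf{Main obstacle.} The delicate point is the $O(1)$-overlap bookkeeping: pinning down exactly how many times a single edge of $G$ is counted across the various $\ep_C$, given that $\ep_C$ was defined through the two-step process of Lemma~\ref{lem:send-e-prime} (learning neighbor-pairs by exhaustive search for $v \in \vlist_C \setminus S_C$, and partitioned forwarding for $u \notin S_C^*$). I expect the honest constant to be a small fixed integer, but making sure the directed-copy doubling, the ``at most two $G[E_i^+]$'' overlap, and the double-counting in $\sum \comdeg{\cdot}$ are all tallied consistently — so that the stated constant in $|E\suhi| \le 2|E|/\gamma$ is genuinely achieved for the $\gamma$ that the surrounding argument later fixes — is where the care is needed. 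Everything else is a one-line application of the overloaded-cluster definition.
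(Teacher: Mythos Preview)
Your step 4 contains a genuine error: the claim that each undirected edge $e = (u,w)$ lies in $E'_C$ for only $O(1)$ clusters $C$ is false, and the reasoning is inverted. Disjointness of the cluster vertex sets means $u$ belongs to $V_C^-$ for at most one $C$ (likewise $w$), so the condition ``both $u,w \notin V_C^-$'' holds for \emph{all but at most two} clusters --- the edge is eligible to appear in $E'_C$ for essentially every cluster, not just $O(1)$ of them. The $O(1)$-overlap fact you cite (each edge lies in at most two $G[E_i^+]$) concerns the cluster edge sets $E_C$, not the imported sets $E'_C$, which are drawn from \emph{outside} the cluster. Concretely, in Lemma~\ref{lem:send-e-prime} a vertex $u \in (V \setminus V_C^-) \setminus S_C^*$ ships up to $n^{1-2/p}\deg_C(u)$ edges into $C$; summing over $C$ already incurs an $n^{1-2/p}$ blow-up, so $\sum_C |E'_C|$ can be far larger than $O(|E|)$ and your final inequality collapses.

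The paper's argument sidesteps this entirely: it never sums the $|E'_C|$. Instead it uses the trivial per-cluster bound $|E'_C| \le 2|E|$ (since $E'_C$ is a set of directed copies of edges of $G$), so the overloaded condition directly gives $|E(V_C^-, V_C)| \le |V_C^-|\cdot 2|E|/(\gamma n) = |V_C^-|\cdot \mu'/\gamma$ with $\mu' = 2|E|/n$ the average degree of $G$. Now one sums the \emph{disjoint} quantities $|V_C^-|$ (this is where cluster disjointness is genuinely useful), obtaining $\sum_{C\text{ overloaded}} |E(V_C^-, V_C)| \le n\mu'/\gamma = 2|E|/\gamma$. Your step 2 also needs care --- $E_C$ can contain edges with both endpoints outside $V_C^-$ --- but for the recursion only edges in $E(V_C^-, V_C^-)$ are deferred, and those are dominated by $|E(V_C^-, V_C)|$.
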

\begin{proof}
    Let $\mu'$ denote the average degree of the graph. Since
    $|E\suhi| \leq |E|$, by considering all overloaded clusters, we
    consider vertices with average communication degree inside their
    cluster $|E(\vlist, \vcluster)|/|\vlist|$ bounded above by
    $\mu'/\gamma$, and therefore the number of edges within such
    clusters is at most $\mu' n/\gamma$. Since the total number of
    edges in the graph is $\mu' n/2$, this gives that
    $|E\suhi| \leq 2|E|/\gamma$ as desired.
\end{proof}

The last piece of input needed for the clusters $C$ to be
$K_p$-compatible clusters is that for each $u \in V$, a unique
$v \in \vlist$ holds $\sentdeg u$, the number of edges sent into $C$
by $u$. The general idea is to serially iterate the clusters and have
each $v \in V$ tell each of its neighbors whether it sent the edge
between them to each cluster $C$. As we only consider sufficiently
large clusters, the number of such clusters is small.

\begin{lemma}%
\label{lem:kp-import-sentdeg}
    There exists a deterministic \congest algorithm on $G$ that
    guarantees, in $O(n^{1 - 2/p})$ rounds, for all clusters $C$ with
    $|\vcluster| > \beta n^{1 - 2/p}$, for each $u \in V$ that is a
    tail of at least one $e \in \ebar \cup \ep$ sent into $C$, exactly
    one $v \in \vlist$ holds $\sentdeg u$.
\end{lemma}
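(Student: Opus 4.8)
The plan is to exploit that the clusters produced by the decomposition are vertex-disjoint: there are at most $q = O(n^{2/p})$ clusters $C$ with $|\vcluster| > \beta n^{1-2/p}$, and every vertex lies in $\vcluster$ for at most one such $C$ and in $\vlist$ for at most one such $C$. I would run two stages. In the first, each vertex $u$ that is a tail of an edge of $\ebar_C \cup \ep_C$ learns the exact value of $\sentdeg u$ for every relevant $C$; in the second, $u$ forwards each such value to a canonically chosen vertex of the corresponding $\vlist$, so that exactly one vertex of $\vlist$ holds it.

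For the first stage, the observation is that every edge of $\ebar_C \cup \ep_C$ was handled during the input delivery of Lemma~\ref{lem:send-e-prime} (together with the trivial knowledge each $\vlist$-vertex has of its own incident edges) by one of a few mechanisms, and in each case the tail can be told a single message ``this edge was sent into $C$''. (1) An $\ebar_C$-edge $(u,w)$ has $w \in \vlist$, so $w$ notifies $u$ along the edge $\{u,w\}$, one message per incident edge. (2) An $\ep_C$-edge delivered directly by a vertex in $(V \setminus \vlist) \setminus S_C^*$: the deliverer knows both directed copies and notifies the other endpoint along their shared edge, again one message per incident edge. (3) An $\ep_C$-edge $(u,b)$ discovered by the exhaustive search of some $z \in \vlist \setminus S_C$, so that $u,b \in N(z) \cap S_C^*$: here $z$ notifies both $u$ and $b$ along the communication edges $\{z,u\},\{z,b\} \in E_C$. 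Mechanisms (1) and (2) cost $O(1)$ rounds even aggregated over all clusters, since each vertex then sends and receives at most one message per incident edge. For mechanism (3), I would note that such a tail $u$ lies in $S_C^* \subseteq \vcluster$, hence in a single cluster, and that $z$ sends $u$ at most $\deg_{S_C^*}(z) \le n^{1-2/p}$ messages, all along the single edge $\{z,u\}$, so each communication edge of $u$ carries $O(n^{1-2/p})$ such messages and the stage finishes in $O(n^{1-2/p})$ rounds. Since the mechanisms partition the directed edges of $\ebar_C \cup \ep_C$ by their tail (case (2) needs the tail outside $S_C^*$, case (3) needs it inside), $u$ can now determine for each incident edge and each relevant $C$ whether that edge lies in $\ebar_C \cup \ep_C$ with $u$ as tail, and so computes $\sentdeg u$ without double counting.

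For the second stage, $u$ picks for each relevant $C$ a canonical $v_{u,C} \in \vlist$ — say the minimum-identifier vertex among the $\vlist$-neighbors it became aware of above (an $\ebar_C$-neighbor, a recipient of a delivered $\ep_C$-edge, or a discoverer $z$), which in every case is a neighbor of $u$ — and sends $\sentdeg u$ to it along $\{u,v_{u,C}\}$. Because distinct clusters have disjoint vertex sets, the $v_{u,C}$ are pairwise distinct over $C$, so $u$ sends at most one message per incident edge; symmetrically, a vertex lies in $\vlist$ of only its own cluster, so it receives at most one count per incident edge. Hence this stage costs $O(1)$ rounds and exactly one vertex of each relevant $\vlist$ ends up holding $\sentdeg u$. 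Serially iterating over the $q = O(n^{2/p})$ large clusters, as sketched before the lemma statement, is an equivalent alternative, since $n^{2/p} \le n^{1-2/p}$ for $p \ge 4$.

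The step I expect to be the main obstacle is mechanism (3): informing the tails of the $\ep_C$-edges discovered by exhaustive search, since a single edge of a tail $u$ may be found by many cluster vertices and a careless notification scheme would flood $u$. The resolution is the structural fact that every such tail lies inside its unique cluster and, belonging to $S_C^*$, has $\comdeg u < \deg_{V \setminus \vlist}(u)/n^{1-2/p} \le n^{2/p}$ cluster-neighbors, so the notifications can be spread over $u$'s communication edges within the $O(n^{1-2/p})$-round budget, and repeated notifications of the same edge are harmless because $u$ needs only one message per edge.
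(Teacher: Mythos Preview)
Your first stage takes a more elaborate, case-based route than the paper's. The paper simply has each vertex send, along every incident edge, the list of representatives $r_C$ of clusters to which it sent that edge; since there are at most $n^{2/p}$ large clusters this costs $O(n^{2/p}) \le O(n^{1-2/p})$ rounds, after which every tail can compute $\sentdeg u$ for each relevant $C$. This is the ``serial iteration'' alternative you mention at the end, and the paper uses it as the primary argument, sidestepping your mechanism analysis entirely. (As an aside, your mechanism~(3) is unnecessary: in the exhaustive-search step each endpoint of a discovered edge itself sent that edge to the discoverer $z \in \vlist$, so it already knows the edge went to $C$.)

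Your second stage has a genuine gap. A tail $u$ with $(u,b) \in \ep$ need not have any neighbor in $\vlist$: this occurs when every $\ep$-edge with tail $u$ was delivered by its \emph{other} endpoint via direct delivery (your mechanism~(2)) and $u$ has no $\ebar$-edges. In that case none of your candidates for $v_{u,C}$ is a neighbor of $u$, and single-hop forwarding fails. The paper handles this by two-hop forwarding: $u$ picks the minimum-id $w \in N(u) \cup \{u\}$ that sent some edge incident to $u$ into $C$; if $w \ne u$ then $u$ hands $\sentdeg u$ to $w$, and $w$ relays it to the vertex in $\vlist$ to which $w$ originally sent the edge $\{w,u\}$. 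The cost is at most two extra messages per edge previously sent into $C$, hence still $O(n^{1-2/p})$ rounds.
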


\begin{proof}
    We begin by ensuring that each cluster $C$ with
    $|V_C| > \beta n^{1 - 2/p}$ has a unique identifier that can be
    encoded in $O(1)$ messages and is known to all vertices in that
    cluster. 
    Due to Theorem~\ref{thm:diameter-conductance}, we can
    construct a spanning tree of $C$ in $\tilde{O}(1)$ rounds and use
    it to compute and distribute the minimum id of a vertex in
    $\vlist$; call this id the \emph{representative} $r_C$ of $C$.

    Now, each $v \in V$ sends over each edge $e$ incident to $v$,
    $e = (v, u)$, the representative $r_C$ of each cluster $C$ to
    which it sent that edge. If $v \in \vlist$ for some $C$ and knows
    an edge $(v, u) \in \ebar$, it also sends $r_C$ to $u$.
    Since we only consider $C$ with $|\vcluster| > \beta n^{1 - 2/p}$,
    there are fewer than $n^{2/p} \le n^{1/2}$ such clusters, so that
    all such representatives can be sent in $O(n^{1/2})$ rounds. Now
    each $v \in V$ knows for a given $r_C$ and a given $e = (v, u)$
    whether $e$ was sent to $C$ as part of $\ebar \cup \ep$: in
    particular, $e$ was sent to $C$ iff $v$ itself sent $e$ to $C$, or
    if $u$ sent $e$ to $C$ and then transmitted $r_C$ to $v$
    over $e$. Therefore, each $v \in \vlist$ can compute $\sentdeg v$.

    To guarantee that each $C$ receives only one copy of $\sentdeg v$,
    $v$ decides which vertex $u \in V$ will send $\sentdeg v$ to $v$.
    Let $V_{send}$ be the set of vertices $u \in N(v) \cup \{v\}$ that
    sent an edge $e$ incident to $v$ to $C$, and let $w$ be the vertex
    in $V_{send}$ with minimum id. Then if $w = v$, $v$ has sent one
    of its edges to some $u \in \vlist$ already and can send
    $\sentdeg v$ to that same $u$. Otherwise, $v$ sends $\sentdeg u$
    to $w$ and $w$ sends $\sentdeg v$ to the same $u \in \vlist$ to
    which it sent the edge $(w, v)$.

    Each $v$ sends and receives at most two messages for each
    edge it sent to some $C$, thus this takes
    $O(n^{1 - 2/p})$ rounds by Lemma~\ref{lem:send-e-prime}.
\end{proof}

We now combine all the above, as well as choose values for $\epsilon$,
$\beta$, and $\gamma$, to prove Lemma~\ref{lem:kp-wrapper}.

\begin{proof}[Proof of Lemma~\ref{lem:kp-wrapper}]
    After running the expander decomposition of
    Theorem~\ref{thm:cs-decomposition}, invoke
    \Cref{lem:remaining-edges-small,lem:list-low-degree,%
    lem:sc2-is-small,lem:send-e-prime,lem:few-low-average}, which
    guarantee that, if we list all instances of $K_p$ with an edge in
    $E(\vlist, \vlist \setminus S_C)$ for all not overloaded $C$
    satisfying $|\vcluster| > \beta n^{1 - 2/p}$ in $n^{1 - 2/p \rc}$
    rounds, we arrive at a recursive algorithm that lists all $K_p$,
    in $n^{1 - 2/p \rc}$ rounds per invocation, and removes at least
    $(1 - 3\epsilon - 4/\beta - 2/\gamma)|E|$ edges from $G$ per each
    invocation.

    We show that clusters $C$ of size
    $|\vcluster| > \beta n^{1 - 2/p}$ that are not overloaded satisfy
    the conditions of a $K_p$-compatible cluster, so we can list all
    all $K_p$ cliques with an edge in
    $E(\vlist, \vlist \setminus S_C)$ for such $C$ in
    $n^{1 - 2/p \rc}$ rounds. Since the clusters $C$ are
    edge-disjoint and the sparsity-aware listing algorithm only uses
    the edges $E_C$ for communication, we can do so in parallel for
    all $C$. We let $\ebar = E(\vlist, V \setminus \vlist)$ and
    $\ep$ be the set of edges sent as input in
    Lemma~\ref{lem:send-e-prime}. By the definition of $\vlist$,
    all $v \in \vlist$ satisfy $\deg_C(v) \geq \frac12\deg(v)$ and
    $\deg_C(v) > \beta n^{1 - 2/p}$, thus we have that $\ebar$
    satisfies $|E(\vlist, V) \cap \ebar| \in O(\comdeg v)$ and
    the minimum degree is at least
    $\beta n^{1 - 2/p} \ge K^{1 - 2/p}$. By
    Lemma~\ref{lem:send-e-prime}, each $v \in \vlist$ learns
    $O(n^{1 - 2/p})$ edges in $\ep$. Also, exactly one $v \in \vlist$
    learns $\sentdeg u$ for each required $u \in V$ by
    Lemma~\ref{lem:kp-import-sentdeg}. Finally, since $C$ is not
    overloaded, the condition $\mu \in \Omega(|\ep|/n)$ holds.

    The depth of the recursion is logarithmic in $n$, as letting
    $\epsilon = 1/18$, $\beta = 24$, and $\gamma = 12$ gives that each
    iteration of the algorithm eliminates at least half of the edges
    from $G$.
\end{proof}

\subsection{\texorpdfstring{$
    \boldsymbol{K_p}
$}{K\_p} Listing for \texorpdfstring{$
    \boldsymbol{p = 4}
$}{p \textgreater\ 4}}%
\label{sec:wrapper-clique-4}

The listing algorithm for $p > 4$ in
Section~\ref{sec:wrapper-clique-5} cannot be applied for $p = 4$ as
the bound in Lemma~\ref{lem:sc2-is-small} requires $p \geq 5$. As a
result, we can no longer defer all edges between bad vertices to
future rounds and must instead list cliques containing these edges.

We begin as in Section~\ref{sec:wrapper-clique-5} by fixing constants
$\beta > 1$ and $\epsilon$ and producing communication clusters
from an expander decomposition of
$G$. We define $\vinnerof{C_i}$ and $E_i^-$ as in
Section~\ref{sec:preliminaries} and let
$E_i^+ = E_i \cup E(\vinnerof{C_i}, \vinnerof{C_i})$. We again let
$C_i = G[E_i^+]$, so that there is some set of clusters
$\mathbb{C}$ such that $C \in \mathbb{C}$ is a cluster with vertices
$\vcluster$, edges $E_C$, and subsets of vertices $\vinner$
and $\vlist$ satisfying $\deg_C(v) \geq \deg_G(v)$ for all
$v \in \vinner$ and $\deg_C(v) \geq \beta n^{1/2}$ for all
$v \in \vlist$. We later fix $\epsilon$ and $\beta$ so that we
can focus on listing only those cliques with an edge in
$E(\vinner, \vinner)$ for some $C$.

We now cover the entire graph $G$ with expanders in the following
manner. Let
$E_{rem} = E \setminus \bigcup_{C \in \mathbb{C}} E(\vinner, \vinner)$
be the set of edges in $G$ that are not within the sets of edges whose
cliques we seek to list. Recursively perform the expander
decomposition on the graph induced by $E_{rem}$, and let the set of
all clusters in all expander decompositions created in this way be
$\mathbb{C}^*$. We now show that we can use the expander rounding
algorithm of Theorem~\ref{thm:cs-routing} in parallel over all
clusters $C \in \mathbb{C}^*$ with only a logarithmic factor in
overhead.

\begin{lemma}%
\label{lem:log-recursive-overhead}
    If $\epsilon \leq 1/4$, then each edge $e \in E$ is in $E_C$ for
    at most $O(\log n)$ clusters $C \in \mathbb{C}^*$, and each vertex
    $v \in V$ is in $\vlist$ for at most $O(\log n)$ clusters
    $C \in \mathbb{C}^*$.
\end{lemma}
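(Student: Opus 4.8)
The plan is to factor each bound into a ``recursion depth'' and a ``per-level multiplicity''. Write $E^{(0)} = E$ and, for $\ell \ge 1$, let $E^{(\ell)}$ be the residual set $E_{rem}$ produced after running the expander decomposition on $G[E^{(\ell-1)}]$ and deleting $\bigcup_i E(\vinnerof{C_i},\vinnerof{C_i})$; the clusters added to $\mathbb{C}^*$ at level $\ell$ are exactly those produced from $G[E^{(\ell)}]$. I will show (i) the recursion terminates after $O(\log n)$ levels, and (ii) at every fixed level each edge lies in $E_C$ for $O(1)$ clusters $C$ and each vertex lies in $\vlistof{C}$ for $O(1)$ clusters $C$. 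Multiplying the two gives the lemma, since a fixed edge (resp.\ vertex) can participate in the level-$\ell$ decomposition only while it still has an edge in $E^{(\ell)}$, i.e.\ for at most $O(\log n)$ levels.

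For (i): running the decomposition on $G[E^{(\ell)}]$ partitions $E^{(\ell)} = E_1 \cup \dots \cup E_x \cup E\rem$ with $|E\rem| \le \epsilon|E^{(\ell)}|$, and for each $i$, $E_i^- = E_i \cap E(\vinnerof{C_i},\vinnerof{C_i}) \subseteq E(\vinnerof{C_i},\vinnerof{C_i})$. Hence every edge of $E^{(\ell)}$ that is not deleted at this level, i.e.\ every edge of $E^{(\ell+1)}$, lies in $E\rem \cup \bigcup_i (E_i \setminus E_i^-)$, and by Lemma~\ref{lem:remaining-edges-small} the set $\bigcup_i (E_i \setminus E_i^-)$ has size at most $2\epsilon|E^{(\ell)}|$. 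Therefore $|E^{(\ell+1)}| \le 3\epsilon\,|E^{(\ell)}| \le \tfrac34|E^{(\ell)}|$ using $\epsilon \le 1/4$, so $|E^{(\ell)}| \le (3/4)^\ell |E| \le (3/4)^\ell n^2 < 1$ — hence $E^{(\ell)} = \emptyset$ — once $\ell > 2\log_{4/3} n$. Thus there are $O(\log n)$ levels.

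For (ii): fix a level. Each cluster edge set has the form $E_{C_i} = E_i \cup E(\vinnerof{C_i},\vinnerof{C_i})$, and the sets $E_1,\dots,E_x$ are pairwise disjoint. If $v \in \vinnerof{C_i}$ then $\deg_{E_i}(v) \ge \deg_{E^{(\ell)}\setminus E_i}(v)$, i.e.\ $\deg_{E_i}(v) \ge \tfrac12\deg_{E^{(\ell)}}(v)$; since $\sum_i \deg_{E_i}(v) \le \deg_{E^{(\ell)}}(v)$, at most two indices $i$ satisfy this, so $v$ lies in $\vinnerof{C_i}$, and a fortiori in $\vlistof{C_i} \subseteq \vinnerof{C_i}$, for at most two clusters of this level. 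Similarly, $e = (u,w) \in E_{C_i}$ forces $e \in E_i$ (at most one such $i$) or $u,w \in \vinnerof{C_i}$ (at most two such $i$ by the previous sentence), so $e \in E_C$ for $O(1)$ clusters of this level; this is exactly the ``each edge lies in at most two subgraphs $G[E_i^+]$'' property of \cite{CS20} recalled in Section~\ref{sec:triangles}. Combining (i) and (ii) yields the two $O(\log n)$ bounds. The only point that needs care is that the multiplicity estimates in (ii) must be taken with respect to the residual graph $G[E^{(\ell)}]$ active at that level rather than the original $G$; the one imported ingredient is the factor-$3\epsilon$ edge shrinkage in (i), which follows from Lemma~\ref{lem:remaining-edges-small} once one observes $E_i^- \subseteq E(\vinnerof{C_i},\vinnerof{C_i})$. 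I do not anticipate any genuine difficulty beyond this bookkeeping.
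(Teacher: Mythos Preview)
Your proof is correct and takes essentially the same approach as the paper: bound the recursion depth by $O(\log n)$ via Lemma~\ref{lem:remaining-edges-small}, bound the per-level multiplicity by $O(1)$, and multiply. The only difference is that your step (ii) works harder than necessary: because Definition~\ref{def:expander} guarantees the subgraphs $G[E_1],\dots,G[E_x]$ are \emph{vertex-disjoint}, each vertex lies in at most one $V_i$ (hence at most one $\vlistof{C_i}\subseteq\vinnerof{C_i}\subseteq V_i$) and each edge in at most one $E_{C_i}$ per level, so the paper simply cites this disjointness and your degree-counting argument, while valid, can be dropped.
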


\begin{proof}
    By
    Lemma~\ref{lem:remaining-edges-small}, if $\epsilon \le 1/4$,
    $|E_{rem}| < \frac12|E|$. Thus, the depth of the recursive
    application of expander decomposition to $G$ is logarithmic, and
    there are $O(\log n)$ distinct decompositions. In each
    decomposition, each edge is in at most one set $E_C$ and each
    vertex is in at most one set $\vlist$, completing the proof.
\end{proof}

As in the $p > 4$ case, we use exhaustive search via
Lemma~\ref{lem:list-low-degree} to list all cliques that contain a
vertex with degree at most $2\beta n^{1/2}$. In every
$C \in \mathbb{C}^*$, if $|\vcluster| < \beta n^{1/2}$, then the
exhaustive search lists all cliques containing some $v \in \vinner$,
therefore, from now on we only consider $C$ satisfying
$|\vcluster| > \beta n^{1/2}$. Since by
Lemma~\ref{lem:log-recursive-overhead} the depth of the recursive
decomposition is logarithmic, and each iteration of the decomposition
can produce at most $n^{1/2}/\beta$ clusters of size
$|\vlist| > \beta n^{1/2}$, the total number of such clusters is
$\tilde{O}(n^{1/2})$.

Then, note that all $K_4$ that lie entirely in some cluster
$C \in \mathbb{C}$ can be listed in $n^{1/2 \rc}$ rounds by
applying the assumed listing algorithm for $K_4$-compatible clusters
to $C$ with $\ep = \ebar = \emptyset$.

We now seek to list remaining $K_4$ with vertices
$\{v_1, v_2, v_3, v_4\}$ where
$\{v_1, v_2\} \in E(\vlist, \vlist)$ for some $C \in \mathbb{C}$,
and
$\{v_3, v_4\} \in E(\vlistof{C^*}, \vlistof{C^*})$ for some
$C^* \in \mathbb{C}^* \setminus \{C\}$. For each $K_4$ across some
pair $(C, C^*) \in \mathbb{C} \times \mathbb{C}^*$, we ensure
either $C$ or $C^*$ learns all of the edges in that $K_4$ so
that it can be listed using the assumed listing algorithm for
$K_4$-compatible clusters. For two distinct clusters $C$ and $C^*$, we
define 
\[
    S_{C^* \to C}^* = \{
        u \in \vlistof{C^*} :
        1 \leq \deg_C(u) < \deg_{\vlistof{C^*}}(u) / \sqrt{n}
    \}
\]
\[
    S_{C \to C^*} = \{
        u \in \vlist :
        \deg_{S_{C^* \to C}^*}(u) > \sqrt{n}
    \}
\]
which are analogs of $S_C^*$ and $S_C$ from
Section~\ref{sec:wrapper-clique-5}. The algorithm runs in three
parts, listing all $K_4$ such that (1): $v_3$ or $v_4$ is in
$\vlistof{C^*} \setminus S_{C^* \to C}^*$, (2): $v_1$ or $v_2$ is in
$\vlist \setminus S_{C \to C^*}$ and $v_3, v_4 \in S_{C^* \to C}^*$,
or (3) $v_1, v_2 \in S_{C \to C^*}$ and
$v_3, v_4 \in S_{C^* \to C}^*$.

\paragraph{\textbf{Part 1.}} For each $C^* \in \mathbb{C}^*$ and
$C \in \mathbb{C} \setminus C^*$, each vertex
$u \in \vlistof{C^*} \setminus S_{C^* \to C}^*$ that has
$\deg_{\vlist}(u) > 0$ sends its neighborhood $N_{\vlistof{C^*}}(u)$
to $C$. This takes $n^{1/2 \rc}$ rounds by the definition of
$S_{C^* \to C}^*$ and Lemma~\ref{lem:log-recursive-overhead}. Each
$C \in \mathbb{C}$ then proceeds similarly to the case of $p > 4$.
Denote by $\ep \subseteq E(V \setminus \vlist, V \setminus \vlist)$
the edges not incident to $\vlist$ that are distributed among $\vlist$
after the step above. $C$ computes $|\ep|$ and
$|E(\vlist, \vcluster)|/|\vlist|$ in $\operatorname{polylog}(n)$
rounds. We use the definition that $C$ is an \emph{overloaded cluster}
if $|E(\vlist, \vcluster)|/|\vlist| \leq |\ep|/(\gamma n)$ and defer
all such $C$ to future iterations. By Lemma~\ref{lem:few-low-average},
the number of edges in all such clusters is at most $2|E|/\gamma$. For
$C$ that are not overloaded, 
$\ebar = E(\vlist, V \setminus \vlist)$. Below, in
Lemma~\ref{lem:k4-import-sentdeg}, we show it is possible for each
$C \in \mathbb{C}^*$ to learn $\sentdeg u$ for all relevant $u$, at
which point $C$ satisfies the definition of a $K_4$-compatible cluster
and can list all $K_4$ in $G[E(\vlist, \vlist) \cup \ebar \cup \ep]$
with an edge in $E(\vlist, \vlist)$ in $n^{1/2 \rc}$ rounds.

\paragraph{\textbf{Part 2.}} For each $C \in \mathbb{C}$ and
$C^* \in \mathbb{C}^* \setminus \{C\}$, each vertex
$u \in \vlist \setminus S_{C \to C^*}$ learns all edges in
${N_{S^*_{C^* \to C}}(u)}^2$ from $C^*$. This takes $n^{1/2 \rc}$
rounds by the definition of $S_{C \to C^*}$ and
Lemma~\ref{lem:log-recursive-overhead}. $C$ then takes
$\ebar = E(\vlist, V \setminus \vlist)$ and
$\ep \subseteq E(V \setminus \vlist, V \setminus \vlist)$ the edges
not incident to $\vlist$ now distributed among $\vlist$.
Lemma~\ref{lem:k4-import-sentdeg} shows that it is possible for each
$C \in \mathbb{C}^*$ to learn $\sentdeg u$ for all relevant $u$, at
which point $C$ satisfies the definition of a $K_4$-compatible cluster
and can list all $K_4$ in $G[E(\vlist, \vlist) \cup \ebar \cup \ep]$
with an edge in $E(\vlist, \vlist)$ in $n^{1/2 \rc}$ rounds.

\paragraph{\textbf{Part 3.}} We now wish to list cliques
$\{v_1, v_2, v_3, v_4\}$ for which $v_1, v_2 \in S_{C \to C^*}$ and
$v_3, v_4 \in S_{C^* \to C}^*$ for some $C \in \mathbb{C}$ and
$C^* \in \mathbb{C}^* \setminus \{C\}$ as follows. Each vertex
$u \in S_{C \to C^*}$ sends the set of edges
$\{\{u, v\} : v \in N_{S_{C \to C^*}}(u)\}$ to $C^*$ by dividing it
into $\deg_{S_{C^* \to C}^*}(u)$ subsets of approximately equal size
and sending each subset to one of its neighbors in $S_{C^* \to C}^*$.
By the definition of $S_{C \to C^*}$ and
Lemma~\ref{lem:log-recursive-overhead}, this takes $n^{1/2 \rc}$
rounds. Each cluster $C^* \in \mathbb{C}$ then takes
$
    \ebar = \bigcup_{C \in \mathbb{C} \setminus \{C^*\}}
        E(S_{C \to C^*}, S_{C^* \to C}^*)
$
and
$
    \ep \subseteq E(
        V \setminus \vlistof{C^*},
        V \setminus \vlistof{C^*}
    )
$
the edges not incident to vertices in $\vlistof{C^*}$ now distributed
amongst the vertices in $C^*$. Lemma~\ref{lem:k4-import-sentdeg} shows
that it is possible for each $C^* \in \mathbb{C}^*$ to learn
$\sentdeg u$ for all relevant $u$, at which point $C^*$ satisfies the
definition of a $K_4$-compatible cluster and can list all $K_4$ in
$G[E(\vlistof{C^*}, \vlistof{C^*}) \cup \ebar \cup \ep]$ with an edge
in $E(\vlistof{C^*}, \vlistof{C^*})$ in $n^{1/2 \rc}$ rounds.

We now prove that the clusters tasked with listing in Parts~1--3 above
are in fact $K_4$-compatible clusters. The first lemma shows that each
$C \in \mathbb{C}^*$ can learn the necessary values of $\sentdeg u$
for the edge sets it receives. This is the analogue of
Lemma~\ref{lem:kp-import-sentdeg} from the algorithm for $p > 4$.

\begin{lemma}%
\label{lem:k4-import-sentdeg}
    There exists a deterministic \congest algorithm on $G$ that, in
    $n^{1/2 \rc}$ rounds, guarantees for all clusters
    $C \in \mathbb{C^*}$ with $|\vcluster| > \beta n^{1/2}$, and each
    $u \in V$ that is a tail of at least one $e \in \ebar \cup \ep$
    sent into $C$ in any of Parts~1,~2, or~3, that exactly one
    $v \in \vlist$ holds $\sentdeg u$.
\end{lemma}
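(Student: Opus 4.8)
The plan is to mirror the proof of Lemma~\ref{lem:kp-import-sentdeg}, adapting it to the new features of the $p=4$ setting: edges are injected into clusters in three different ways (Parts~1,~2, and~3), and the relevant clusters come from the recursive family $\mathbb{C}^*$, so a vertex can lie in several sets $\vlist$---but only $O(\log n)$ of them, by Lemma~\ref{lem:log-recursive-overhead}. As in Lemma~\ref{lem:kp-import-sentdeg}, the algorithm has three stages: (i) give every relevant cluster a short globally-agreed name; (ii) let each vertex learn which of its incident edges were injected into which clusters, so that it can determine its own input degrees; and (iii) use a deterministic minimum-identifier reporter rule to deliver exactly one copy of each $\sentdeg u$ into $\vlist$.

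For stage~(i), I would assign to every $C \in \mathbb{C}^*$ with $|\vcluster| > \beta n^{1/2}$ the representative $r_C$ equal to the minimum identifier in $\vlist$, and broadcast $r_C$ over $\vcluster$. Each such $C$ is an induced subgraph $G[E_i^+]$, hence has conductance at least $\tfrac12\phi$ as in Lemma~\ref{lem:e+4-high-conductance}, and therefore $n^{o(1)}$ diameter by Theorem~\ref{thm:diameter-conductance}; a BFS tree and the broadcast cost $n^{o(1)}$ rounds, and since each vertex lies in $O(\log n)$ clusters of $\mathbb{C}^*$ all broadcasts run simultaneously in $n^{o(1)}$ rounds.

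For stage~(ii), I would piggyback the reporting on the traffic of Parts~1--3: whenever a vertex sends a directed copy of an edge into a cluster $C$ along a communication edge of $C$, it also transmits $r_C$ along that edge, and in Part~3 (where $v$ forwards $\{v,u\}$ into $C$ along an edge to a vertex of $S_{C^* \to C}^*$ rather than along $\{v,u\}$) $v$ additionally sends $r_C$ to $u$ directly along $\{v,u\}$. Since the per-incident-edge traffic of each of Parts~1,~2,~3 is already $\tilde O(n^{1/2})$---exactly the bound that makes those parts run in $n^{1/2 \rc}$ rounds, via the $S$-set definitions and Lemma~\ref{lem:log-recursive-overhead}---this reporting adds only an $\tilde O(n^{1/2})$-per-edge overhead and costs $n^{1/2 \rc}$ rounds. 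From the collected reports, together with its own record of the edges it injected, each tail $u$ can (as in Lemma~\ref{lem:kp-import-sentdeg}) identify every incident edge lying in $\ebar \cup \ep$ of each relevant $C$, and hence compute $\sentdeg u$ with respect to $C$.

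Finally, stage~(iii) is the minimum-identifier reporter argument of Lemma~\ref{lem:kp-import-sentdeg}: for each relevant $C$ and each tail $u$, among the vertices of $\{u\}\cup N(u)$ that injected an edge incident to $u$ into $C$ let $w$ be the one of least identifier; then $w$, which knows the unique $v \in \vlist$ to which it delivered that edge into $C$, sends $\sentdeg u$ to $v$ (with $u$ first relaying $\sentdeg u$ to $w$ along $\{u,w\}$ when $w\ne u$), so that each $\sentdeg u$ reaches exactly one vertex of $\vlist$; each vertex sends and relays only $\tilde O(1)$ such messages per relevant cluster along its incident edges, so this stage also finishes in $n^{1/2 \rc}$ rounds. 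The step I expect to be the real obstacle is the bookkeeping in stage~(ii): one must verify that piggybacking the $r_C$-reports on the Part~1--3 traffic stays within the $\tilde O(n^{1/2})$-per-edge budget even though a single edge's content may be injected into many clusters (the reason it does is that the clusters at each recursion level have disjoint vertex sets, so along any fixed communication edge only $O(\log n)$ distinct clusters are ever addressed), and that the Part~3 detour together with the fact that an edge is flagged at both of its endpoints is handled so that $u$ reconstructs $\sentdeg u$ exactly and the reporter rule genuinely produces a singleton.
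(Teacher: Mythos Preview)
Your overall three-stage plan matches the paper's approach closely, but there is one concrete gap in stage~(i). You set $r_C$ to be the minimum identifier in $\vlist$. This works in Lemma~\ref{lem:kp-import-sentdeg} because the clusters there are vertex-disjoint, but for $\mathbb{C}^*$ they are not: the same vertex can belong to $\vlist$ for clusters at different levels of the recursive decomposition (you yourself note it can lie in $O(\log n)$ of them). Hence two distinct clusters $C_1, C_2 \in \mathbb{C}^*$ can share the same minimum-id vertex, giving $r_{C_1} = r_{C_2}$. Once representatives collide, stage~(ii) cannot disambiguate which cluster an edge was injected into, and the per-cluster values $\sentdeg u$ you compute in stage~(iii) are conflated across those clusters.

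The paper's proof singles out exactly this as the new complication and resolves it by letting $r_C$ be the pair $(v,i)$, where $v$ is the minimum id in $\vlist$ and $i$ is the recursion level at which $C$ was created; since clusters within a single level are vertex-disjoint, this pair is unique. With that fix in place, the rest of your proposal (piggybacking $r_C$ on the Parts~1--3 traffic, bounding it by the $\tilde O(n^{1/2})$ cluster count, and the minimum-identifier reporter rule) is essentially what the paper does by deferring to Lemma~\ref{lem:kp-import-sentdeg}.
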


\begin{proof}
    We roughly follow Lemma~\ref{lem:kp-import-sentdeg}, with the new
    complication being that clusters $\mathbb{C}^*$ are no longer
    vertex-disjoint, thus the minimum id of a vertex in $\vlist$ is no
    longer a unique identifier of $C$. Instead, we let the
    representative $r_C$ of $C$ be the pair $(v, i)$, where $v$ is the
    lowest-numbered vertex in $\vlist$ and $i$ is the iteration number
    of the recursive partitioning process at which $C$ was created. As
    within each invocation of the expander decomposition the clusters
    produced are vertex-disjoint, $r_C$ now uniquely identifies $C$.
    The number of clusters with $\vlist \neq \emptyset$ is still
    $\tilde{O}(n^{1/2})$, thus we can use the algorithm of
    Lemma~\ref{lem:kp-import-sentdeg}.
\end{proof}

The next lemmas show $\ebar$ and $\ep$
are sent to $C$ while satisfying the
$K_4$-compatible cluster conditions.

\begin{lemma}
    \label{lem:low-c*-neighborhood}
    For each $C^* \in \mathbb{C}^*$, each $v \in \vlistof{C^*}$ is
    incident to $O(\deg_{C^*}(v))$ edges from $\ebar$.
\end{lemma}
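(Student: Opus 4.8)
The plan is to show that, in each of Parts~1,~2, and~3, every edge of $\ebar$ that is incident to a vertex $v \in \vlistof{C^*}$ is just an edge of $G$ incident to $v$, so that $v$ is incident to $O(\deg_G(v))$ edges of $\ebar$, and then to convert this into the desired bound of $O(\deg_{C^*}(v))$ using the fact that $v \in \vinnerof{C^*}$.

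First I would recall that $\vlistof{C^*} \subseteq \vinnerof{C^*}$, and that by the definition of $\vinnerof{C^*}$ (as in Section~\ref{sec:preliminaries}, applied to whichever expander decomposition produced $C^*$) every vertex in $\vinnerof{C^*}$ has at least half of its incident edges inside $E_{C^*}$; hence $\deg_G(v) \le 2\deg_{C^*}(v)$ for all $v \in \vlistof{C^*}$. One small point to verify for clusters $C^* \in \mathbb{C}^* \setminus \mathbb{C}$ produced by a recursive decomposition is that this inequality is being compared against the same graph in which the edges sent into $C^*$ live; for the clusters $C \in \mathbb{C}$ appearing in Parts~1 and~2 this is immediate. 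Given this, it suffices to bound the number of $\ebar$-edges incident to $v$ by $O(\deg_G(v))$.

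Next I would go through the three cases. In Parts~1 and~2 the listing cluster is some $C \in \mathbb{C}$ and (taking $C^* = C$) $\ebar = E(\vlistof{C^*}, V \setminus \vlistof{C^*})$, so every $\ebar$-edge incident to $v$ is witnessed by a distinct neighbor of $v$ in $G$ (with $v$ the endpoint in $\vlistof{C^*}$), giving at most $\deg_G(v)$ of them, times at most $2$ for keeping both directed copies. In Part~3, $\ebar = \bigcup_{C \in \mathbb{C} \setminus \{C^*\}} E(S_{C \to C^*}, S_{C^* \to C}^*)$; since this is a union of \emph{sets} of edges, each element of $\ebar$ incident to $v$ is again just an edge of $G$ incident to $v$ (whether $v$ is the $S_{C^* \to C}^*$-endpoint or the $S_{C \to C^*}$-endpoint), so there are again $O(\deg_G(v))$ of them. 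Combining with $\deg_G(v) \le 2\deg_{C^*}(v)$ yields the lemma.

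I expect the only genuinely non-routine point to be Part~3: because $\ebar$ is assembled as a union over all other clusters $C$, one might fear that a vertex of $\vlistof{C^*}$ accumulates many distinct $\ebar$-edges along a single $G$-edge. This does not happen, precisely because the union is set-theoretic, so each $G$-edge at $v$ contributes to $\ebar$ at most once (up to the constant factor from directed copies); if one wanted a more explicit accounting of how many clusters $C$ can contribute, Lemma~\ref{lem:log-recursive-overhead} gives a crude $O(\log n)$ bound, but this is not needed for the count. Everything else amounts to bookkeeping over the definitions of $S_{C \to C^*}$ and $S_{C^* \to C}^*$.
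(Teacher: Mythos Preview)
Your argument has a genuine gap at exactly the ``small point to verify'' you flagged but did not close. For clusters $C^* \in \mathbb{C}^* \setminus \mathbb{C}$ produced at a deeper level of the recursive decomposition, the set $\vinnerof{C^*}$ is defined relative to the graph on which that level's decomposition was run, \emph{not} relative to $G$. So membership in $\vinnerof{C^*}$ only gives $\deg_{C^*}(v) \ge \tfrac12 \deg_{G_j}(v)$ for the $j$-th-level residual graph $G_j$, and in general $\deg_G(v)$ can be as large as $\Theta(n^{1/2}) \cdot \deg_{C^*}(v)$ for $v \in \vlistof{C^*}$. Meanwhile the edges you are counting in Part~3 are bona fide edges of $G$: the $\ebar$-edges incident to $v$ go to vertices in $S_{C \to C^*} \subseteq \vlistof{C}$ for first-level clusters $C \in \mathbb{C}$, and there is no reason these must lie in $G_j$. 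Hence your bound ``$\le \deg_G(v)$'' is correct, but the step ``$\deg_G(v) \le 2\deg_{C^*}(v)$'' fails, and your argument does not reach the target $O(\deg_{C^*}(v))$.

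The paper's proof avoids ever comparing $\deg_G(v)$ to $\deg_{C^*}(v)$. Instead it uses the very definition of $S_{C^* \to C}^*$: for each first-level $C$ with $v \in S_{C^* \to C}^*$ one has $\deg_C(v) < \deg_{\vlistof{C^*}}(v)/\sqrt{n} \le \deg_{C^*}(v)/\sqrt{n}$, which bounds the contribution of that single $C$. Summing over the at most $O(\sqrt{n})$ surviving clusters in $\mathbb{C}$ gives $O(\deg_{C^*}(v))$ directly. This is precisely the structure your set-theoretic-union argument throws away, and it is what makes the bound go through for deep $C^*$.
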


\begin{proof}
    For each $v \in \vlistof{C^*}$ and a cluster
    $C \in \mathbb{C} \setminus \{C^*\}$ such that
    $v \in S_{C^* \to C}^*$, we have that
    $\deg_C(v) < \deg_{\vlistof{C^*}}(v)/\sqrt{n}$. The total number
    of edges $\ebar$ incident to $v \in \vlistof{C^*}$ is at
    most
    \begin{align*}
        \sum_{
            C \in \mathbb{C} \setminus \{C^*\},
            v \in S^*_{C^* \to C}
        } \deg_{S_{C \to C^*}}(v)
        &\leq O(\sqrt{n}) \cdot 
            \max\{
                \deg_{S_{C \to C^*}}(v) :
                C \in \mathbb{C} \setminus \{C^*\},
                v \in S^*_{C^* \to C}
            \} \\
        &\leq O(\deg_{C^*}(v))
    \end{align*}
    For the first inequality, we observe that
    $|\mathbb{C}| \in O(\sqrt{n})$, since we have already eliminated
    by brute force all of the clusters $C \in \mathbb{C}$ with size
    $|\vcluster| < 2\beta n^{1/2}$. The second follows from the
    definition of $S_{C^* \to C}^*$.
\end{proof}

\begin{lemma}
    \label{lem:c*-edges-received}
    For each $C^* \in \mathbb{C}^*$, each $v \in \vlistof{C^*}$
    receives $O(\sqrt{n} \cdot \deg_{C^*}(v))$ edges at the first step
    of part 3.
\end{lemma}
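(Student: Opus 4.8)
The plan is to fix a cluster $C^* \in \mathbb{C}^*$ and a vertex $v \in \vlistof{C^*}$, identify exactly which edges $v$ can receive during the first step of Part~3, and then bound both the number of senders and the amount sent by each.

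First I would characterize the edges received by $v$. During the first step of Part~3, a vertex $u$ sends one of its $\deg_{S_{C^* \to C}^*}(u)$ subsets to $v$ precisely when there is a cluster $C \in \mathbb{C}\setminus\{C^*\}$ with $u \in S_{C \to C^*}$, $v \in S_{C^* \to C}^*$, and $\{u,v\}\in E$ (so that $v \in N_{S_{C^*\to C}^*}(u)$). Every such $\{u,v\}$ lies in $E(S_{C\to C^*},S_{C^*\to C}^*)$, hence is one of the $\ebar$-edges of Part~3 incident to $v$. Consequently, the number of edges $v$ receives is at most the number of $\ebar$-edges incident to $v$ times the maximum number of edges in any single subset that is sent to $v$.

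Next I would bound the size of each subset. When $u \in S_{C\to C^*}$ partitions its $\deg_{S_{C\to C^*}}(u)$ edges into $\deg_{S_{C^*\to C}^*}(u)$ groups of approximately equal size, each group has at most $\lceil \deg_{S_{C\to C^*}}(u)/\deg_{S_{C^*\to C}^*}(u)\rceil$ edges. The definition of $S_{C\to C^*}$ gives $\deg_{S_{C^*\to C}^*}(u) > \sqrt{n}$, and trivially $\deg_{S_{C\to C^*}}(u) \le |S_{C\to C^*}| \le |\vclusterof{C}| \le n$, so every subset sent to $v$ contains $O(\sqrt{n})$ edges. Combining this with \Cref{lem:low-c*-neighborhood}, which states that $v$ is incident to $O(\deg_{C^*}(v))$ edges of $\ebar$, the total number of edges received by $v$ is $O(\sqrt{n})\cdot O(\deg_{C^*}(v)) = O(\sqrt{n}\,\deg_{C^*}(v))$, as desired.

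I expect the only mild subtlety is the subset-size bound: one must check that splitting $d$ items into $t$ groups ``of approximately equal size'' always yields groups of size at most $\lceil d/t\rceil$, including the case $d<t$ where some groups are empty and the nonempty ones are singletons, so that the $O(\sqrt{n})$ bound is uniform over all senders $u$ regardless of the relative sizes of $\deg_{S_{C\to C^*}}(u)$ and $\deg_{S_{C^*\to C}^*}(u)$. Everything else is a direct combination of the counting above with \Cref{lem:low-c*-neighborhood}.
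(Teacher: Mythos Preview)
Your proposal is correct and follows essentially the same approach as the paper: bound each subset a sender $u$ transmits by $O(\sqrt n)$ (via $\deg_{S_{C^*\to C}^*}(u)>\sqrt n$ and $\deg_{S_{C\to C^*}}(u)\le n$), then bound the number of subsets $v$ receives by $\sum_{C:\,v\in S_{C^*\to C}^*}\deg_{S_{C\to C^*}}(v)$. The only cosmetic difference is that the paper re-derives this sum inline as $O(\sqrt n)\cdot\max_C\deg_{S_{C\to C^*}}(v)\le O(\deg_{C^*}(v))$, whereas you observe that this sum is exactly the number of $\ebar$-edges incident to $v$ and invoke \Cref{lem:low-c*-neighborhood} directly---a slightly cleaner packaging of the same argument.
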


\begin{proof}
    $v \in \vlistof{C^*}$ receives messages from
    $u \in S_{C \to C^*}$ only if $v \in S_{C^* \to C}^*$. Thus,
    $v$ receives $|N_{S_{C \to C^*}}(u)|/\sqrt{n} \leq \sqrt{n}$ edges
    in $E_C$ from $u \in \vlist$, so the total number of
    edges received is at most
    \begin{align*}
        \sqrt{n} \cdot \sum_{
            C \in \mathbb{C} \setminus \{C^*\},
            v \in S^*_{C^* \to C}
        } \deg_{S_{C \to C^*}}(v)
        &\leq \sqrt{n} \cdot \left(
            O(\sqrt{n}) \cdot
            \max\{
                \deg_{S_{C \to C^*}}(v) :
                C \in \mathbb{C} \setminus \{C^*\},
                v \in S^*_{C^* \to C}
            \}
        \right) \\
        &\leq O(\sqrt{n} \cdot \deg_{C^*}(v))
    \end{align*}
    again using the property $|\mathbb{C}| \in \tilde{O}(n^{1/2})$ and
    the definition of $S_{C^* \to C}^*$.
\end{proof}

Finally, we show that the total size of $|\ep|$ is not too large.

\begin{lemma}
    \label{lem:c*-not-overloaded}
    For each $C^* \in \mathbb{C}^*$, the average degree of $C^*$ is at
    least $\max_C\{|S_{C \to C^*}|\}$, implying
    $
        |E(\vlistof{C^*}, \vclusterof{C^*})|/|\vlistof{C^*}|
        = \Omega(|\ep|/n)
    $.
\end{lemma}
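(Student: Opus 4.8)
The plan is to first pin down the imported edge set $\ep$ produced by Part~3 and then prove the two assertions in turn: a lower bound on the average communication degree of $C^*$ in terms of $\max_C\{|S_{C\to C^*}|\}$, and the resulting non‑overloadedness of $C^*$. Write $\mu^* = \frac1{|\vlistof{C^*}|}\sum_{v\in\vlistof{C^*}}\deg_{C^*}(v)$ for the average communication degree of $C^*$; this differs from $|E(\vlistof{C^*},\vclusterof{C^*})|/|\vlistof{C^*}|$ by at most a factor of two, which the $\Omega(\cdot)$ in the statement absorbs. Recall that in Part~3 each $u\in S_{C\to C^*}$ forwards the edges $\{\{u,w\}:w\in N_{S_{C\to C^*}}(u)\}$ into $C^*$; hence the imported edges not incident to $\vlistof{C^*}$ satisfy $\ep\subseteq\bigcup_{C\in\mathbb{C}\setminus\{C^*\}}E(S_{C\to C^*},S_{C\to C^*})$, so $|\ep|\le\sum_C|S_{C\to C^*}|^2$.

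For the degree lower bound I would, for each fixed $C$, count the edges of $E(S_{C\to C^*},S_{C^*\to C}^*)$ in two ways. By definition of $S_{C\to C^*}$, every $u\in S_{C\to C^*}$ has more than $\sqrt n$ neighbours in $S_{C^*\to C}^*\subseteq\vlistof{C^*}$, so $|E(S_{C\to C^*},S_{C^*\to C}^*)|>|S_{C\to C^*}|\sqrt n$. On the other hand, each $v\in S_{C^*\to C}^*$ lies in $\vlistof{C^*}\subseteq\vinnerof{C^*}$, so by the majority‑degree property defining $\vinnerof{C^*}$ we have $\deg_G(v)\le 2\deg_{C^*}(v)$; summing $\deg_{S_{C\to C^*}}(v)\le\deg_G(v)$ over $v\in S_{C^*\to C}^*$ gives $|E(S_{C\to C^*},S_{C^*\to C}^*)|\le\sum_{v\in\vlistof{C^*}}\deg_G(v)\le 2|\vlistof{C^*}|\mu^*$. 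Combining the two bounds and then normalising — using that $|\vlistof{C^*}|$ and the number of clusters $C$ with $S_{C\to C^*}\ne\emptyset$ are both sufficiently small, the latter because $|\mathbb{C}|=O(\sqrt n)$ once all clusters of size below $\beta\sqrt n$ have been brute‑forced, together with the definition of $S_{C^*\to C}^*$ (which forces each such $v$ to have few edges inside $C$ but many inside $\vlistof{C^*}$) — yields $\mu^*\ge\max_C|S_{C\to C^*}|$.

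For the non‑overloadedness I use that the clusters of a single expander decomposition are vertex‑disjoint: since $S_{C\to C^*}\subseteq\vlistof C\subseteq\vclusterof C$, the sets $\{S_{C\to C^*}\}_{C\in\mathbb{C}}$ are pairwise disjoint, so $\sum_C|S_{C\to C^*}|\le n$. Plugging this together with the bound from the previous step into $|\ep|\le\sum_C|S_{C\to C^*}|^2\le\big(\max_C|S_{C\to C^*}|\big)\cdot\sum_C|S_{C\to C^*}|$ gives $|\ep|\le\mu^*\cdot n$; since $|E(\vlistof{C^*},\vclusterof{C^*})|/|\vlistof{C^*}|\ge\mu^*/2\ge|\ep|/(2n)$, the cluster $C^*$ is not overloaded, which is exactly the claimed conclusion $|E(\vlistof{C^*},\vclusterof{C^*})|/|\vlistof{C^*}|=\Omega(|\ep|/n)$.

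The main obstacle is the normalisation step in the second paragraph: the edges of $E(S_{C\to C^*},S_{C^*\to C}^*)$ being counted live partly outside $E_{C^*}$, so turning the raw count into the clean inequality $\mu^*\ge\max_C|S_{C\to C^*}|$ requires carefully combining the majority‑degree property of $\vinnerof{C^*}$ on the $C^*$‑side, the definition of $S_{C^*\to C}^*$, and the bound $|\mathbb{C}|=O(\sqrt n)$; getting all of these to fit together — in particular ruling out a large $\vlistof{C^*}$ weakening the bound — is the delicate part. A secondary technical point, handled exactly as in the proof of Lemma~\ref{lem:k4-import-sentdeg}, is that clusters in $\mathbb{C}^*$ are vertex‑disjoint only within a single decomposition level, so the disjointness argument of the third paragraph must be applied level by level across the $O(\log n)$ levels of the recursive decomposition.
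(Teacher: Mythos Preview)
Your proposal has a genuine gap in the second paragraph. Your two-way count of $E(S_{C\to C^*},S_{C^*\to C}^*)$ yields only
\[
  |S_{C\to C^*}|\sqrt n \;<\; |E(S_{C\to C^*},S_{C^*\to C}^*)| \;\le\; 2|\vlistof{C^*}|\,\mu^*,
\]
which gives $\mu^* > |S_{C\to C^*}|\sqrt n / (2|\vlistof{C^*}|)$. Since $|\vlistof{C^*}|$ can be as large as $n$, this is weaker than the target $\mu^* \ge |S_{C\to C^*}|$ by a factor of roughly $\sqrt n$. You correctly identify this as ``the delicate part'' and gesture at the definition of $S_{C^*\to C}^*$, but you never actually incorporate it into the count; the phrases ``using that $|\vlistof{C^*}|$ is sufficiently small'' and ``normalising'' do not correspond to any valid step, because $|\vlistof{C^*}|$ need not be small.

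The paper's fix is to use the definition of $S_{C^*\to C}^*$ as a \emph{second} multiplicative $\sqrt n$ amplification rather than as a qualitative remark. For each $u \in S_{C^*\to C}^*$ one has $\deg_{\vlistof{C^*}}(u) > \sqrt n \cdot \deg_C(u)$, and the edges from $S_{C\to C^*}$ to $u$ are among the $\deg_C(u)$ edges of $u$ into $V_C$. Summing over $u$ gives
\[
  \sum_{u \in S_{C^*\to C}^*} \deg_{\vlistof{C^*}}(u)
  \;>\; \sqrt n \sum_{u \in S_{C^*\to C}^*} \deg_{S_{C\to C^*}}(u)
  \;=\; \sqrt n \cdot |E(S_{C\to C^*}, S_{C^*\to C}^*)|
  \;>\; n\,|S_{C\to C^*}|.
\]
The left-hand side is at most $|\vlistof{C^*}|\,\mu^*$ (edges within $\vlistof{C^*}$ lie in $E_{C^*}$), and now dividing by $|\vlistof{C^*}| \le n$ immediately gives $\mu^* \ge |S_{C\to C^*}|$, with no delicacy at all. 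Your third paragraph (disjointness and the $\Omega(|\ep|/n)$ implication) is fine and in fact more explicit than the paper; and your ``secondary technical point'' about level-by-level disjointness is unnecessary here, since the relevant clusters $C$ range over $\mathbb{C}$, which is a single level and hence vertex-disjoint.
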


\begin{proof}
    From the definition of $S_{C \to C^*}$, each vertex
    $v \in S_{C \to C^*}$ is incident to at most $\sqrt{n}$ edges from
    $E(\vlist, S_{C^* \to C}^*)$. By the definition of
    $S_{C^* \to C}^*$, we can associate $\sqrt{n}$ edges in
    $E(S_{C^* \to C}^*, \vclusterof{C^*})$ for each edge in
    $v \in S_{C \to C^*}$. Therefore, the number of edges in
    $E(S_{C^* \to C}^*, V)$ is at least
    \[
        \sqrt{n} \cdot \sqrt{n} \cdot |S_{C \to C^*}|
        = n \cdot |S_{C \to C^*}|
    \]
    As this is true for all
    $C \in \mathbb{C} \setminus \{C^*\}$, the average
    degree in $C^*$ is at least $\max_C\{|S_{C \to C^*}|\}$.
\end{proof}

We can now complete the proof of optimal clique listing.

\begin{proof}[Proof of Lemma~\ref{lem:k4-wrapper}]
    By Lemmas~\ref{lem:k4-import-sentdeg},%
~\ref{lem:low-c*-neighborhood},%
~\ref{lem:c*-edges-received}, and~\ref{lem:c*-not-overloaded}, the
    clusters tasked with listing instances of $K_4$ in Parts~1--3 of
    the algorithm above satisfy the conditions of $K_4$-compatible
    clusters so that all $K_4$ containing an edge in
    $E(\vinner, \vinner)$ for a non-overloaded cluster
    $C \in \mathbb{C}$ can be listed in $n^{1/2 \rc}$
    rounds. Letting $\epsilon = 1/12$ and $\gamma = 4$, we have by
    Lemmas~\ref{lem:remaining-edges-small}
    and~\ref{lem:few-low-average} that at most $|E|/2$ edges are
    deferred to later rounds, so that the depth of recursion is
    logarithmic. Therefore, the overall runtime is
    $n^{1/2 \rc}$.
\end{proof}




\paragraph{Acknowledgements: This project was partially supported by
the European Union's Horizon 2020 Research and Innovation Programme
under grant agreement no. 755839.}

\bibliography{partition-tree-notes}

\end{document}